\title{Complexity Theory for Operators in Analysis}
\thanks{A short preliminary version of this work was presented at the 42nd ACM Symposium on Theory of Computing (STOC~2010).}
\author[Kawamura]{Akitoshi Kawamura}
\author[Cook]{Stephen Cook}
\address[Kawamura]{Department of Computer Science\\
         University of Tokyo}
\address[Cook]{Department of Computer Science\\
         University of Toronto}
\newtheorem{theorem}{Theorem}[section]
\newtheorem{lemma}[theorem]{Lemma}
\newtheorem{corollary}[theorem]{Corollary}
\theoremstyle{definition}
\newtheorem{definition}[theorem]{Definition}
\theoremstyle{remark}
\newtheorem{example}[theorem]{Example}
\newcommand{\classfont}[1]{\text{\sffamily \upshape #1}}
\newcommand{\classtwofont}[1]{\text{\bfseries \sffamily \upshape #1}}
\newcommand{\classP}{\classfont{P}}
\newcommand{\classPtwo}{\classtwofont{P}}
\newcommand{\classFP}{\classfont{FP}}
\newcommand{\classFPtwo}{\classtwofont{FP}}
\newcommand{\classNP}{\classfont{NP}}
\newcommand{\classNPtwo}{\classtwofont{NP}}
\newcommand{\classPSPACE}{\classfont{PSPACE}}
\newcommand{\classPSPACEtwo}{\classtwofont{PSPACE}}
\newcommand{\classFPSPACE}{\classfont{FPSPACE}}
\newcommand{\classFPSPACEtwo}{\classtwofont{FPSPACE}}
\newcommand{\redonem}{\leq _{\mathrm{mF}} ^1}
\newcommand{\redonep}{\leq _{\mathrm{m}} ^1}
\newcommand{\redoneT}{\leq _{\mathrm{T}} ^1}
\newcommand{\redtwom}{\leq _{\mathrm{mF}} ^2}
\newcommand{\redtwop}{\leq _{\mathrm{m}} ^2}
\newcommand{\redtwoT}{\leq _{\mathrm{W}} ^2}
\newcommand{\OpLipIVP}{\mathit{LipIVP}}
\newcommand{\OpCH}{\mathit{CH}}
\newcommand{\dist}{\mathit{dist}}
\newcommand{\OpApply}{\mathit{Apply}}
\newcommand{\id}{\mathrm{id}}
\newcommand{\lsem}{\llbracket}
\newcommand{\rsem}{\rrbracket}
\newcommand{\rhoreal}{\rho _\Rset}
\newcommand{\rhosd}{\rho _{\mathrm{sd}}}
\newcommand{\psiclosed}{\psi _{\circledcirc}}
\newcommand{\deltabox}{\delta _{\square}}
\newcommand{\deltalip}{\delta _{\square \mathrm L}}
\newcommand{\restrictcodom}[1]{\mathord| ^{#1}}
\newcommand{\classC}{\mathrm C}
\newcommand{\classCL}{\mathrm{CL}}
\newcommand{\lit}[1]{\texttt{\textup{#1}}}
\def\atmosttextstyle#1{\mathchoice{\textstyle#1}{\textstyle#1}{\scriptstyle#1}{\scriptscriptstyle#1}}
\def\notsobig#1{{
  \def\t@mp{#1}
  \xdef#1{\mathop{\atmosttextstyle{\t@mp}}}
}}
\newcommand{\Nset}{\mathbb N}
\newcommand{\Rset}{\mathbb R}
\newcommand{\Bset}{\mathbf{Reg}}
\newcommand{\Pset}{\mathbf{Pred}}
\newcommand{\Dset}{\mathbf D}
\DeclareMathOperator{\dom}{dom}
\newcommand{\tcolon}{\colon\:}
\newcommand{\tto}{\mathbin{\to}}
\begin{document}

\begin{abstract}
We propose an extension of the framework for 
discussing the computational complexity of 
problems involving uncountably many objects, 
such as real numbers, sets and functions, 
that can be represented only 
through approximation. 
The key idea is to use (a certain class of) string functions
as names representing these objects. 
These are more expressive than infinite sequences, 
which served as names in prior work that 
formulated complexity in more restricted settings. 
An advantage of using string functions
is that we can define 
their \emph{size} 
in the way inspired by 
higher-type complexity theory. 
This enables us to talk about computation on string functions 
whose time or space is bounded polynomially in the input size, 
giving rise to more general analogues of the classes 
$\classP$, $\classNP$, and $\classPSPACE$. 
We also define $\classNP$- and $\classPSPACE$-completeness
under suitable many-one reductions. 

Because our framework separates machine computation and semantics, 
it can be applied to problems on sets of interest in analysis
once we specify a suitable representation (encoding). 
As prototype applications, 
we consider the complexity of functions (operators) on
real numbers, real sets, and real functions. 
For example, the task of numerical algorithms for 
solving a certain class of differential equations 
is naturally viewed as an operator taking real functions to real functions.  
As there was no complexity theory for operators, 
previous results only stated how complex the solution can be. 
We now reformulate them and show that the operator itself
is polynomial-space complete. 
\end{abstract}

\maketitle

\section{Introduction}

\emph{Computable Analysis} 
\cite{weihrauch00:_comput_analy, brattka08:_tutor_comput_analy}
studies problems 
involving real numbers
from the viewpoint of computability. 
Elements of uncountable sets (such as real numbers) are
represented
by infinite sequences of approximations 
and processed by Turing machines. 
This framework is applicable not only to the real numbers
but also with great generality 
to other spaces arising naturally in mathematical analysis. 
There is a unified way to discuss computability of 
real functions, sets of real numbers, 
operators taking real functions as inputs, and so on. 

In contrast, 
the application of this approach to 
computational complexity 
has been limited in generality. 
For example, 
although 
there is a widely accepted notion of polynomial-time computable
real functions $f \tcolon [0, 1] \tto \Rset$ on the compact interval 
that has been studied extensively~%
\cite{ko98:_polyn_time_comput_analy}, 
the same approach does not give a nice class of 
real functions on $\Rset$. 
Most of the complexity results in computable analysis to date 
(with a few exceptions~%
\cite{hoover90:_feasib_real_funct_and_arith_circuit, 
      takeuti01:_effec_fixed_point_theor_over, 
      weihrauch03:_comput_compl_comput_metric_spaces})
are essentially limited to the complexity of either 
real functions with compact domain, or 
of bounded subsets of $\Rset$. 
They do not address the complexity of, say, 
an operator~$F$ that takes 
real functions $f \tcolon [0, 1] \tto \Rset$ to another real function $F (f)$. 
There are many positive and negative results~%
\cite{ko91:_comput_compl_of_real_funct}
about such operators, 
but typically they are stated in the form 
\begin{quote}
if $f$ is in the complexity class $X$, 
then $F (f)$ is in complexity class $Y$, and \\
there is $f$ in complexity class $X$ 
such that $F (f)$ is hard for $Z$. 
\end{quote}
More direct
statements would be 
the ``uniform'' or ``constructive'' form
\begin{quote}
the operator $F$ is in class $\mathcal Y$, and \\
the operator $F$ is $\mathcal Z$-hard, 
\end{quote}
where $\mathcal Y$ and $\mathcal Z$ are 
the ``higher-order versions'' of $Y$ and $Z$. 
At the level of computability, 
it is common to ask, as soon as we see a non-uniform result, 
whether it can be made uniform. 
For complexity, we cannot even ask this question 
because we do not know how to formulate $\mathcal Y$ and $\mathcal Z$. 
This limitation has been widely recognized; 
see, for example, 
\cite[pp.\,57--58]{ko91:_comput_compl_of_real_funct}, 
\cite{weihrauch03:_comput_compl_comput_metric_spaces}, 
and 
\cite[p.\,484]{brattka08:_tutor_comput_analy}. 

To address this problem, 
we start with the observation (Section~\ref{section: TTE}) that 
the aforementioned limitation has to do with the fact that 
traditional formulations of computable analysis 
do not take into account the ``size'' of 
the infinite sequences given to the machine as input. 
We then propose (Section~\ref{section: main}) 
an extension on the machine model
by replacing infinite sequences
by what we call \emph{regular functions}
on strings. 
An advantage of using these functions is 
that we can define their \emph{size} 
in the way suggested by type-two complexity theory~%
\cite{mehlhorn76:_polyn_and_abstr_subrec_class, kapron96:_new_charac_of_type_feasib}. 
This enables us to measure the 
growth of running time (or space) in terms of the input size---%
exactly what we do in the usual (type-one) complexity theory. 
We thus obtain the complexity classes analogous to
$\classP$, $\classNP$, $\classPSPACE$ 
(and function classes $\classFP$ and $\classFPSPACE$)
by bounding the time or space by 
\emph{second-order polynomials}
in the input size. 
Analogues of many-one reductions and 
$\classNP$- and $\classPSPACE$-hardness 
will also be introduced. 

We apply this framework 
to a few specific problems in analysis 
by using suitable representations of 
real numbers, real sets, and real functions (Section~\ref{section: applications}). 
For real numbers, 
the induced complexity notions turn out 
to be equivalent to what has been studied
by Ko--Friedman~%
\cite{ko82:_comput_compl_of_real_funct} and 
Hoover~%
\cite{hoover90:_feasib_real_funct_and_arith_circuit}. 
For sets and functions, 
our approach seems to be the first to 
provide complexity notions in a unified manner. 
This is of particular interest, 
because many numerical problems in the real world are 
naturally formulated as operators taking sets or functions. 
For example, 
consider the operator~$F$ that finds the solution $F (f)$ 
of the differential equation (of a certain class) given by a function~$f$. 
As mentioned above, 
the existing non-uniform results~%
\cite{ko83:_comput_compl_of_ordin_differ_equat, 
      kawamura10:_lipsc_contin_ordin_differ_equat}
only tell us
\emph{how complex the solution $F (f)$ can be when $f$ is easy}; 
precisely, they say that if $f$ is polynomial-time computable, 
$F (f)$ is polynomial-space computable and can be polynomial-space hard. 
But the practical concern for numerical analysis would be 
\emph{how hard it is to compute $F$} 
(i.e., to compute $F (f)$ given $f$). 
We formulate and prove the first result of this kind: 
$F$ itself is a polynomial-space complete operator. 
Our contribution is in introducing the framework making such
formulations possible. 
The technically hard parts of the proofs of the specific results 
are already done in the proofs of the non-uniform versions, 
and all we need to do is to check that they uniformize in our sense. 
The original non-uniform versions are now
corollaries of the uniform statements. 

\subsection*{Notation and terminology}
\label{section: notation}

A \emph{multi-valued function}
(or \emph{multi-function}) $F$ from a set~$X$ to a set~$Y$ 
is formally a subset of $X \times Y$. 
For $x \in X$, we write $F [x]$ for the
set of $y \in Y$ such that $(x, y)$ belongs to this subset. 
These $y$ are the ``allowable outputs'' on input~$x$. 
We denote by $\dom F$ the set of $x \in X$ for which $F [x]$ is nonempty. 
When $F [x]$ is a singleton, 
its unique element is denoted by $F (x)$, as usual. 
If $F [x]$ is a singleton for all $x \in \dom F$, 
we say that $F$ is a \emph{partial function}. 
When in addition $\dom F = X$, we say that $F$ is a 
\emph{total function}, or simply a \emph{function}. 

Like some authors~%
\cite{goldreich, wegener}, 
we regard computational tasks (problems)
as multi-functions. 
The classes $\classFP$ and $\classFPSPACE$ 
consist of multi-functions from strings to strings 
computed by a machine 
whose time/space is polynomially bounded. 
Here, computing a multi-function is to be interpreted according to 
the ``allowable outputs'' semantics mentioned above: 
A machine is said to compute $F$ if, on any input $x \in \dom F$, 
it outputs \emph{some} element of $F [x]$. 
The classes $\classFPtwo$ and $\classFPSPACEtwo$ that we will define later 
will also consist of multi-functions. 

Note that we do not care what happens on inputs outside $\dom F$, 
unlike some authors who require that such inputs be rejected explicitly. 
Thus a multi-function can be easy to compute while having a nasty domain. 
We also note, however, that allowing $\dom F$ to be smaller than $X$ is 
not so important in the context of time- or space-bounded computation, 
because a machine that runs past the bound for some inputs
can be modified so that it keeps track of the time 
and outputs an error message when it has run out of time or space. 

Throughout the paper, 
$\varSigma ^*$ denotes the set of finite strings 
over the alphabet $\varSigma$. 
We will tacitly assume, 
depending on contexts, 
that $\varSigma = \{0, 1\}$ or 
that $\varSigma$ contains all symbols 
appearing in the discussion. 

Since our applications mainly involve real numbers, 
it will be convenient to fix a 
dense subset of $\Rset$ and its encoding. 
For each $n \in \Nset$, 
let $\Dset _n$ denote 
the set of strings of the form 
\begin{equation}
\label{equation: names of dyadic numbers}
 s x / 1 \mkern-3mu \underbrace{00\dots0} _n, 
\end{equation}
where $s \in \{\mathord+, \mathord-\}$ and $x \in \{0, 1\} ^*$. 
Let $\Dset = \bigcup _{n \in \Nset} \Dset _n$. 
A string in $\Dset$ 
\emph{encodes} 
a number in the obvious sense---namely, 
read \eqref{equation: names of dyadic numbers} as 
a fraction whose numerator and denominator are integers 
written in binary with leading zeros allowed. 
We write $\lsem u \rsem$ for the 
number encoded by $u \in \Dset$. 
The numbers that can be encoded in this way are called 
\emph{dyadic} numbers. 

\section{Type-Two Theory of Effectivity}
\label{section: TTE}

There are several equivalent formulations for Computable Analysis. 
One powerful framework is 
Weihrauch's Type-Two Theory of Effectivity (TTE)
\cite{weihrauch00:_comput_analy, 
      brattka08:_tutor_comput_analy}. 
In this section, 
we brief\textcompwordmark ly introduce the
infinite sequence model of TTE 
and discuss some difficulties in dealing with complexity, 
which motivate our modification 
in Section~\ref{section: main}. 

\subsection{Computability}
\label{subsection: TTE computability}

In the usual computability theory, 
we use some machine model that
computes functions from $\varSigma ^*$ to $\varSigma ^*$. 
To discuss computation on other sets~$X$, 
we specify an \emph{encoding} of $X$---that is, 
a rule for interpreting an element of $\varSigma ^*$ as an element of $X$. 

But we want to deal with 
uncountable sets, such as the set $\Rset$ of real numbers. 
Since the countable set $\varSigma ^*$ cannot encode them, 
TTE uses the set $\varSigma ^\Nset$ of \emph{infinite sequences} instead. 

Computability of partial functions 
from $\varSigma ^\Nset$ to $\varSigma ^\Nset$ is defined 
using Turing machines. 
The machine has an input tape, an output tape, and a work tape, 
each of which is infinite to the right. 
We also assume that the output tape is one-way; that is, 
the only instruction for the output tape is 
``write $a \in \varSigma$ in the current cell and move the head to the right''.
The difference from the usual setting 
is in the convention by which the machine reads the input and delivers the output. 
The input is now an infinite string $a _0 a _1 \ldots \in \varSigma ^\Nset$, 
and is written on the input tape before the computation starts
(with the tape head at the leftmost cell). 
We say the machine outputs 
an infinite string $b _0 b _1 \ldots \in \varSigma ^\Nset$
if it never halts and writes the string indefinitely on the output tape
(that is, 
for each $n \in \Nset$, it eventually writes $b _0 \ldots b _{n - 1}$ 
into the first $n$ cells). 
This defines a class of 
(possibly partial) computable functions (without any time or space bound)
from $\varSigma ^\Nset$ to $\varSigma ^\Nset$. 
The definition can be extended to multi-functions~$A$: 
we say that a machine $M$ computes $A$ if $M$, on any input $\varphi \in \dom A$, 
always outputs some element of $A [\varphi]$. 

A 
\emph{representation}
$\gamma$ of a set~$X$ is formally 
a partial function from $\varSigma ^\Nset$ to $X$ which is 
surjective---that is, 
for each $x \in X$, 
there is at least one $\varphi \in \varSigma ^\Nset$ with $\gamma (\varphi) = x$. 
We say that $\varphi$ is a 
\emph{$\gamma$-name}
of $x$. 
Computability of multi-functions on represented sets is defined as follows. 

\begin{definition}
 \label{definition: realization}
Let $\gamma$ and $\delta$ be representations of sets $X$ and $Y$, respectively. 
We say that a machine 
\emph{$(\gamma, \delta)$-computes} 
a multi-function $A$ from $X$ to $Y$ if it computes 
the multi-function $\delta ^{-1} \circ A \circ \gamma$ given by 
\begin{equation}
 (\delta ^{-1} \circ A \circ \gamma) [\varphi] 
=
\begin{cases}
 \{\, \psi \in \dom \delta : \delta (\psi) \in A [\gamma (\varphi)] \,\}
&
 \text{if} \ \varphi \in \dom \gamma, 
\\
 \emptyset
& 
 \text{otherwise}. 
\end{cases}
\end{equation}
\end{definition}

In other words, 
whenever the machine is given a $\gamma$-name of an element $x \in \dom A$, 
it must output some $\delta$-name of some element of $A [x]$ 
(Figure~\ref{figure: realization}). 
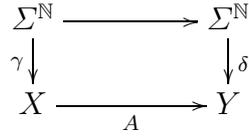
\begin{figure}[t]
\begin{equation*}
  \xymatrix@C=50pt@R=16pt{
   \varSigma ^\Nset \ar[r] \ar[d] _\gamma & \varSigma ^\Nset \ar[d] ^\delta  \\
   X \ar[r] _A & Y 
  }
\end{equation*}
 \caption{$(\gamma, \delta)$-computing a multi-function~$A$.}
 \label{figure: realization}
\end{figure}

As an example, 
we define a representation~$\rhoreal$ of the set $\Rset$ of real numbers 
by saying that 
an infinite string $\varphi \in \varSigma ^\Nset$ 
is a $\rhoreal$-name of $x \in \Rset$ 
if $\varphi$ is of the form $u _0 \# u _1 \# u _2 \# \ldots$ 
(where $\#$ is a delimiter symbol not appearing in the $u _i$) such that 
$u _i \in \Dset$ and 
$\lvert \lsem u _i \rsem - x \rvert < 2 ^{-i}$ for each $i \in \Nset$. 
Thus a real number is specified by 
a list of rational numbers converging to it. 
It turns out that $\rhoreal$ is a natural representation
with which to discuss computability of real functions. 
In particular, $\rhoreal$ is \emph{admissible}
with respect to the usual topology of $\Rset$~%
\cite[Lemma~4.1.6]{weihrauch00:_comput_analy}. 

To deal with functions of two arguments, 
we define, for representations $\gamma$ and $\delta$ of sets $X$ and $Y$, 
a representation $[\gamma, \delta]$ of $X \times Y$ by $
[\gamma, \delta] (a _0 b _0 a _1 b _1 \ldots) = 
(\gamma (a _0 a _1 \ldots), \delta (b _0 b _1 \ldots))
$. 

\begin{example}
 \label{example: addition}
Addition $\mathord+ \tcolon \Rset \times \Rset \tto \Rset$ is 
$([\rhoreal, \rhoreal], \rhoreal)$-computable. 
For suppose that we are given names 
$\varphi = u _0 \# u _1 \# \ldots$ and $\psi = v _0 \# v _1 \# \ldots$ 
of real numbers $s$ and $t$. 
An approximation of $s + t$ with precision $2 ^{-m}$, for each $m$, 
is given by $\lsem u _{m + 1} \rsem + \lsem v _{m + 1} \rsem$. 
\end{example}

\begin{example}
 \label{example: multiplication}
Multiplication $\mathord\times \tcolon \Rset \times \Rset \tto \Rset$ is 
$([\rhoreal, \rhoreal], \rhoreal)$-computable. 
Given
names $\varphi = u _0 \# u _1 \# \ldots$ and $\psi = v _0 \# v _1 \# \ldots$ 
of real numbers $s$ and $t$, 
let $k = \max \{ \lvert u _0 \rvert, \lvert v _0 \rvert \}$. 
Since $\lsem u _0 \rsem$ and $\lsem v _0 \rsem$ are near $s$ and $t$, 
and it takes more than $k$ digits to encode 
a number with absolute value $\geq 2 ^k$, 
we have $\lvert s \rvert$, $\lvert t \rvert < 2 ^k$. 
Hence, $s \times t$ is approximated 
with precision $2 ^{-m}$ by 
$\lsem u _{m + k + 1} \rsem \cdot \lsem v _{m + k + 1} \rsem$.
\end{example}

A good thing about the TTE formulation is that, 
by using suitable representations, 
we can discuss computation on many other sets. 
There are often standard ways to obtain
representations of higher-type objects such as sets and functions. 
For example, since we have agreed on the representations~$\rhoreal$ of $\Rset$, 
we can introduce a canonical representation of 
the set $\classC [\Rset]$ of continuous real functions, 
and there are reasons to believe that this 
is the ``right'' representation~%
\cite[Chapter~3]{weihrauch00:_comput_analy}. 

\subsection{Complexity}
\label{subsection: TTE complexity}

Now we start putting time bounds. 
This means requiring that the $n$th prefix of the output be delivered
within time bounded polynomially in $n$ (and independently of $\varphi$): 

\begin{definition}
\label{definition: polynomial time stream machine}
A machine~$M$ runs in 
\emph{polynomial time}
if there is a polynomial~$p$ such that 
for all $\varphi \in \varSigma ^\Nset$ and $n \in \Nset$, 
the machine $M$ on input $\varphi$ 
finishes writing the first $n$ symbols of the output 
within $p (n)$ steps. 
Define \emph{polynomial space} analogously by counting the number of visited cells on all (input, work and output) tapes. 
\end{definition}

Can we use this notion 
to define polynomial-time computability of, say, a real function? 

\subsubsection{Representations must be chosen carefully}
\label{subsubsection: TTE complexity compact}

A little thought shows
that the simple combination of 
Definition~\ref{definition: polynomial time stream machine} 
and the representation~$\rhoreal$ is useless~%
\cite[Examples 7.2.1, 7.2.3]{weihrauch00:_comput_analy}. 
On the one hand, the machine~$M$ could ``cheat'' by 
writing redundant $\rhoreal$-names: 
By writing $+10000/100000$ instead of $+1 / 10$ 
it gets more time to compute the next approximation. 
On the other hand, 
the machine may suffer by receiving redundant names as input, 
such as the one in which the first approximation is too long to even read in time. 

This motivates 
the use of 
\emph{signed digit representation}
$\rhosd$ of $\Rset$
\cite[Definition 7.2.4]{weihrauch00:_comput_analy} 
defined as follows, forbidding redundancy carefully: 
$\dom \rhosd$ consists of sequences $\varphi \in \varSigma ^\Nset$ of form $
a _k \ldots a _1 a _0 \bullet a _{-1} a _{-2} \ldots 
$ for some $k$, where each $a _i$ is either $0$, $1$ or $-1$, 
such that $k = 0$ or 
$(a _k, a _{k - 1}) \in \{(1, 0), (1, 1), (-1, 0), (-1, -1)\}$; 
if this is the case, set $
  \rhosd (\varphi) = \sum _{i = -\infty} ^k a _i \cdot 2 ^i 
$. 
Thus we read the digit sequence as a binary expansion of a real number
(with decimal point $\bullet$) 
with digits $0$, $1$ and $-1$; 
we forbid certain patterns in the first two digits of the integer part
in order to exclude redundancy. 
(See \cite[Example~2.1.4.7]{weihrauch00:_comput_analy} for the reason
why the usual binary expansion without the ``$-1$'' symbol does not work.)

Let $\rhosd | ^{[0, 1]}$ denote the restriction of $\rhosd$ to 
(infinite sequences representing) real numbers in $[0, 1]$. 
By Definition~\ref{definition: polynomial time stream machine}, 
we know what it means for a real function $f \tcolon [0, 1] \to \Rset$ to be
\emph{polynomial-time $(\rhosd | ^{[0, 1]}, \rhosd)$-computable}. 
This notion turns out to be robust and 
equivalent to the widely accepted 
polynomial-time computability of Ko and Friedman~%
\cite{ko82:_comput_compl_of_real_funct}, 
so we will drop the prefix ``$(\rhosd | ^{[0, 1]}, \rhosd)$'' from now on. 
The same goes for \emph{polynomial-space} computability, 
and for functions on compact intervals or rectangles instead of $[0, 1]$
(use the pairing function as in Examples 
\ref{example: addition} and \ref{example: multiplication}). 
It is routine to verify that, for example, 
addition and multiplication 
$\mathord+$, $\mathord \times \tcolon [0, 1] \times [0, 1] \tto \Rset$ are 
polynomial-time computable. 
For more interesting results, 
see Ko's book~%
\cite{ko91:_comput_compl_of_real_funct}, 
survey~%
\cite{ko98:_polyn_time_comput_analy} 
or Weihrauch's book~%
\cite[Section~7.3]{weihrauch00:_comput_analy}. 

\subsubsection{Difficulties in generalizing to other spaces}

Unfortunately, 
this approach does not extend much further. 
For example, 
a naive extension to 
real functions on $\Rset$ (instead of $[0, 1]$) does not work: 
polynomial-time $(\rhosd, \rhosd)$-computability 
tends to fail for trivial reasons. 

\begin{example}
Addition on $\Rset$
(Example~\ref{example: addition}) 
is not polynomial-time $([\rhosd, \rhosd], \rhosd)$-computable. 
For suppose that a machine $([\rhosd, \rhosd], \rhosd)$-computed it 
within a polynomial time bound~$p$. 
In particular, the machine has to write the first symbol of the output 
in $t := p (1)$ steps or fewer. 
Note that this first symbol must be $1$ if the sum is greater than $1$, 
and $-1$ if the sum is less than $-1$. 
In particular, 
it must be $1$ if the two summands are $2 ^{t+100}$ and $-2 ^{t+50}$, 
and $-1$ if they are $2 ^{t+50}$ and $-2 ^{t+100}$. 
However, the machine cannot tell between these two cases, 
because it can read at most $t$ symbols of the input in time. 
\end{example}

The trouble seems to be that 
the time bound is independent of the input. 
Compare this with the addition of integers (written in binary)
by the usual Turing machine. 
It is in polynomial time, because 
a large summand would make the ``input size'' big 
and thereby give the machine more time. 
For the same thing to happen for addition of the real numbers, 
we would need to talk about the ``size'' of the input 
and a time bound ``polynomial in'' it, 
but we do not have the notion of size for infinite sequences. 
We could simply require, as 
Hoover~%
\cite{hoover90:_feasib_real_funct_and_arith_circuit}
and Takeuti~%
\cite{takeuti01:_effec_fixed_point_theor_over} 
did (see Section~\ref{subsection: applications, real numbers}), 
that the time to output the $i$th bit below the decimal point 
may depend polynomially in both $i$ and 
the logarithm of the absolute value of the input real number. 
This would have the same effect as 
our proposed extension of the computation model, 
in this specific case of $\Rset$---%
but our point is that we want a coherent theory of computation
that is applicable to other spaces by just switching representations. 
There are many objects other than $\Rset$ that 
we want to give representations to. 
The objects for which the infinite string model gives 
reasonable notions of complexity
are limited, 
compared to what we can do at the level of computability 
(see the discussions in 
Ko~\cite[pp.\,57--58]{ko91:_comput_compl_of_real_funct}, 
Weihrauch~\cite{weihrauch03:_comput_compl_comput_metric_spaces}, 
and 
Brattka et al.~\cite[p.\,484]{brattka08:_tutor_comput_analy}). 
Because of this limitation, 
the complexity of 
operators 
has been mostly formulated in non-uniform ways. 
We quote examples of such theorems below. 
We will reformulate them uniformly later 
(Theorems \ref{theorem: convex hull uniform} and \ref{theorem: ivp uniform}). 

\subsubsection{Non-uniform results}
\label{subsubsection: non-uniform results}

The first pair of results are
the positive and negative statements about
the operator of taking the convex hull $\OpCH (S)$ of 
a closed set $S \subseteq [0, 1] ^2$. 

Polynomial-time computability of 
a set $S \subseteq [0, 1] ^2$ 
is defined as follows
(see e.g.\ Braverman~%
\cite{braverman05:_compl_of_real_funct}
for a discussion), 
using the usual complexity class~$\classP$. 
We say that $S$ is \emph{polynomial-time computable} 
if there is a function
$\varphi \tcolon \varSigma ^* \tto \{0, 1\}$
in $\classP$ such that, 
for any $n \in \Nset$ and $u$, $v \in \Dset$, 
\begin{itemize}
\item
$\varphi (u, v, 0 ^n) = 1$ if $\dist ((\lsem u \rsem, \lsem v \rsem), S) < 2 ^{-n}$, and 
\item
$\varphi (u, v, 0 ^n) = 0$ if $\dist ((\lsem u \rsem, \lsem v \rsem), S) > 2 \cdot 2 ^{-n}$, 
\end{itemize}
where $
 \dist (p, S) 
:=
 \inf _{q \in S} \lVert p - q \rVert
$ denotes 
the Euclidean distance of point $p \in \Rset ^2$ from $S$
(Figure~\ref{figure: set computability}). 
\begin{figure}
\begin{center}
\includegraphics[scale=1.1]{./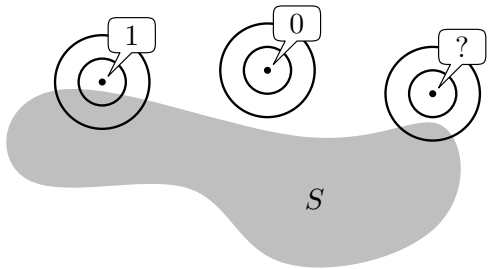}
\caption{Computing a set~$S$ means that, 
given $(u, v, 0 ^n)$, 
one can tell whether
the distance of the point $(\lsem u \rsem, \lsem v \rsem)$ from $S$ 
is less than $2 ^{-n}$ or more than $2 \cdot 2 ^{-n}$.
}
\label{figure: set computability}
\end{center}
\end{figure}
Likewise, $S$ is said to be 
\emph{nondeterministic polynomial-time computable}
if there is such a $\varphi$ in $\classNP$
(recall the asymmetry between the outcomes $1$ and $0$ 
in the definition of $\classNP$: 
we require an easily verifiable certificate for 
$(\lsem u \rsem, \lsem v \rsem)$ being close to $S$). 

Ko and Yu~%
\cite{ko08:_compl_of_convex_hulls_of}
and 
Zhao and M\"uller~%
\cite{zhao08:_compl_of_operat_compac_sets} 
essentially proved\footnote{\label{footnote: strong recognizability}%
Ko and Yu
state both the positive and the negative results 
(Theorems \ref{theorem: convex hull upper non-uniform} 
and \ref{theorem: convex hull lower non-uniform})
for polynomial-time \emph{strong recognizability} 
instead of our computability
\cite[Corollaries 4.3 and 4.6, respectively]{ko08:_compl_of_convex_hulls_of}, 
but their proof almost works for computability as well. 
For a discussion comparing the two notions, 
see Braverman~%
\cite{braverman05:_compl_of_real_funct}, 
where 
Ko's strong recognizability is called \emph{weak computability}. 
Zhao and M\"uller use the polynomial-time computability equivalent to ours 
and prove Theorem~%
\ref{theorem: convex hull upper non-uniform} 
\cite[Theorem~4.3]{zhao08:_compl_of_operat_compac_sets}. 
For the positive part, in fact they prove a uniform result
essentially equivalent to the positive part of 
our Theorem~\ref{theorem: convex hull uniform}
\cite[Theorem~4.1]{zhao08:_compl_of_operat_compac_sets}. 
Although they state the upper bound of ``exponential time'', 
their proof contains the argument that is 
necessary to derive the non-uniform $\classNP$ upper bound 
(our Theorem~\ref{theorem: convex hull lower non-uniform}) 
\cite[Lemma~4.2]{zhao08:_compl_of_operat_compac_sets}. 
}
the following non-uniform theorems 
about the complexity of taking the convex hull of a set. 

\begin{theorem}
\label{theorem: convex hull upper non-uniform}
If a closed set $S \subseteq [0, 1] ^2$ is 
polynomial-time computable, 
then its convex hull $\OpCH (S)$ is nondeterministic polynomial-time computable. 
\end{theorem}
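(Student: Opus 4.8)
The plan is to build an $\classNP$ predicate that certifies, up to the prescribed precision, that the query point lies near $\OpCH(S)$, by guessing a Carathéodory decomposition. Since $S$ lies in the plane, Carathéodory's theorem says every point of $\OpCH(S)$ is a convex combination of at most three points of $S$; this bounds the combinatorial size of the certificate by a constant, so that once the numerical precision is fixed the whole witness is polynomial in $n$. This constant bound on the number of vertices is exactly what lets the problem sit in $\classNP$ rather than something larger.

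Concretely, given a query $(u, v, 0^n)$ with $p = (\lsem u \rsem, \lsem v \rsem)$, I would have the machine nondeterministically guess three dyadic points $a_1, a_2, a_3 \in \Dset^2$ with $O(n)$-bit coordinates together with dyadic weights $\lambda_1, \lambda_2 \geq 0$, setting $\lambda_3 = 1 - \lambda_1 - \lambda_2$ so that $\sum_i \lambda_i = 1$ holds exactly and only $\lambda_3 \geq 0$ need be checked. The verifier then invokes the assumed $\classP$ algorithm $\varphi$ for $S$ at a slightly finer precision $m = n + O(1)$, checking that $\varphi(a_i, 0^m) = 1$ for each $i$ (feeding the two coordinates of $a_i$ to $\varphi$), which guarantees each $a_i$ is within $2^{-m}$ of $S$, and checks the dyadic inequality $\lVert p - \sum_i \lambda_i a_i \rVert < \tfrac32 \cdot 2^{-n}$ (comparing squared distances, to avoid square roots). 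This is a constant number of arithmetic operations on $O(n)$-bit dyadics plus three calls to $\varphi$, hence polynomial time, and the certificate has polynomial length.

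For soundness: if the checks pass, then each $\varphi(a_i, 0^m) = 1$ yields a genuine $s_i \in S$ with $\lVert a_i - s_i \rVert < 2^{-m}$, so $q = \sum_i \lambda_i s_i$ is a true point of $\OpCH(S)$, and $\lVert p - q \rVert < \tfrac32 2^{-n} + 2^{-m} < 2 \cdot 2^{-n}$ once $m \geq n+1$; thus a point at distance $> 2 \cdot 2^{-n}$ from $\OpCH(S)$ is never accepted. For completeness: if $\dist(p, \OpCH(S)) < 2^{-n}$, pick a genuine decomposition $p' = \sum_i \mu_i r_i$ with $r_i \in S$, round the $r_i$ to dyadic $a_i$ and the $\mu_i$ to dyadic $\lambda_i$ at precision $2^{-m}$ (rounding the weights down to keep $\lambda_3 \geq 0$), and bound the accumulated error by $\sum_i \lvert \mu_i - \lambda_i \rvert \lVert r_i \rVert + \sum_i \lambda_i \lVert r_i - a_i \rVert = O(2^{-m})$, using that all points lie in $[0,1]^2$; taking the constant in $m$ large enough forces the guessed combination within $\tfrac32 2^{-n}$ of $p$, so an accepting certificate exists.

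The step I expect to be the main obstacle is the error bookkeeping that makes soundness and completeness both fit inside the gap $(2^{-n}, 2 \cdot 2^{-n})$: the auxiliary precision $m$ and the dyadic threshold in the distance test must be chosen so that rounding the Carathéodory weights and points, together with the one-sided slack inherent in $\varphi$ (which certifies \emph{closeness} to $S$, not membership in $S$), never pushes an accepted point past $2 \cdot 2^{-n}$ nor rejects a point within $2^{-n}$. I also note that $\OpCH(S)$ is compact, hence closed, so the target notion of $\classNP$-computability applies, and that the asymmetry of that definition is exactly right here: closeness to the convex hull has a short verifiable certificate (three points and their weights), whereas farness has no such certificate in general.
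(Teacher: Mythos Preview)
Your approach is correct and essentially the same as the paper's sketch for the upper bound: nondeterministically guess a convex decomposition of the query point and verify that each guessed vertex is near $S$ using the given polynomial-time predicate, with the auxiliary precision tuned to fit the $(2^{-n},\,2\cdot 2^{-n})$ gap. Two small remarks: the paper's sketch guesses only \emph{two} witness points of $S$ (which is not generally sufficient---your invocation of Carath\'eodory with three points is the correct statement in the plane, though the complexity class is unaffected), and in your soundness step $\varphi(a_i,0^m)=1$ only guarantees $\dist(a_i,S)\le 2\cdot 2^{-m}$ rather than $<2^{-m}$, so take $m\ge n+2$ instead of $n+1$.
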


\begin{theorem}
\label{theorem: convex hull lower non-uniform}
Unless $\classP = \classNP$, 
there exists a closed set~$S \subseteq [0, 1] ^2$
which is polynomial-time computable, 
but whose convex hull $\OpCH (S)$ is not. 
\end{theorem}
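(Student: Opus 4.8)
The plan is to prove the contrapositive: assuming $\classP \neq \classNP$, I will build a closed, polynomial-time computable set $S \subseteq [0, 1] ^2$ for which polynomial-time computability of $\OpCH (S)$ would force $\classP = \classNP$. Fix an $\classNP$-complete language $L$ together with a polynomial-time relation $R$ and a polynomial $q$ such that $x \in L$ precisely when there exists $w \in \{0, 1\} ^{q (|x|)}$ with $R (x, w)$. The guiding principle is an asymmetry that mirrors the asymmetry already present in the definition of set computability: \emph{verifying a single witness} $R (x, w)$ is a polynomial-time task, which will make $S$ easy, whereas \emph{deciding whether some witness exists} is the $\classNP$ question that the convex hull will be forced to answer.

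The construction encodes every instance at its own shrinking scale. Enumerate all strings $x _1, x _2, \dots$ in length-lexicographic order and assign to $x _k$ a tiny, pairwise-disjoint square $Q _k \subseteq [0, 1] ^2$ of side length $2 ^{-c (|x _k| + q (|x _k|))}$ for a suitable constant $c$, placed so the squares accumulate only at points that we explicitly add to $S$ (this secures closedness). Inside $Q _k$ fix a designated query point $p _k$ and two ``base'' points, always present in $S$, positioned so that the convex hull of the base points alone stays at distance exactly $r _0$ from $p _k$. For each candidate $w \in \{0, 1\} ^{q (|x _k|)}$ place a ``witness point'' $P _w$ on a circular arc of radius $r _1 < r _0$ centered at $p _k$, at the angle coded by $\lsem w \rsem / 2 ^{q (|x _k|)}$, and include $P _w$ in $S$ exactly when $R (x _k, w)$ holds. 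Because every $P _w$ lies at distance exactly $r _1$ from $p _k$, the inclusion of \emph{any} witness point pulls the convex hull to within distance $r _1$ of $p _k$, while with no witness included the hull stays at distance $r _0$; the detection is thus independent of \emph{which} witness happens to work. Choosing the scale so that the gap $r _0 - r _1$ and the resolution at which we query are both $\Theta (2 ^{-n})$ with $n = \mathrm{poly} (|x _k|)$, a single query comparing $\dist (p _k, \OpCH (S))$ against the threshold $(r _0 + r _1) / 2$ decides whether $x _k \in L$.

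It remains to verify the two halves. First, $S$ is polynomial-time computable: given $(u, v, 0 ^n)$, the machine reads off from $(\lsem u \rsem, \lsem v \rsem)$ which square $Q _k$, and hence which instance $x _k$, is relevant, and rounds the query to the witness grid to recover the \emph{unique} candidate $w$ whose point could lie within the discrimination radius $2 ^{-n}$---here I use that the angular spacing of the $P _w$ is kept much larger than $2 ^{-n}$, so at most one witness point can be near any query. It then outputs the correct distance bit after a single polynomial-time evaluation of $R (x _k, w)$, the always-present base points being handled directly; closedness follows from the placement of the $Q _k$ and the inclusion of their accumulation points. Conversely, if $\OpCH (S)$ were polynomial-time computable, then on input $x$ the reduction locates $Q _k$ with $x _k = x$, forms the dyadic coordinates of $p _k$ (of length $\mathrm{poly} (|x|)$), and queries the hull at precision $n = \mathrm{poly} (|x|)$; since polynomial-time computability is measured in $|u| + |v| + n = \mathrm{poly} (|x|)$, this decides $L$ in polynomial time, yielding $\classP = \classNP$.

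The main obstacle is engineering the construction so that $S$ and $\OpCH (S)$ land on opposite sides of the $\classP$-versus-$\classNP$ divide simultaneously. The two tensions to resolve are (i) spacing the exponentially many witness points densely enough that the convex hull genuinely aggregates them into a single existential bulge, yet coarsely enough relative to the query precision $2 ^{-n}$ that every distance-to-$S$ query is settled by inspecting just one witness, and (ii) making the bulge a detectable, witness-location-independent change in $\dist (p _k, \OpCH (S))$---which is exactly what the equidistant circular-arc placement buys. Getting these scale relationships to hold uniformly across all instances $x _k$, while keeping $S$ closed and the coordinate bookkeeping polynomial, is the technical heart; once the geometry is in place, both verifications are routine.
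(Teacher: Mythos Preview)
Your overall strategy---encode an $\classNP$-complete problem so that $S$ records individual witnesses while $\OpCH(S)$ detects their existence---matches the Ko--Yu construction underlying the paper's argument (the paper actually obtains Theorem~\ref{theorem: convex hull lower non-uniform} as a corollary of the uniform Theorem~\ref{theorem: convex hull uniform}, but the hardness half of that theorem rests on the same widget idea).  However, the construction as you describe it has two genuine gaps.

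First, the polynomial-time computability of $S$ does not follow.  You assert that ``the angular spacing of the $P _w$ is kept much larger than $2 ^{-n}$, so at most one witness point can be near any query,'' but the precision~$n$ in a query $(u, v, 0 ^n)$ to $S$ is \emph{arbitrary}, not the particular polynomial value you later use when querying $\OpCH (S)$.  For $n$ in the range where $2 ^{-n}$ exceeds the witness spacing yet is still smaller than the distance from the arc to your two base points (a nonempty range of width roughly $q (|x _k|)$), a query lying on the arc has exponentially many candidate witness positions within radius $2 \cdot 2 ^{-n}$, and answering correctly requires knowing whether \emph{any} of them satisfies $R (x _k, w)$---itself an $\classNP$ question.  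The paper's fix (its modification of Ko--Yu) avoids exactly this: $S$ is a region whose boundary contains a baseline curve that is always present, and witnesses are short ``bumps'' off that baseline, so every query is settled by the baseline plus at most one bump.  Isolated witness points together with two base points provide no such fallback.

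Second, you reason about $\OpCH (S)$ as if it were local to each $Q _k$, but the convex hull is global.  With the $Q _k$ placed as loosely as you describe, points in distant $Q _j$ can already put $p _k$ deep inside $\OpCH (S)$, making $\dist (p _k, \OpCH (S)) = 0$ regardless of the witnesses in $Q _k$.  You must specify a placement (for instance, all widgets along a strictly convex arc, as in the Ko--Yu layout) guaranteeing that each widget sits on the boundary of the global hull and is not overwritten by the others, and check that this is compatible with the rest of the bookkeeping.
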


For $A \subseteq \Rset ^d$, 
let $\classC [A]$ be the set of continuous functions from $A$ to $\Rset$. 
The second pair of results concerns 
the initial value problem (IVP) of the differential equation
\begin{align}
\label{equation: ivp}
  h (0) & = 0, &
  h' (t) & = g \bigl( t, h (t) \bigr), 
\end{align}
where $g \in \classC [[0, 1] \times \Rset]$ is given 
and $h \in \classC [0, 1]$ is the unknown. 
It is well known (see 
\cite[beginning of Section~3]{kawamura10:_lipsc_contin_ordin_differ_equat}) 
that the solution~$h$ exists and is unique if $g$ is 
\emph{Lipschitz continuous} (in the second argument), 
that is, 
\begin{equation}
\label{equation: lipschitz}
 \lvert g (t, y _0) - g (t, y _1) \rvert \leq L \cdot \lvert y _0 - y _1 \rvert
\end{equation}
for some constant~$L$ independent of $t$, $y _0$, $y _1$. 
The following results state how complex $h$ can be, 
assuming that $g$ is polynomial-time computable. 
Since polynomial-time computability is defined only for functions with 
compact domain, 
we restrict $g$ to the rectangle $[0, 1] \times [-1, 1]$. 
If there is a solution $h \in \classC [0, 1]$ whose values stay in $[-1, 1]$ 
(in which case $h$ is unique, as mentioned above), 
we write $\OpLipIVP (g)$ for this $h$. 
Thus $\OpLipIVP$ is a partial function 
from $\classCL [[0, 1] \times [-1, 1]]$
to $\classC [0, 1]$, where 
the former set is the subset of 
$\classC [[0, 1] \times [-1, 1]]$ 
consisting of Lipschitz continuous functions. 

\begin{theorem}[{\cite[Section~4]{ko83:_comput_compl_of_ordin_differ_equat}%
\footnote{%
Ko~\cite{ko83:_comput_compl_of_ordin_differ_equat}
proved 
Theorems \ref{theorem: ivp upper non-uniform}
and \ref{theorem: ivp lower non-uniform}
with a condition slightly weaker than Lipschitz continuity. 
On the other hand, 
Ota et al.~\cite{ota12:_compl_of_smoot_ordin_differ_equat}
show that 
Theorem~\ref{theorem: ivp lower non-uniform} can be 
strengthened to yield a continuously differentiable function~$g$. 
}}]
\label{theorem: ivp upper non-uniform}
If $g \in \dom \OpLipIVP$ is 
polynomial-time computable, 
then $\OpLipIVP (g)$ is 
polynomial-space computable. 
\end{theorem}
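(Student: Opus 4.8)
The plan is to reduce the computation of $\OpLipIVP (g)$ to a space‑bounded simulation of Euler's method, exploiting the fact that a polynomial‑space machine may run for exponentially many steps while reusing its workspace. Fix the target precision $2 ^{-n}$ and discretize $[0, 1]$ into $N = 2 ^{q (n)}$ subintervals of length $\Delta t = 2 ^{-q (n)}$, where $q$ is a polynomial to be chosen, with grid points $t _k = k \Delta t$. On this grid I would run the explicit Euler recurrence $\tilde h _0 = 0$ and $\tilde h _{k + 1} = \tilde h _k + \Delta t \cdot \tilde g (t _k, \tilde h _k)$, where $\tilde g$ is the dyadic approximation of $g (t _k, \tilde h _k)$ to precision $2 ^{-r (n)}$ supplied by the assumed polynomial‑time algorithm for $g$ (with $r$ a second polynomial), and where each $\tilde h _{k + 1}$ is truncated back to precision $2 ^{-r (n)}$. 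To answer a query for $h (t _0)$, the machine reads enough bits of the name of $t _0$ from its oracle to locate the nearest grid point $t _k$ and outputs $\tilde h _k$; the extra error from $\lvert t _0 - t _k \rvert \leq \Delta t$ is absorbed by the modulus of continuity of $h$ together with a one‑bit increase of the target precision. The decisive design choice is that this is an iterative \emph{loop}, not a recursion on the integral equation: the latter would have recursion depth $N = 2 ^{q (n)}$ and hence an exponential stack, whereas the loop stores only the current grid value.

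The heart of the argument is the error analysis. Writing $e _k = \tilde h _k - h (t _k)$, the discrete solution satisfies a recurrence of the form $\lvert e _{k + 1} \rvert \leq (1 + L \Delta t) \lvert e _k \rvert + \eta _k$, where the local error $\eta _k$ collects the Euler truncation error $\int _{t _k} ^{t _{k + 1}} [g (s, h (s)) - g (t _k, h (t _k))] \, ds$, the error from replacing $g$ by $\tilde g$, and the error from truncating the stored value $\tilde h _{k + 1}$. Unrolling gives $\lvert e _N \rvert \leq e ^L \sum _k \eta _k$, the factor $e ^L$ being harmless since $L$ is fixed. Because $g$ is polynomial‑time computable it has a polynomial modulus of continuity $\omega$, and because $\lvert h (s) - h (t _k) \rvert \leq M \Delta t$ with $M = \sup \lvert g \rvert$, the summed truncation error is at most $\omega (C \Delta t)$, which falls below $2 ^{-n}$ once $\Delta t = 2 ^{-q (n)}$ for a large enough polynomial $q$. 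The summed rounding error is at most $N \cdot 2 ^{-r (n)} = 2 ^{q (n) - r (n)}$, which falls below $2 ^{-n}$ as soon as $r (n) \geq q (n) + n + O (1)$. With $q$ and $r$ so chosen, $\lvert e _N \rvert \leq 2 ^{-n}$.

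It remains to check the space bound, which is where the claim is really proved. The simulation is a single pass over $k = 0, \dots, N - 1$, and at any instant the machine stores only the counter $k$ (needing $q (n)$ bits), the current value $\tilde h _k$ (a dyadic number near $[-1, 1]$ truncated to $O (r (n))$ bits), and scratch space for one evaluation of $\tilde g$. Since $g$ is polynomial‑time computable, a single evaluation of $\tilde g$ to precision $2 ^{-r (n)}$ runs in time, hence space, polynomial in $n$, and the single addition and multiplication by $\Delta t$ per step are likewise cheap. All of this workspace is reused from step to step, so the total space is polynomial in $n$ even though the running time is $2 ^{q (n)}$. This is exactly the gap that makes $\OpLipIVP (g)$ polynomial‑space rather than polynomial‑time computable.

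I expect the main obstacle to be the error bookkeeping across exponentially many steps rather than any single estimate. Three points require care. First, because $g$ is only assumed Lipschitz in its second argument and not continuously differentiable, the classical $O (\Delta t)$ truncation bound is unavailable; one must instead control the local truncation error through the modulus of continuity of $g$, yielding a per‑step error that is $o (\Delta t)$ yet still summable to $\omega (C \Delta t)$. Second, one must guarantee that the approximate trajectory never leaves the rectangle on which $g$ is defined: since the true solution stays in $[-1, 1]$ and $\lvert e _k \rvert$ is kept below $2 ^{-n}$, clamping the second argument of $g$ to $[-1, 1]$ before each evaluation—an operation that preserves both Lipschitz continuity and polynomial‑time computability—resolves this without perturbing the trajectory we care about. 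Third, the interplay between $q$, which sets the number of steps, and $r$, which sets the working precision, must be arranged so that accumulated rounding stays below target; this is the one place where the two polynomials interact, forcing $r (n)$ to dominate $q (n) + n$.
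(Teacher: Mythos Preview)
Your proposal is correct and follows essentially the same route as the paper: the paper derives this theorem as a corollary of the uniform statement (Theorem~\ref{theorem: ivp uniform}), whose computability half is proved exactly by running the forward Euler scheme with step size $2^{-p}$ and oracle precision $2^{-q}$, choosing $p$ and $q$ from the modulus of continuity and Lipschitz constant, and bounding the error via the discrete Gronwall-type recurrence $(1+L\Delta t)^N \leq e^{L}$ that you describe. The only cosmetic differences are that the paper carries the analysis in the uniform setting (with $\mu$, $L$, $M$ as explicit parameters rather than fixed constants) and presents the error estimate as an induction giving $\lvert \tilde h(t)-h(t)\rvert \le 2^{-n}e^{4L(t-1)}$ rather than unrolling the recurrence; your three caveats (modulus-based truncation bound, staying inside the rectangle, and the interaction $r(n)\ge q(n)+n$) are all present in the paper's choice $p=\max\{\mu(n+8L),\,n+8L+M\}$, $q=n+8L$.
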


\begin{theorem}[{\cite[Theorem~3.2]{kawamura10:_lipsc_contin_ordin_differ_equat}}]
\label{theorem: ivp lower non-uniform}
There is a polynomial-time computable function $
g \in \dom \OpLipIVP
$ such that $\OpLipIVP (g)$ is 
polynomial-space complete (in the sense defined in 
\cite{ko92:_comput_compl_of_integ_equat} or 
\cite{kawamura10:_lipsc_contin_ordin_differ_equat}). 
\end{theorem}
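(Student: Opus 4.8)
The plan is to prove hardness only, since polynomial-space computability of $\OpLipIVP (g)$ for polynomial-time computable~$g$ is exactly Theorem~\ref{theorem: ivp upper non-uniform}. Thus it suffices to exhibit one polynomial-time computable, Lipschitz continuous $g \in \dom \OpLipIVP$ whose solution $h = \OpLipIVP (g)$ is polynomial-space \emph{hard}, meaning that some fixed $\classPSPACE$-complete language~$L$ can be decided in polynomial time given an oracle that returns dyadic approximations to the values of~$h$. The guiding idea is that the very process of solving the initial value problem---integrating~$g$ along the solution curve---can be made to \emph{simulate} a polynomial-space Turing machine: the function~$g$ encodes only a single, local transition step, which is cheap to compute, whereas the repeated integration performed by any correct solver iterates this step exponentially many times, enough to run a space-bounded machine to completion.

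Concretely, I would fix a $\classPSPACE$-complete language~$L$ decided by a machine~$M$ using space $p (\lvert x \rvert)$, and design~$g$ so that the solution~$h$, evaluated near a dyadic point $t _x$ encoding an input~$x$, reveals whether $x \in L$. The dynamics of~$h$ on the approach to $t _x$ are arranged to run~$M$ on~$x$: successive portions of the solution curve encode successive configurations of~$M$, and the local behaviour of~$g$ implements a single transition step. Since~$M$ halts within $2 ^{q (\lvert x \rvert)}$ steps for a suitable polynomial~$q$, the simulation must realize exponentially many steps, accommodated by subdividing the relevant region into exponentially many pieces of width $2 ^{-O (q (\lvert x \rvert))}$. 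Crucially, to evaluate $g (t, y)$ to precision $2 ^{-m}$ the machine needs only to locate~$t$ among these pieces---reading $O (m + q (\lvert x \rvert))$ bits of~$t$---and to apply the purely local, easily computable transition to the configuration read off from~$y$; hence~$g$ is polynomial-time computable even though iterating it is not.

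The delicate points, and the place where the real work lies, are twofold. First, the discrete transition of~$M$ must be realized by a \emph{continuous}, indeed Lipschitz, function~$g$ whose solution never leaves $[-1, 1]$ (so that $g \in \dom \OpLipIVP$); this forces an encoding of configurations into real values that leaves room for~$g$ to interpolate smoothly between successive updates while keeping the increments bounded. Second, and most importantly, errors must not accumulate: over $2 ^{q (\lvert x \rvert)}$ simulated steps a naive scheme would amplify a tiny perturbation of the encoded configuration exponentially, corrupting the computation long before its result is read off. The construction must therefore be \emph{self-correcting}, so that the local transition maps a whole neighbourhood of each valid encoding back onto a neighbourhood of the correct successor, preventing per-step error from compounding. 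Exhibiting an encoding that simultaneously supports the Lipschitz bound, the polynomial-time evaluation of~$g$, the invariance of $[-1, 1]$, and this error-damping property is the technical heart of the argument; I expect the self-correcting error control to be the main obstacle, and it is exactly what the Lipschitz framework of~\cite{kawamura10:_lipsc_contin_ordin_differ_equat} is engineered to provide, sharpening the earlier non-Lipschitz construction of Ko~\cite{ko83:_comput_compl_of_ordin_differ_equat}.
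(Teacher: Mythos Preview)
Your sketch is a faithful outline of the direct construction in \cite{kawamura10:_lipsc_contin_ordin_differ_equat}, and it correctly isolates the two genuine difficulties (realizing a discrete transition by a Lipschitz vector field, and making the encoding self-correcting so that errors do not blow up over exponentially many steps). In that sense it is the ``right'' proof of the theorem as originally stated.

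However, the present paper does not reprove Theorem~\ref{theorem: ivp lower non-uniform} by this direct route. Instead it first establishes the \emph{uniform} statement, Theorem~\ref{theorem: ivp uniform}, that $\OpLipIVP$ is $(\deltalip,\deltabox)$-$\classFPSPACEtwo$-$\redtwom$-complete, and then obtains Theorem~\ref{theorem: ivp lower non-uniform} as a corollary via the general machinery of Lemmas~\ref{lemma: maps complete represented}.\ref{enumi: maps complete represented, functional} and~\ref{lemma: pspace-complete names}. The hardness half of Theorem~\ref{theorem: ivp uniform} does rest on a relativized version of the very construction you describe (packaged as Lemma~\ref{lemma: main lemma of paris}), so the technical core is shared; the difference is organizational. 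Your approach goes straight for the specific $g$ and is shorter if one only wants the non-uniform statement; the paper's approach costs more up front (one must set up the type-two classes, the reduction~$\redtwom$, and the representation~$\deltabox$, and then perform the reduction uniformly in the oracle~$\varphi$), but it yields the stronger operator-level completeness from which the non-uniform version drops out automatically.
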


We can derive from Theorem~\ref{theorem: ivp lower non-uniform}
a statement of the form similar to 
Theorem~\ref{theorem: convex hull lower non-uniform}: 

\begin{corollary}[{\cite[Corollary~3.3]{kawamura10:_lipsc_contin_ordin_differ_equat}}]
\label{corollary: ivp lower non-uniform, weak}
Unless $\classP = \classPSPACE$, 
there is a real function~$g \in \dom \OpLipIVP$ which is polynomial-time computable
but $\OpLipIVP (g)$ is not. 
\end{corollary}

\section{Using functions as names}
\label{section: main}

We present the main definitions for our framework here. 

As we have noticed, 
the limitations of the approach 
with the infinite sequences in $\varSigma ^\Nset$ 
have to do with the fact that 
they do not carry enough information, 
and in particular their size is not defined. 
We replace $\varSigma ^\Nset$ with 
$\Bset$, a class of string functions 
which we will use as names of real numbers, sets and functions%
\footnote{Ko's formulation~%
\cite{ko91:_comput_compl_of_real_funct} 
already uses string functions instead of infinite strings, 
but it does not take full advantage of this extension, 
because 
queries to these functions are mostly restricted to unary strings~$0 ^n$. 
}. 

In Section~\ref{subsection: regular functions}, 
we introduce the class $\Bset$ of regular functions 
and define what it means for a machine to compute 
a multi-function from $\Bset$ to $\Bset$. 
Section~\ref{subsection: second-order polynomial} 
defines what it means for such a machine to 
run in polynomial time or space, 
thus introducing several complexity classes of multi-functions on $\Bset$. 
In Section~\ref{subsection: reduction and completeness}, 
we define reductions between such multi-functions, and 
discuss the resulting notions of hardness. 
Section~\ref{subsection: representations} 
extends this theory of computation on $\Bset$ 
to that on other sets $X$
by using representations of $X$ by $\Bset$. 

\subsection{Computation on regular functions}
\label{subsection: regular functions}

We say that a (total) function $\varphi \tcolon \varSigma ^* \tto \varSigma ^*$ is 
\emph{regular}%
\footnote{(Note added in May~2013) After the publication of this article in a journal, some authors started using ``length-monotone'', which is perhaps a more informative and better terminology.}
if it preserves relative lengths of strings in the sense that 
$\lvert \varphi (u) \rvert \leq \lvert \varphi (v) \rvert$ 
whenever $\lvert u \rvert \leq \lvert v \rvert$. 
We write $\Bset$ for the set of all regular functions. 
For the rest of this paper, 
we will be discussing the complexity of 
multi-functions from $\Bset$ to $\Bset$. 
The motivation for considering regular functions (rather than
all functions from $\varSigma ^*$ to $\varSigma ^*$) will be explained in 
Section~\ref{subsection: second-order polynomial}
where we define their lengths. 

Now we begin replacing the role of $\varSigma ^\Nset$ 
(Section~\ref{subsection: TTE computability})
by $\Bset$. 
This is a generalization, 
because an infinite string $a _0 a _1 \ldots \in \varSigma ^\Nset$ can be 
identified with
a regular function~$\varphi \in \Bset$ that 
\begin{enumerate}
\renewcommand{\theenumi}{\textup{(\alph{enumi})}}
\renewcommand{\labelenumi}{\theenumi}
\item 
\label{enumi: predicate}
takes values of length $1$, and
\item
\label{enumi: unary}
depends only on the length of the argument, 
\end{enumerate}
by setting $\varphi (0 ^n) = a _n$. 
In the following, 
observe that Definitions
\ref{definition: oracle machine}.\ref{enumi: oracle machine, deterministic} and 
\ref{definition: bounded by second-order polynomial} 
extend their counterparts in this sense. 

It sometimes makes sense to stop the generalization halfway, 
removing \ref{enumi: unary} only
and keeping \ref{enumi: predicate}. 
Let $\Pset \subseteq \Bset$ be the set of $\{0, 1\}$-valued regular functions. 

Instead of the machine 
that worked on infinite strings, 
we use an oracle Turing machine (henceforth just ``machine'') 
to convert regular functions to regular functions
(Figure~\ref{figure: oracle machine}): 

\begin{figure}
\begin{center}
\includegraphics[scale=1.1]{./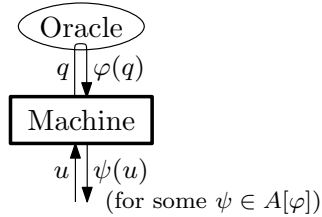}
\caption{A deterministic machine computing a multi-function $A$ from $\Bset$ to $\Bset$.}
\label{figure: oracle machine}
\end{center}
\end{figure}

\begin{definition}
\label{definition: oracle machine}
\begin{enumerate}
\item \label{enumi: oracle machine, deterministic}
A deterministic machine~$M$ 
\emph{computes} a multi-function $A$ from $\Bset$ to $\Bset$ 
if
for any $\varphi \in \dom A$, 
there is $\psi \in A [\varphi]$ such that 
$M$ on oracle~$\varphi$ and any string~$u$ outputs $\psi (u)$. 
\item
A nondeterministic machine~$M$ 
\emph{computes} a multi-function $A$ from $\Bset$ to $\Pset$ if 
for any $\varphi \in \dom A$, 
there is $\psi \in A [\varphi]$ such that 
$\psi (u) = 1$ if and only if 
$M$ on oracle~$\varphi$ and string~$u$ 
has at least one accepting computation path. 
\end{enumerate}
For the precise conventions for issuing and answering queries, 
follow any of 
\cite{mehlhorn76:_polyn_and_abstr_subrec_class, 
      kapron96:_new_charac_of_type_feasib, 
      ko91:_comput_compl_of_real_funct}.
\end{definition}

\subsection{Polynomial time and space}
\label{subsection: second-order polynomial}

Recall that regular functions are those that respect lengths
in the sense explained at the beginning of 
Section~\ref{subsection: regular functions}. 
In particular, they map strings of equal length
to strings of equal length. 
Therefore it makes sense to 
define the 
\emph{size}
$\lvert \varphi \rvert \tcolon \Nset \to \Nset$
of a regular function~$\varphi$ by 
$\lvert \varphi \rvert (\lvert u \rvert) = \lvert \varphi (u) \rvert$. 
It is a non-decreasing function from $\Nset$ to $\Nset$. 

Now we want to define what it means for a machine to run in polynomial time. 
Since $\lvert \varphi \rvert$ is a function, 
we begin by defining polynomials ``in'' a function, 
following the idea of Kapron and Cook~%
\cite{kapron96:_new_charac_of_type_feasib}. 
\emph{Second-order polynomials} 
(in type-$1$ variable~$\lit L$ and type-$0$ variable~$\lit n$) 
are defined 
inductively as follows: 
a positive integer is a second-order polynomial; 
the variable~$\lit n$ is also a second-order polynomial; 
if $P$ and $Q$ are second-order polynomials, 
then so are $P + Q$, $P \cdot Q$ and $\lit L (P)$. 
An example is 
\begin{equation}
 \label{equation: second-order polynomial example}
  \lit L \bigl( \lit L (\lit n \cdot \lit n) \bigr) + \lit L \bigl( \lit L (\lit n) \cdot \lit L (\lit n) \bigr) + \lit L (\lit n) + 4. 
\end{equation}
A second-order polynomial~$P$ specifies a 
function, which we also denote by $P$, 
that takes a function $L \tcolon \Nset \tto \Nset$
to another function $P (L) \tcolon \Nset \tto \Nset$
in the obvious way. 
For example, 
if $P$ is the above second-order polynomial~\eqref{equation: second-order polynomial example} 
and $L (x) = x ^2$, 
then $P (L)$ is given by
\begin{equation}
  P (L) (x) = \bigl( (x \cdot x) ^2 \bigr) ^2 + (x ^2 \cdot x ^2) ^2 + x ^2 + 4 
            = 2 \cdot x ^8 + x ^2 + 4. 
\end{equation}
As in this example, 
$P (L)$ is a (usual first-order) polynomial if $L$ is. 

\begin{definition}
 \label{definition: bounded by second-order polynomial}
A (deterministic or nondeterministic) machine~$M$
runs in (\emph{second-order}) \emph{polynomial time}
if there is a second-order polynomial~$P$ such that, 
given any $\varphi \in \Bset$ as oracle 
and any $u \in \varSigma ^*$ as input, 
$M$ halts within $P (\lvert \varphi \rvert) (\lvert u \rvert)$ steps
(regardless of the nondeterministic choices). 
Define \emph{polynomial space} analogously 
by counting the number of visited cells on all (input, work, output and query) tapes. 
\end{definition}

This extends 
Definition~\ref{definition: polynomial time stream machine}, 
because 
when $\varphi$ satisfies \ref{enumi: predicate} 
(of Section~\ref{subsection: regular functions}), 
the size $\lvert \varphi \rvert$ is constant, 
and the bound $P (\lvert \varphi \rvert) (\lvert u \rvert)$ 
reduces to a (first-order) polynomial in $\lvert u \rvert$. 

\begin{definition}
 \label{definition: functional classes}
\begin{enumerate}
\item
\label{enumi: functional classes functional}
We write $\classFPtwo$ (resp.\ $\classFPSPACEtwo$) 
for the class of 
multi-functions from $\Bset$ to $\Bset$ 
computed by a deterministic machine that 
runs in second-order polynomial time (resp.\ space). 
\item
We write $\classPtwo$ (resp.\ $\classNPtwo$) 
for the class of 
multi-functions from $\Bset$ to $\Pset$ 
computed by a deterministic (resp.\ nondeterministic)
machine~$M$ that runs in polynomial time. 
\end{enumerate}
\end{definition}

Note that unlike the type-one counterparts, 
it is easy to separate
$\classFPSPACEtwo$ from $\classFPtwo$
and $\classNPtwo$ from $\classPtwo$, 
because the former classes contain 
functions that 
depend on exponentially many values of the given oracle. 

It is also easy to see that 
these classes respect the corresponding type-one classes: 

\begin{lemma}
\label{lemma: maps computable}
\begin{enumerate}
\item \label{enumi: maps computable, functional}
Functions in $\classFPtwo$ (resp.\ $\classFPSPACEtwo$) 
map regular functions in $\classFP$ into $\classFP$
(resp.\ $\classFPSPACE$ into $\classFPSPACE$). 
\item \label{enumi: maps computable, predicate}
Functions in $\classPtwo$ (resp.\ $\classNPtwo$) 
map regular functions in $\classFP$ into $\classP$
(resp.\ $\classNP$). 
\end{enumerate}
\end{lemma}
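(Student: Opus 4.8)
The plan is to prove every part by the same simulation: compose the type-two oracle machine witnessing membership in the upper class with a type-one machine computing the oracle $\varphi$, answering each query to $\varphi$ on the fly. Fix $F \in \classFPtwo$ computed by a deterministic machine $M$ that halts within $P (\lvert \varphi \rvert) (\lvert u \rvert)$ steps for some second-order polynomial~$P$ (Definition~\ref{definition: bounded by second-order polynomial}), and fix a regular $\varphi \in \classFP$ computed by an ordinary polynomial-time machine~$N$. Let $\psi \in F [\varphi]$ be the regular function that $M$ outputs on oracle~$\varphi$. I would build a type-one machine~$M'$ that, on input~$u$, simulates $M$ with oracle~$\varphi$ and, whenever $M$ issues a query~$v$, pauses and runs~$N$ to write $\varphi (v)$ on the answer tape. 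Then $M'$ computes the regular function~$\psi$, and the whole task reduces to bounding its running time.

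Two elementary facts about second-order polynomials drive the bound. First, $P$ is monotone in its function argument among non-decreasing functions: since $P$ is built from $\lit n$, constants, $+$, $\cdot$ and substitution $\lit L (\cdot)$, replacing $L$ by a pointwise-larger non-decreasing $L'$ can only increase $P (L)$. Second, as already noted after \eqref{equation: second-order polynomial example}, $P (L)$ is an ordinary polynomial whenever $L$ is. The key observation is then that the size $\lvert \varphi \rvert$ of a regular function in $\classFP$ is itself bounded by a polynomial: if $N$ runs in time~$q$, then $\lvert \varphi (v) \rvert$, hence $\lvert \varphi \rvert (\lvert v \rvert)$, is at most $q (\lvert v \rvert)$. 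By monotonicity $P (\lvert \varphi \rvert) (\lvert u \rvert) \leq P (q) (\lvert u \rvert)$, a first-order polynomial in $\lvert u \rvert$. Every query~$v$ that $M$ writes then has $\lvert v \rvert \leq P (\lvert \varphi \rvert) (\lvert u \rvert)$, so $N$ answers it in time polynomial in $\lvert u \rvert$; and the number of queries is at most the running time of~$M$. Summing over queries gives a polynomial running time for~$M'$, so $\psi \in \classFP$.

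The predicate parts go through the same way. For $F \in \classPtwo$ (resp.\ $\classNPtwo$), the witnessing machine~$M$ is deterministic (resp.\ nondeterministic) and $\psi (u)$ is the single bit recording acceptance; weaving the deterministic polynomial-time computation of~$\varphi$ into~$M$ leaves the combined machine deterministic (resp.\ nondeterministic) and polynomial-time, so $\psi$ lies in $\classP$ (resp.\ $\classNP$). It is crucial here that answering a query only requires computing a fixed polynomial-time function, not consulting an $\classNP$ oracle, so the nondeterministic case stays within $\classNP$.

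The main obstacle is the space statement of part~\ref{enumi: maps computable, functional}. Here I cannot store the oracle answers, so $M'$ must recompute $\varphi (v)$ each time using~$N$ in its polynomial-space mode, reusing the same work space across the (possibly many) queries; the configuration of the paused simulation of~$M$ fits in the allotted space, and the answer tape holding $\varphi (v)$ is already counted against $M$'s space bound. The only genuinely delicate point is that this argument needs the size $\lvert \varphi \rvert$ of a regular $\varphi \in \classFPSPACE$ to be polynomially bounded---equivalently, that the output length of an $\classFPSPACE$ computation is polynomial---for otherwise $P (\lvert \varphi \rvert) (\lvert u \rvert)$ would not be polynomial in $\lvert u \rvert$ and the simulation would leave $\classFPSPACE$. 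I would check that the convention fixing $\classFPSPACE$ (counting the output tape, as in Definition~\ref{definition: polynomial time stream machine}) guarantees this, after which the space accounting mirrors the time accounting above.
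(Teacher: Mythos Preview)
Your argument is correct and is exactly the natural simulation argument one would expect. The paper itself does not give a proof of this lemma: it merely remarks that ``it is also easy to see'' and states the lemma without further justification. Your write-up supplies the details the authors left implicit, including the two points that actually require a moment's thought---monotonicity of second-order polynomials in the function argument, and the fact that $\lvert \varphi \rvert$ is bounded by a first-order polynomial when $\varphi \in \classFP$ (or $\classFPSPACE$, under the paper's convention of charging the output tape against the space bound). There is nothing to compare; your proof is the intended one.
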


\subsubsection*{Why we use only regular functions}

The idea of using second-order polynomials as a bound on time and space
comes from Kapron and Cook's characterization~%
\cite{kapron96:_new_charac_of_type_feasib}
of Mehlhorn's class~%
\cite{mehlhorn76:_polyn_and_abstr_subrec_class}
of \emph{polynomial-time computable operators}\footnote{%
  Kapron and Cook~%
\cite{kapron96:_new_charac_of_type_feasib}
  call them 
\emph{Basic Feasible Functionals}
  or 
\emph{Basic Polynomial-Time Functionals}. 
}. 
This is a class of (total) functionals $
F \tcolon (\varSigma ^* \tto \varSigma ^*) \times \varSigma ^* \tto \varSigma ^*
$, but they can be regarded as $
F \tcolon (\varSigma ^* \to \varSigma ^*) \to (\varSigma ^* \to \varSigma ^*)
$ by writing $F (\varphi) (x)$ for $F (\varphi, x)$. 
Kapron and Cook define the size of 
$\varphi \tcolon \varSigma ^* \tto \varSigma ^*$ by 
\begin{align}
\label{equation: size of functions}
 \lvert \varphi \rvert (n) & = \max _{\lvert u \rvert \leq n} \lvert \varphi (u) \rvert, 
&
 n \in \Nset. 
\end{align}
Note that our definition of size for regular $\varphi$ is a special case of this. 
They then defined the class of polynomial-time functionals
in the same way as Definition~\ref{definition: functional classes}.\ref{enumi: functional classes functional}. 
(We added $\classFPSPACEtwo$ by analogy.)

We have restricted attention to regular functions. 
This is because, 
in order to obtain reasonable complexity notions, 
it seems necessary for a machine to be able to tell 
when the time bound is reached. 
Note that for usual (type-one) computation, 
it was easy to find $\lvert x \rvert$ given $x$, 
and thus to clock the machine with the time bound $p (\lvert x \rvert)$ 
for a fixed polynomial~$p$. 
In contrast, 
finding the value \eqref{equation: size of functions} for a given $\varphi$ 
in general
requires exponentially many queries to $\varphi$. 
For regular $\varphi$, 
we can easily find $\lvert \varphi \rvert (n)$ for each $n$, 
and thus the second-order polynomial $P (\lvert \varphi \rvert) (\lvert u \rvert)$
is a bound ``time-constructible'' from $\varphi$ and $u$. 

Imposing regularity is hardly a restriction for our purpose, 
because our intention is to use these functions as names of 
real numbers, sets and functions, 
and we can simply require that 
valid names are those that have been ``padded'' to be regular. 
More precisely, 
there is an efficient machine 
that takes as oracles a possibly irregular function~$\varphi'$ and 
a regular function~$\psi$ dominating its length (i.e., 
$\lvert \varphi' (u) \rvert \leq \lvert \psi \rvert (\lvert u \rvert)$ 
for any string~$u$), 
and delivers a regular function $\varphi$ 
such that $\lvert \varphi \rvert = \lvert \psi \rvert$ 
and $\varphi (u) = \varphi' (u) \# \# \ldots \#$. 
Thus we use $\varphi$, instead of $\varphi'$, as the name. 
In many situations we can find such a dominating function~$\psi$. 

For later use
we define the pairing function for regular functions as follows
(we have been and will be using the tupling functions for strings, 
which we do not write explicitly): 
for $\varphi$, $\psi \in \Bset$, 
define 
$\langle \varphi, \psi \rangle \in \Bset$
by setting 
$\langle \varphi, \psi \rangle (0 u) = \varphi (u) 1 0 ^{\lvert \psi (u) \rvert}$ and 
$\langle \varphi, \psi \rangle (1 u) = \psi (u) 1 0 ^{\lvert \varphi (u) \rvert}$
(we pad the strings to make $\langle \varphi, \psi \rangle$ regular). 
Let $
 \langle \varphi, \psi, \theta \rangle 
=
 \langle \langle \varphi, \psi \rangle, \theta \rangle
$, etc. 

\subsection{Reduction and completeness}
\label{subsection: reduction and completeness}

Here we define reductions between multi-functions on $\Bset$ and 
discuss hardness with respect to these reductions. 

\subsubsection{Reductions}

Recall that the usual many-one reduction between
multi-functions $A$, $B$ from $\varSigma ^*$ to $\varSigma ^*$ is defined as follows: 
we say that $A$ many-one reduces to $B$ (written $A \redonem B$)
if there are (total) functions $r$, $t \in \classFP$ such that 
for any $u \in \dom A$, 
we have $r (u, v) \in A [u]$ whenever $v \in B [t (u)]$---that is, 
we have a function $t$ that converts an input for $A$ to an input for $B$, 
and another function $r$ that converts an output of $B$ to an output of $A$
(Figure~\ref{figure: type-one reductions}, left). 
\begin{figure}
\begin{center}\small
\hfill
\parbox[t]{100pt}{\centering\includegraphics[scale=1.1]{./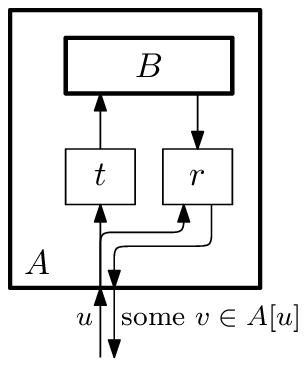}\\[3pt]$A \redonem B$}
\hfill
\parbox[t]{100pt}{\centering\includegraphics[scale=1.1]{./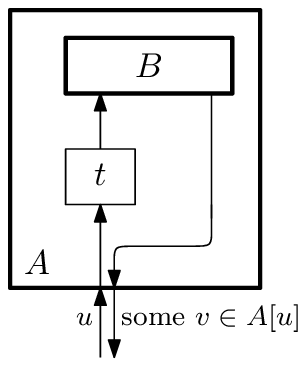}\\[3pt]$A \redonep B$}
\hfill
\parbox[t]{100pt}{\centering\includegraphics[scale=1.1]{./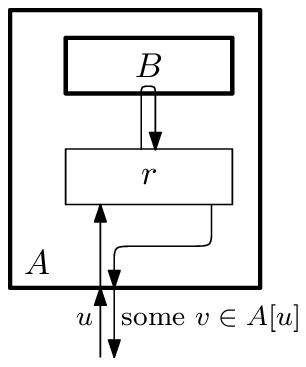}\\[3pt]$A \redoneT B$}
\hfill\mbox{}
\caption{Reductions between multi-functions $A$, $B$ on $\varSigma^*$.}
\label{figure: type-one reductions}
\end{center}
\end{figure}
The many-one reduction~$\mathord\redonep$ between predicates
is defined as the special case 
where we do not convert the output, i.e., 
$r (u, v) = v$ 
(Figure~\ref{figure: type-one reductions}, middle). 
Since multi-functions over $\Bset$ also get a function as input, 
the analogous definition of reductions 
involves one more converter~$s$: 

\begin{definition}
\label{definition: many-one reduction}
\begin{enumerate}
\item
Let $A$ and $B$ be multi-functions from $\Bset$ to $\Bset$. 
We say that $A$
\emph{many-one reduces} 
to $B$ 
(written $A \redtwom B$) if 
there are functions $r$, $s$, $t \in \classFPtwo$ such that 
for any $\varphi \in \dom A$, 
we have $s (\varphi) \in \dom B$ and, 
for any $\theta \in B [s (\varphi)]$, 
the function that maps each string~$x$ to 
$r (\varphi) (x, \theta (t (\varphi) (x)))$
is in $A [\varphi]$ 
(Figure~\ref{figure: intermediate many-one reduction}, 
\begin{figure}
\begin{center}\small
\hfill
\parbox[t]{150pt}{\centering\includegraphics[scale=1.1]{./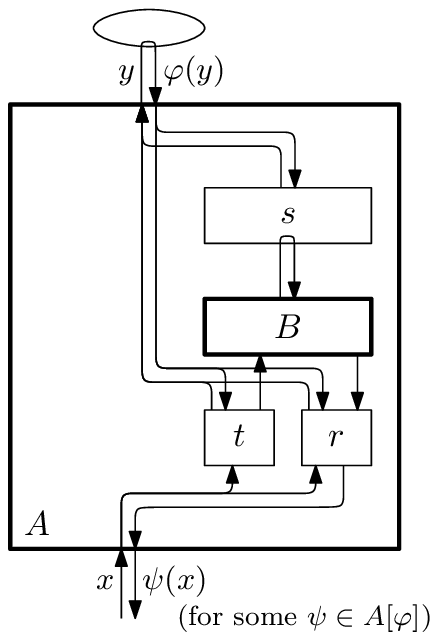}\\[3pt]$A \redtwom B$}
\hfill
\parbox[t]{150pt}{\centering\includegraphics[scale=1.1]{./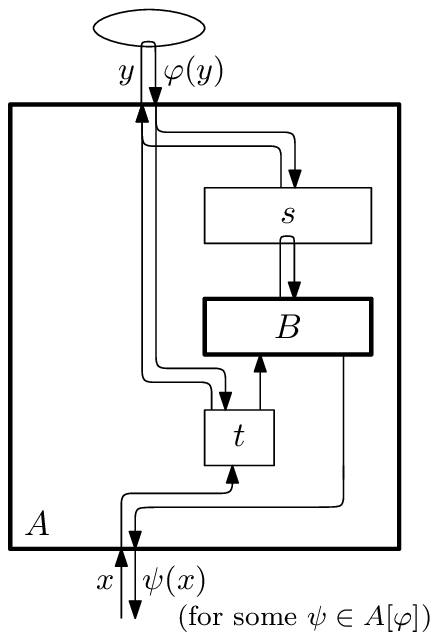}\\[3pt]$A \redtwop B$}
\hfill\mbox{}
\\[12pt]
\hfill
\parbox[t]{150pt}{\centering\includegraphics[scale=1.1]{./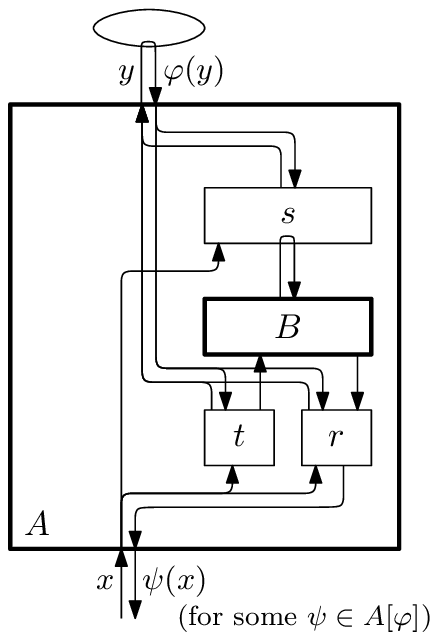}\\[3pt]$A$ many-one reduces to $B$\\in the sense of \cite{beame98:_relat_compl_of_np_searc_probl}}
\hfill
\parbox[t]{150pt}{\centering\includegraphics[scale=1.1]{./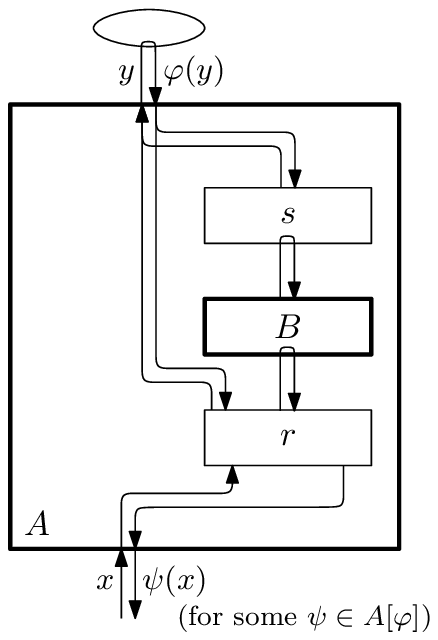}\\[3pt]$A \redtwoT B$}
\hfill\mbox{}
\caption{Reductions between multi-functions on $\Bset$.}
\label{figure: intermediate many-one reduction}
\end{center}
\end{figure}
top left). 
\item \label{enumi: many-one reduction, predicate}
Let $A$ and $B$ be multi-functions from $\Bset$ to $\Pset$. 
We write $A \redtwop B$ if 
there are functions $s$, $t \in \classFPtwo$ such that 
for any $\varphi \in \dom A$, we have $s (\varphi) \in \dom B$ and, 
for any $\theta \in B [s (\varphi)]$, the function
$\theta \circ t (\varphi)$ is in $A [\varphi]$ 
(Figure~\ref{figure: intermediate many-one reduction}, top right). 
\end{enumerate}
\end{definition}

The design of these reductions is somewhat arbitrary. 
We chose them simply because 
they are strong enough for our hardness results
(Theorems \ref{theorem: convex hull uniform} 
and \ref{theorem: ivp uniform}).
What 
Beame et al.~\cite{beame98:_relat_compl_of_np_searc_probl}
call the ``many-one reduction'' between type-two functions
is slightly stronger than ours in that 
it passes the string input $u$ 
not only to $t$ and $r$
but also to $s$
(Figure~\ref{figure: intermediate many-one reduction}, bottom left). 
See the comment after Lemma~\ref{lemma: maps complete} 
for the reason we did not choose this definition. 

Another reasonable notion of reduction is 
the one on the bottom right of Figure~\ref{figure: intermediate many-one reduction}: 

\addtocounter{theorem}{-1}
\begin{definition}[continued]
\begin{enumerate}
\addtocounter{enumi}{2}
\item \label{enumi: Weihrauch reduction}
Let $A$ and $B$ be multi-functions from $\Bset$ to $\Bset$ (or to $\Pset$). 
We say that $A$ \emph{Weihrauch reduces} to $B$ (written $A \redtwoT B$) if 
there are functions $r$, $s \in \classFPtwo$ such that 
for any $\varphi \in \dom A$, 
we have $r (\langle \varphi, \psi \rangle) \in A [\varphi]$ 
whenever $\psi \in B [s (\varphi)]$. 
\end{enumerate}
\end{definition}

This is a polynomial-time version of the continuous reduction used by 
Weihrauch~%
\cite{weihrauch92:_degrees_discon_some_transl_betwee}
to compare the degrees of discontinuity
of translators between real number representations
(see also Brattka and Gherardi~%
\cite{brattka11}). 
Note that, while this reduction is 
somewhat analogous to the type-one Turing reduction 
(Figure~\ref{figure: type-one reductions}, right), 
it also formally resembles the definition of $\redonem$. 
The many-one reduction $\redtwom$ is the special case of 
this reduction $\redtwoT$
where $r$ can query $\psi$ only once. 
Beame et al.~\cite{beame98:_relat_compl_of_np_searc_probl} 
define an even stronger ``Turing reduction''. 

In this paper, we will formulate our hardness results mainly using 
$\redtwom$ and $\redtwop$ 
because they give stronger results than $\redtwoT$ would. 
The advantage and disadvantage of this choice
will be discussed 
in Section~\ref{subsection: applications, real functions}
after Theorem~\ref{theorem: ivp uniform} 
and Corollary~\ref{corollary: ivp uniform, strong reduction}. 

\subsubsection{Hardness}

Now that we have the classes 
(Definition~\ref{definition: functional classes}) and 
reductions 
(Definition \ref{definition: many-one reduction}), 
we can talk about hardness. 
A multi-function~$B$ from $\Bset$ to $\Bset$ is 
\emph{$\classFPSPACEtwo$-$\redtwom$-hard} if 
$A \redtwom B$ for every $A \in \classFPSPACEtwo$.
It is said to be 
\emph{$\classFPSPACEtwo$-$\redtwom$-complete}
if moreover it is in $\classFPSPACEtwo$. 
We define $\classNPtwo$-$\redtwop$-hardness of 
multi-functions from $\Bset$ to $\Pset$ similarly
(note that $\classNPtwo$ is not closed under $\redtwom$). 

The following lemma states roughly that 
a $\classtwofont C$-$\leq ^2$-hard multi-function~$B$
maps some function $\psi \in \classFP \cap \Bset$ to a 
$\classfont C$-$\leq ^1$-hard function, 
where $\classfont C$ and $\leq ^1$ are the 
type-one versions of 
the class $\classtwofont C$ and the reduction $\leq ^2$. 
But since $B [\psi]$ may consist of more than one function, 
we need to assert hardness for 
the multi-function $\bigcup (B [\psi])$ 
defined as follows: 
for a nonempty set $F$ of (single-valued total) functions from $X$ to $Y$, 
we write 
$\bigcup F$ to mean the multi-function from $X$ to $Y$
defined by $(\bigcup F) [x] = \{\, f (x) : f \in F \,\}$. 
Saying that the multi-function $\bigcup F$ is hard is 
a stronger claim than saying that each of the functions in $F$ is hard, 
because the former requires that one reduction work for all functions in $F$. 
We need to state the following lemma in this stronger form 
in order to derive Lemma~\ref{lemma: maps complete represented} later. 

\begin{lemma}
\label{lemma: maps complete}
\begin{enumerate}
\item
\label{enumi: maps complete, functional}
Let $B$ be an $\classFPSPACEtwo$-$\redtwom$-complete 
multi-function from $\Bset$ to $\Bset$. 
Then there is $\psi \in \dom B \cap \classFP$ such that 
$\bigcup (B [\psi])$ is $\classFPSPACE$-$\redonem$-complete. 
\item
\label{enumi: maps complete, predicate}
Let $B$ be an $\classNPtwo$-$\redtwop$-complete 
multi-function from $\Bset$ to $\Pset$. 
Then there is $\psi \in \dom B \cap \classFP$ such that 
$\bigcup (B [\psi])$ is $\classNP$-$\redonep$-complete. 
\end{enumerate}
\end{lemma}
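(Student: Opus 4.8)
The plan is to handle both parts by a single template: embed a fixed type-one complete problem $A_0$ as a \emph{constant} type-two problem $A$ that ignores its oracle, invoke the completeness of $B$ to obtain a type-two reduction onto $B$, and then specialize the converters of that reduction at one cheap oracle $\varphi _0 \in \classFP \cap \Bset$ (say the constant empty-string function, which is regular and trivially in $\classFP$) to read off a type-one reduction onto $\bigcup (B [\psi])$, where $\psi = s (\varphi _0)$.

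I would do part~\ref{enumi: maps complete, predicate} first. Fix an $\classNP$-$\redonep$-complete predicate $A_0 \tcolon \varSigma ^* \tto \{0, 1\}$; since all its values have length~$1$ it is automatically regular, so $A_0 \in \Pset$. Define $A$ from $\Bset$ to $\Pset$ by $A [\varphi] = \{A_0\}$ for every $\varphi$. An $\classNP$-machine for $A_0$ that ignores its oracle witnesses $A \in \classNPtwo$, since its first-order time bound is in particular a second-order polynomial. By completeness of $B$ we get $s, t \in \classFPtwo$ with $A \redtwop B$ as in Definition~\ref{definition: many-one reduction}.\ref{enumi: many-one reduction, predicate}. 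Set $\psi = s (\varphi _0)$ and $t_0 = t (\varphi _0)$; by Lemma~\ref{lemma: maps computable}.\ref{enumi: maps computable, functional} both lie in $\classFP$, and $\psi \in \dom B$. The reduction forces $\theta \circ t_0 = A_0$ for \emph{every} $\theta \in B [\psi]$, so every value in $(\bigcup (B [\psi])) [t_0 (u)]$ equals $A_0 (u)$; hence $t_0$ witnesses $A_0 \redonep \bigcup (B [\psi])$, giving $\redonep$-hardness by transitivity. For membership, Lemma~\ref{lemma: maps computable}.\ref{enumi: maps computable, predicate} applied to $\psi$ and the machine computing $B$ yields a specific $\theta _\psi \in B [\psi] \cap \classNP$, and computing $\theta _\psi$ is a legal way to compute $\bigcup (B [\psi])$, so $\bigcup (B [\psi]) \in \classNP$, completing this part.

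Part~\ref{enumi: maps complete, functional} follows the same template with $\redtwom$ (Definition~\ref{definition: many-one reduction}), whose extra output-converter $r$ gives, after specializing at $\varphi _0$, functions $r_0 = r (\varphi _0)$ and $t_0 = t (\varphi _0)$ in $\classFP$ such that $x \mapsto r_0 (x, \theta (t_0 (x)))$ equals the target for every $\theta \in B [\psi]$; these are exactly the two $\classFP$ converters of a $\redonem$-reduction. The one genuine wrinkle, and the step I expect to be the main obstacle, is that a $\classFPSPACE$-$\redonem$-complete function $A_0 \tcolon \varSigma ^* \tto \varSigma ^*$ need not be regular, so it cannot serve directly as the singleton output of a multi-function into $\Bset$. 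I would fix this by padding: replace $A_0$ by a regular $\tilde A _0 \in \Bset \cap \classFPSPACE$ with $\tilde A _0 (u) = A_0 (u) \# 0 ^{q (\lvert u \rvert) - \lvert A_0 (u) \rvert}$, where $q$ bounds the (polynomially bounded) output length of $A_0$, and set $A [\varphi] = \{\tilde A _0\}$. A fixed decoder $d \in \classFP$ with $A_0 = d \circ \tilde A _0$ is then folded into the output converter $r_0$, and the rest proceeds as above, with membership using Lemma~\ref{lemma: maps computable}.\ref{enumi: maps computable, functional} in place of its predicate analogue.

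The conceptual point that makes everything work is that $A [\varphi _0]$ is a \emph{singleton}, which forces all $\theta \in B [\psi]$ to be routed by the \emph{same} converters to the \emph{same} type-one function; this uniformity across $B [\psi]$ is precisely what upgrades ``each function in $B [\psi]$ is hard'' to ``the multi-function $\bigcup (B [\psi])$ is hard'', as the lemma demands.
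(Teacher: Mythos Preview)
Your proof is correct and follows essentially the same route as the paper's: pick an $A$ in the relevant type-two class whose value at some $\varphi \in \classFP \cap \Bset$ is a type-one complete problem, invoke the type-two reduction to $B$, and specialize the converters at $\varphi$ using Lemma~\ref{lemma: maps computable} to obtain the type-one reduction. Your version is more explicit than the paper's on two points the paper leaves implicit---constructing $A$ concretely as the constant functional (together with the regularity padding for the $\classFPSPACE$ case) and spelling out the membership argument for $\bigcup (B [\psi])$---but the underlying idea is the same.
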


\begin{proof}
We only prove the first claim (the second claim is similar). 
There is a function $A \in \classFPSPACEtwo$ 
that maps some function $\varphi \in \classFP \cap \Bset$ to
an $\classFPSPACE$-$\redonem$-complete function. 
Since $A \redtwom B$, 
there are functions $r$, $s$, $t \in \classFPtwo$ 
as in Definition~\ref{definition: many-one reduction}. 
By Lemma~\ref{lemma: maps computable}, 
we have $r (\varphi)$, $s (\varphi)$, $t (\varphi) \in \classFP$. 
Let $\psi = s (\varphi)$. 
Since 
$r (\varphi)$ and $t (\varphi)$ give a reduction 
$A (\varphi) \redonem \bigcup (B [\psi])$, and 
$A (\varphi)$ is $\classFPSPACE$-$\redonem$-complete, 
$\bigcup (B [\psi])$ is also $\classFPSPACE$-$\redonem$-complete. 
\end{proof}

We note that Lemma~\ref{lemma: maps complete}
would not have been true, 
if in the definition of reductions 
we had fed $s$ with the string input 
as Beame et al.~\cite{beame98:_relat_compl_of_np_searc_probl} do
(see the comment after Definition~\ref{definition: many-one reduction}). 
For let $\leq ^2 _*$ be the reduction 
which is like $\leq ^2$ but 
feeds $s$ with the string input, 
and let $B$ be a $\classtwofont C$-$\leq ^2$-complete multi-function. 
Then the multi-function~$B'$ defined by 
\newcommand{\const}{\mathrm{const}}
\begin{align}
  \dom B' 
& 
 = 
  \{\, 
   \langle \const _u, \varphi \rangle 
  :
   u \in \varSigma ^*, \ \varphi \in \dom B 
  \,\}, 
\\
  B' [\langle \const _u, \varphi \rangle] 
& 
 =
  \{\, \const _{\psi (u)} : \psi \in B [\varphi] \,\}, 
\end{align}
where 
$\const _u \in \Bset$ denotes the constant function with value~$u$, 
is $\classtwofont C$-$\leq ^2 _*$-complete 
by the modified reduction where 
``$s$ does the job that $t$ used to do''. 
Yet each one of the values of $B'$ is a constant function. 

\subsubsection{Some complete problems}

Let $\classPSPACEtwo$ be the subclass of $\classFPSPACEtwo$ consisting of 
multi-functions from $\Bset$ to $\Pset$.
Here we list some 
$\classNPtwo$- and $\classPSPACEtwo$-$\redtwop$-complete problems. 
Their completeness can be proved by relativizing 
the well-known $\classNP$- and $\classPSPACE$-$\redonep$-completeness
in a straightforward way. 

We begin with $\classNPtwo$-$\redtwop$-complete problems. 
For a non-decreasing function $\mu \tcolon \Nset \tto \Nset$, 
define $\overline \mu \in \Bset$ by 
$\overline \mu (u) = 0 ^{\mu (\lvert u \rvert)}$. 
A \emph{Boolean formula involving predicate symbol $\lit p$}
is an expression built up inductively from 
Boolean variables $a _1$, $a _2$, \ldots 
using the connectives
$f _1 \land f _2$, $f _1 \lor f _2$, $\lnot f _1$ and 
$\lit p (f _1, \ldots, f _n)$ (the arity~$n$ can vary) for 
any previously obtained formulas $f _1$, $f _2$, \ldots. 

\begin{lemma}
\label{lemma: nptwo-copmlete}
The following partial functions 
$\textsc{ntime} ^2$, $\textsc{exist} ^2$ and $\textsc{sat} ^2$
from $\Bset$ to $\Pset$
are $\classNPtwo$-$\redtwop$-complete: 
\begin{itemize}
\item 
$\dom \textsc{ntime} ^2$ consists of 
all triples $\langle M, \overline \mu, \varphi \rangle$ such that 
$M$ is a (program of a) nondeterministic (oracle Turing) machine, 
$\mu \tcolon \Nset \tto \Nset$ is non-decreasing, 
$\varphi \in \Bset$, 
and for any string~$u$, 
all computation paths of $M ^\varphi (u)$ 
halt in time $\mu (\lvert u \rvert)$
(this $M$ is a string, so we encode it as the constant function 
taking this string as value). 
For any such triple and a string $u$, 
we have $
 \textsc{ntime} ^2 (\langle M, \overline \mu, \varphi \rangle) (u) 
=
 1
$ if and only if $M ^\varphi (u)$ has an accepting path. 
\item
$\dom \textsc{exist} ^2 = \Pset$. 
For any $p \in \Pset$, $u \in \varSigma ^*$ and $n \in \Nset$, 
we have $
 \textsc{exist} ^2 (p) (u, \allowbreak 0 ^n) 
=
 1
$ if and only if there is a string $v$ of length~$n$ with $
 p (u, v) 
= 
 1
$. 
\item
$\dom \textsc{sat} ^2 = \Pset$. 
For any $p \in \Pset$ and any string~$u$, 
we have $\textsc{sat} ^2 (p) (u) = 1$ if and only if 
$u$ is a Boolean formula involving a predicate symbol~$\lit p$
and it is made satisfiable when $\lit p$ is interpreted as $p$. 
\end{itemize}
\end{lemma}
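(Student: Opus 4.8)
The statement asserts $\classNPtwo$-$\redtwop$-completeness of three problems, which by definition means each lies in $\classNPtwo$ and each is $\classNPtwo$-$\redtwop$-hard. The overall strategy is to exploit the hint in the preamble: each claim is the relativization of a classical type-one fact. So for membership I would relativize the obvious $\classNP$ upper bounds, and for hardness I would relativize the classical many-one completeness proofs (Cook--Levin and the $\classNP$-completeness of the generic bounded-nondeterministic-computation problem), taking care that the type-one reductions lift to reductions computed by machines in $\classFPtwo$ and that second-order polynomial time bounds are respected throughout.

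First I would establish membership for all three. For $\textsc{ntime} ^2$, on oracle $\langle M, \overline\mu, \varphi \rangle$ and input $u$, the nondeterministic machine simply simulates $M^\varphi(u)$; it can read off the time budget $\mu(\lvert u\rvert)$ by querying $\overline\mu$ (whose value $0^{\mu(\lvert u\rvert)}$ has the right length, so one probes its size via the regularity of $\overline\mu$), and the whole simulation runs in time second-order polynomial in $\lvert\langle M,\overline\mu,\varphi\rangle\rvert$ and $\lvert u\rvert$. For $\textsc{exist} ^2$, on oracle $p \in \Pset$ and input $(u,0^n)$, the machine nondeterministically guesses $v$ of length $n$ and accepts iff $p(u,v)=1$, checking the latter with a single oracle query; the guess and query cost is polynomial in $n$ and in $\lvert p\rvert$ applied to the relevant argument lengths. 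For $\textsc{sat} ^2$, the machine nondeterministically guesses an assignment to the variables of the formula $u$ and evaluates $u$ bottom-up, making one oracle query to $p$ for each occurrence of $\lit p$; the number of queries is linear in $\lvert u\rvert$ and each query argument has length bounded polynomially, so the time is second-order polynomial. In each case I would verify that the guesses are short enough that the machine respects some fixed $P(\lvert\varphi\rvert)(\lvert u\rvert)$ bound regardless of nondeterministic choices.

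Next, hardness. Here I would use $\textsc{ntime} ^2$ as the backbone, since it is essentially the universal $\classNPtwo$ problem. Given an arbitrary $A \in \classNPtwo$ computed by a nondeterministic machine $N$ running in second-order polynomial time $P$, I reduce $A \redtwop \textsc{ntime} ^2$: the map $s$ sends $\varphi$ to $\langle \const_N, \overline{P(\lvert\varphi\rvert)}, \varphi\rangle$ (packaging the fixed program $N$, the time-bound function $n \mapsto P(\lvert\varphi\rvert)(n)$ presented as a regular function, and the oracle $\varphi$ itself), and $t$ is the identity on the string input $u$; then $\textsc{ntime} ^2(s(\varphi))(u)=1$ iff $N^\varphi(u)$ has an accepting path iff $A(\varphi)(u)=1$, so $\theta \circ t(\varphi) \in A[\varphi]$ as required. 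The functions $s,t$ are in $\classFPtwo$ because $P(\lvert\varphi\rvert)$ is time-constructible from $\varphi$ (this is exactly the payoff of the regularity restriction discussed in Section~\ref{subsection: second-order polynomial}). For $\textsc{exist} ^2$ and $\textsc{sat} ^2$ I would reduce from $\textsc{ntime} ^2$ by relativizing the classical chain: an accepting computation of $M^\varphi$ within time $\mu(\lvert u\rvert)$ is witnessed by a polynomially long nondeterministic guess, which gives $\textsc{ntime} ^2 \redtwop \textsc{exist} ^2$ after folding the deterministic verification (including the oracle calls to $\varphi$) into a single predicate; and the Cook--Levin tableau encoding of that verification, now with extra clauses forcing the $\lit p$-gates to agree with $\varphi$, gives $\textsc{ntime} ^2 \redtwop \textsc{sat} ^2$.

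The main obstacle I anticipate is bookkeeping in the Cook--Levin relativization for $\textsc{sat} ^2$: classically one builds a fixed-size propositional formula, but here the verification circuit issues oracle queries to $\varphi$ whose answers are not known at reduction time, so the reduction must emit the formula with uninterpreted $\lit p$-gates standing for those queries and arrange that $\lit p$, interpreted as $\varphi$, supplies exactly the right bits. Making this encoding length-monotone and confirming that the map producing the formula string from $\langle M,\overline\mu,\varphi\rangle$ and $u$ lies in $\classFPtwo$—in particular that the formula's size is second-order polynomial and that reading off $\mu(\lvert u\rvert)$ from $\overline\mu$ stays within budget—is the delicate part. Everything else is a routine lift of the standard arguments, which is precisely why the lemma can be stated as being provable "by relativizing the well-known completeness in a straightforward way."
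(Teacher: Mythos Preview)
Your proposal is correct and matches the paper's approach: the paper gives no proof of this lemma beyond the remark that ``completeness can be proved by relativizing the well-known $\classNP$- and $\classPSPACE$-$\redonep$-completeness in a straightforward way,'' and your plan carries out precisely that relativization, with the expected care about second-order polynomial bounds and the regularity-based time-constructibility of $P(\lvert\varphi\rvert)$. The one point worth flagging is that in the reduction to $\textsc{sat}^2$ (and already to $\textsc{exist}^2$) the oracle $\varphi$ in $\langle M,\overline\mu,\varphi\rangle$ need not lie in $\Pset$, so $s$ must first pass to a bit-predicate encoding of $\varphi$; you allude to this in your discussion of the $\lit p$-gates, but make sure the encoding is spelled out.
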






If $\varphi$ is a Boolean formula involving predicate symbol $\lit p$, 
then an expression of the form 
\begin{equation}
 Q _1 a _1 \ldotp Q _2 a _2 \ldotp Q _3 a _3 \ldots Q _k a _k \ldotp
 \varphi (a _1, \ldots, a _k), 
\end{equation}
where each $Q _i$ is either $\forall$ or $\exists$, 
is called a \emph{quantified Boolean formula involving predicate symbol~$\lit p$}. 
Its truth value is determined in the obvious way 
relative to the predicate to be substituted into $\lit p$.  

\begin{lemma}
\label{lemma: pspacetwo-complete}
The following partial functions 
$\textsc{space} ^2$, $\textsc{power} ^2$, $\textsc{qbf} ^2$
from $\Bset$ to $\Pset$
are $\classPSPACEtwo$-$\redtwop$-complete: 
\begin{itemize}
\item
$\dom \textsc{space} ^2$ consists of 
all triples $\langle M, \overline \mu, \varphi \rangle$ such that 
$M$ is a (program of a) deterministic (oracle Turing) machine, 
$\mu \tcolon \Nset \tto \Nset$ is non-decreasing, 
$\varphi \in \Bset$, 
and for any string~$u$, 
the computation $M ^\varphi (u)$
uses no more than $\mu (\lvert u \rvert)$ tape cells 
and either accepts or rejects 
(this $M$ is a string, so we encode it as the constant function 
taking this string as value). 
For any such triple and a string $u$, 
we have $
 \textsc{space} ^2 (\langle M, \overline \mu, \varphi \rangle) (u) 
=
 1
$ if and only if $M ^\varphi (u)$ accepts. 
\item
$\dom \textsc{power} ^2$ consists of all 
$f \in \Bset$ that are length-preserving
(i.e., $\lvert f \rvert = \id$). 
For any such $f$ and a string $u$, 
we have $
 \textsc{power} ^2 (f) (u) 
=
 1
$ if and only if $
 f ^{2 ^{\lvert u \rvert}} (u) 
= 
 0 ^{\lvert u \rvert}
$, 
where 
we write $f ^k$ for $f$ iterated $k$ times. 
\item 
$\dom \textsc{qbf} ^2 = \Pset$. 
For any $p \in \Pset$ and any string~$u$, 
we have $\textsc{qbf} ^2 (p)(u) = 1$ if and only if 
$u$ is a quantified Boolean formula involving a predicate symbol
and it is made true by $p$. 
\end{itemize}
\end{lemma}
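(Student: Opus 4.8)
The plan is to prove, for each of the three partial functions, both membership in $\classPSPACEtwo$ and $\classPSPACEtwo$-$\redtwop$-hardness. Since $\redtwop$ is transitive (compose the converters $s$, $t$ of two reductions, using that $\classFPtwo$ is closed under composition), I would establish hardness directly only for $\textsc{space} ^2$ and then exhibit $\textsc{space} ^2 \redtwop \textsc{qbf} ^2$ and $\textsc{space} ^2 \redtwop \textsc{power} ^2$. Each of these is the relativization to oracle machines over $\Bset$ of a textbook $\classPSPACE$-$\redonep$-completeness argument: $\textsc{space} ^2$ is the canonical space-bounded acceptance problem, $\textsc{qbf} ^2$ plays the role of $\mathrm{TQBF}$, and $\textsc{power} ^2$ that of reachability in an exponentially large, succinctly presented configuration graph.

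For membership I would simulate directly in second-order polynomial space. For $\textsc{space} ^2$, given oracle $\langle M, \overline \mu, \varphi \rangle$ and input $u$, read $\mu (\lvert u \rvert) = \lvert \overline \mu (u) \rvert$ by one query and simulate $M ^\varphi (u)$ in $\mu (\lvert u \rvert)$ cells, answering $M$'s queries by querying $\varphi$ and storing answers of length at most $\lvert \varphi \rvert (\mu (\lvert u \rvert))$; every quantity is dominated by a second-order polynomial in the size of the paired oracle. For $\textsc{power} ^2$, iterate $v \mapsto f (v)$ from $v = u$ for $2 ^{\lvert u \rvert}$ steps using an $\lvert u \rvert$-bit counter and, since $f$ is length-preserving, an $\lvert u \rvert$-cell register, then test $v = 0 ^{\lvert u \rvert}$. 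For $\textsc{qbf} ^2$, evaluate the quantified formula by the usual depth-first recursion over the quantifier prefix in space linear in the formula length, resolving each subformula $\lit p (f _1, \ldots, f _n)$ by recursively evaluating the $f _i$ to bits and issuing a single query to the oracle $p$.

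For hardness of $\textsc{space} ^2$, let $A \in \classPSPACEtwo$ be computed by a deterministic machine $N$ within space $P (\lvert \varphi \rvert)$ for a second-order polynomial $P$. I would put $s (\varphi) = \langle N, \overline{\mu _\varphi}, \varphi \rangle$ with $\mu _\varphi = P (\lvert \varphi \rvert)$ (produced in $\classFPtwo$ by evaluating $P$ through queries to $\varphi$) and let $t (\varphi)$ be the identity; then $\textsc{space} ^2 (s (\varphi)) = A (\varphi)$ by construction. For $\textsc{qbf} ^2$, the converter $s (\varphi)$ outputs the predicate returning the requested bit of the requested value of $\varphi$, while $t (\varphi) (x)$ outputs the quantified Boolean formula (in the symbol $\lit p$) asserting that $N ^\varphi$ accepts $x$, built by the standard recursive-halving encoding of the configuration graph with each needed oracle bit written as an application of $\lit p$ to the encoded query and bit index; interpreting $\lit p$ by $s (\varphi)$ then makes the formula true exactly when $N ^\varphi$ accepts. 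For $\textsc{power} ^2$, $s (\varphi)$ outputs the length-preserving one-step transition function of $N ^\varphi$ on configurations padded to a length $\ell$ with $2 ^\ell$ exceeding the number of configurations (so the computation halts within $2 ^\ell$ steps), with accepting halting configurations collapsed to $0 ^\ell$ and all other halting configurations to a fixed nonzero sink; $t (\varphi) (x)$ outputs the padded initial configuration, so that $f ^{2 ^{\lvert u \rvert}} (u) = 0 ^{\lvert u \rvert}$ iff $N ^\varphi$ accepts.

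The main obstacle is the faithful, size-controlled handling of the oracle inside the relativized reductions. In the unrelativized proofs the transition relation (or the base clauses of the formula) is a fixed finite gadget, whereas here a single computation step may involve a query to $\varphi$ whose answer has length up to $\lvert \varphi \rvert (\mu (\lvert u \rvert))$, and this data must be threaded either through the one-step function of $\textsc{power} ^2$ or through the predicate symbol of $\textsc{qbf} ^2$ without inflating the configuration length or formula size beyond a second-order polynomial, and while keeping $s$ and $t$ in $\classFPtwo$. For $\textsc{power} ^2$ this means designing configurations that carry a pending query together with its answer across the padded tape and performing the query in a single application of $f$; for $\textsc{qbf} ^2$ it means verifying that encoding each referenced oracle bit as an application of $\lit p$ leaves the halving recursion with a formula of length polynomial in the space bound. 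Once this bookkeeping is in place, correctness and the resource bounds follow exactly as in the classical case.
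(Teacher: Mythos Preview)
Your proposal is correct and matches the paper's approach: the paper does not give a detailed proof of this lemma but merely states (just before Lemma~3.5) that completeness ``can be proved by relativizing the well-known $\classNP$- and $\classPSPACE$-$\redonep$-completeness in a straightforward way.'' Your outline is exactly such a relativization, and your identification of the oracle bookkeeping as the only nontrivial point is apt.
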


\subsection{Representations}
\label{subsection: representations}

As we replaced $\varSigma ^\Nset$ by $\Bset$, 
we extend the notion of representations accordingly: 
a 
\emph{representation}
$\gamma$ of a set~$X$ is 
a surjective partial function from $\Bset$ to $X$. 
Computation relative to representations 
is again formulated by Definition~\ref{definition: realization}. 
This defines what it means for a multi-function $F$ from $X$ to $Y$, 
where $X$ and $Y$ are sets equipped with representations $\gamma$ and $\delta$, 
respectively, 
to be in $(\gamma, \delta)$-$\classtwofont C$, 
where $\classtwofont C$ is 
one of the classes we have defined, such as $\classFPtwo$ and $\classFPSPACEtwo$. 
This $\classtwofont C$ can be $\classPtwo$ or $\classNPtwo$
if $\dom \delta \subseteq \Pset$. 
Also, 
we say that $F$ is $(\gamma, \delta)$-$\classtwofont C$-$\leq$-\emph{hard/complete}
(for $\classtwofont C = \classFPSPACEtwo$, $\classNPtwo$)
if $\delta ^{-1} \circ F \circ \gamma$ 
(see Definition~\ref{definition: realization}) is $\classtwofont C$-$\leq$-hard/complete. 

\subsubsection{Translation and equivalence}
\label{subsubsection: translation and equivalence}

\newcommand{\translT}{\leq}
\newcommand{\equivT}{\equiv}

Here we discuss how 
the class $(\gamma, \delta)$-$\classtwofont C$ and 
$(\gamma, \delta)$-$\classtwofont C$-$\leq$-hardness depend on 
the choice of representations $\gamma$ and $\delta$. 
For two representations $\delta$ and $\delta'$ of the same set, 
we write $\delta \translT \delta'$ if 
there is a function $F \in \classFPtwo$ 
that \emph{translates} $\delta$ to $\delta'$ in the sense that 
for all $\varphi \in \dom \delta$, 
we have $F (\varphi) \in \dom \delta'$ and $\delta (\varphi) = \delta' (F (\varphi))$. 
Thus $\delta$ is ``more informative'' or ``less generic'' than $\delta'$. 
It is easy to see the following. 

\begin{lemma}
\label{lemma: translation and computability}
Let $\classtwofont C$ be either $\classFPtwo$ or $\classFPSPACEtwo$. 
Let $\gamma$ and $\gamma'$ be representations of a set $A$, 
and $\delta$ and $\delta'$ be representations of a set $B$. 
If $\gamma' \translT \gamma$ and $\delta \translT \delta'$, 
then $
 (\gamma, \delta) \text- \classtwofont C 
\subseteq
 (\gamma', \delta') \text- \classtwofont C
$. 
\end{lemma}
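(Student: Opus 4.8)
The plan is to chase the diagram directly, composing the given translators with a machine that computes the original multi-function. Suppose $F$ is a multi-function from $A$ to $B$ that is in $(\gamma, \delta)\text-\classtwofont C$. By definition this means there is a machine $M$, running in second-order polynomial time (if $\classtwofont C = \classFPtwo$) or space (if $\classtwofont C = \classFPSPACEtwo$), that computes the multi-function $\delta^{-1} \circ F \circ \gamma$ on $\Bset$. We are given translators: since $\gamma' \translT \gamma$, there is $G \in \classFPtwo$ such that for every $\varphi \in \dom\gamma'$ we have $G(\varphi) \in \dom\gamma$ and $\gamma'(\varphi) = \gamma(G(\varphi))$; and since $\delta \translT \delta'$, there is $H \in \classFPtwo$ such that for every $\psi \in \dom\delta$ we have $H(\psi) \in \dom\delta'$ and $\delta(\psi) = \delta'(H(\psi))$. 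The goal is to exhibit a single machine witnessing that $\delta'^{-1} \circ F \circ \gamma'$ is in $\classtwofont C$.

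The construction is to compose: given an oracle $\varphi \in \dom\gamma'$ that names some $a = \gamma'(\varphi) \in A$, first run $G$ to obtain (a name for) $G(\varphi) \in \dom\gamma$, which is a $\gamma$-name of the same point $a$. Feed $G(\varphi)$ to $M$ as its oracle; since $\gamma(G(\varphi)) = a$, the machine $M$ produces some $\psi \in (\delta^{-1}\circ F\circ\gamma)[G(\varphi)]$, that is, a $\delta$-name of some element of $F[a]$. Finally apply $H$ to $\psi$ to obtain $H(\psi)$, a $\delta'$-name of that same element. The resulting combined machine thus maps $\varphi$ to a $\delta'$-name of some element of $F[\gamma'(\varphi)]$, which is exactly what it means to $(\gamma', \delta')$-compute $F$. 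The one subtlety in the composition is that $M$ consumes its oracle $G(\varphi)$ by issuing queries; when $M$ queries $G(\varphi)$ on a string, our combined machine must answer that query by (sub)computing $G$ on $\varphi$ at the relevant input, and similarly the outer application of $H$ queries the intermediate name $\psi$, each value of which is itself obtained by running $M$. So the combined machine is really the oracle composition $H \circ M \circ G$ in the functional-oracle sense.

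The step I expect to be the main obstacle is verifying that this composition stays within the same resource bound, i.e.\ that $\classFPtwo$ and $\classFPSPACEtwo$ are closed under this kind of composition of oracle machines. This is where the use of \emph{regular} functions and second-order polynomials pays off. Each of $G$, $M$, $H$ runs within a second-order polynomial in the size of its oracle; one must check that the size $\lvert G(\varphi)\rvert$ of the intermediate name is itself bounded by a second-order polynomial in $\lvert\varphi\rvert$ (which holds because $G \in \classFPtwo$ forces $\lvert G(\varphi)\rvert(n) \le P_G(\lvert\varphi\rvert)(n)$ for a second-order polynomial $P_G$, by the length bound on outputs of a polynomial-time machine), and likewise $\lvert\psi\rvert$ is bounded by a second-order polynomial in $\lvert G(\varphi)\rvert$ and hence in $\lvert\varphi\rvert$. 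Then the total running time (resp.\ space) of $H\circ M\circ G$ is bounded by a composition of second-order polynomials, and the key closure fact is that substituting one second-order polynomial for the type-$1$ variable $\lit L$ in another yields again a second-order polynomial. This substitution-closure is the technical heart; once it is in hand, the time bound $P_H(\lvert\psi\rvert)(\lvert u\rvert)$ unwinds to a genuine second-order polynomial in $\lvert\varphi\rvert$ evaluated at $\lvert u\rvert$, and the space bound composes the same way. The remaining bookkeeping---that the domains match up so that $\varphi \in \dom\gamma'$ really does lead through $\dom\gamma$ and $\dom\delta$ into $\dom\delta'$---is exactly the content of the translator definitions and is routine.
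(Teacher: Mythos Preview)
Your proposal is correct and is exactly the natural argument; the paper in fact gives no proof at all, merely introducing the lemma with ``It is easy to see the following.'' Your explicit composition $H \circ M \circ G$ and the verification that second-order polynomials are closed under substitution is precisely the intended (and only reasonable) route.
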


We write $\gamma \equivT \gamma'$
if $\gamma \translT \gamma'$ and $\gamma' \translT \gamma$. 
Lemma~\ref{lemma: translation and computability} implies that 
the class $(\gamma, \delta)$-$\classFPtwo$ is 
invariant under replacing $\gamma$ or $\delta$
with $\equivT$-equivalent representations. 

By reversing the directions of the translations 
between $\gamma$, $\gamma'$ and $\delta$, $\delta'$ in the assumption, 
we get the implication between hardness results under different representations:

\begin{lemma}
\label{lemma: translation and completeness}
Let $\gamma$ and $\gamma'$ be representations of a set $A$, 
and $\delta$ and $\delta'$ be representations of a set $B$. 
If $\gamma \translT \gamma'$ and $\delta' \translT \delta$, 
then a $(\gamma, \delta)$-$\classFPSPACEtwo$-$\redtwoT$-hard multi-function
is $(\gamma', \delta')$-$\classFPSPACEtwo$-$\redtwoT$-hard. 
\end{lemma}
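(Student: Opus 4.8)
The plan is to factor the statement through a single Weihrauch reduction between the two realizations and then invoke transitivity of $\redtwoT$. Write $R = \delta^{-1}\circ F\circ\gamma$ and $R' = \delta'^{-1}\circ F\circ\gamma'$ for the realizations of the given multi-function $F$ under the two pairs of representations (Definition~\ref{definition: realization}). By hypothesis $R$ is $\classFPSPACEtwo$-$\redtwoT$-hard, and the goal is to show that $R'$ is $\classFPSPACEtwo$-$\redtwoT$-hard. The key claim I would isolate is $R \redtwoT R'$. Granting this, for any $D\in\classFPSPACEtwo$ the hardness of $R$ gives $D\redtwoT R$, and composing with $R\redtwoT R'$ via transitivity of $\redtwoT$ yields $D\redtwoT R'$; since $D$ was arbitrary, $R'$ is hard, as required. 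This is the hardness-analogue of Lemma~\ref{lemma: translation and computability}, with the two translations now used in the opposite directions, exactly as the paragraph preceding the lemma anticipates.

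To produce the reduction $R\redtwoT R'$, I would use the two translators supplied by the hypotheses. From $\gamma\translT\gamma'$ obtain $G\in\classFPtwo$ with $\gamma'(G(\varphi)) = \gamma(\varphi)$ for all $\varphi\in\dom\gamma$, and from $\delta'\translT\delta$ obtain $H\in\classFPtwo$ with $\delta(H(\psi)) = \delta'(\psi)$ for all $\psi\in\dom\delta'$. Following the recipe for $\redtwoT$ in Definition~\ref{definition: many-one reduction}, set $s := G$ and let $r$ be the function that, on input $\langle\varphi,\psi\rangle$, discards $\varphi$ and returns $H(\psi)$; both are in $\classFPtwo$ since unpairing lies in $\classFPtwo$ and this class is closed under composition. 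The verification is a diagram chase: for $\varphi\in\dom R$ we have $\gamma(\varphi)\in\dom F$, so $s(\varphi) = G(\varphi)$ is a $\gamma'$-name of $\gamma(\varphi)$ and hence $s(\varphi)\in\dom R'$; and for any $\psi\in R'[s(\varphi)]$, i.e.\ $\psi\in\dom\delta'$ with $\delta'(\psi)\in F[\gamma'(s(\varphi))] = F[\gamma(\varphi)]$, the value $r(\langle\varphi,\psi\rangle) = H(\psi)$ satisfies $\delta(H(\psi)) = \delta'(\psi)\in F[\gamma(\varphi)]$, so it lies in $R[\varphi]$. This is exactly $R\redtwoT R'$.

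The step I expect to carry the real weight is transitivity of $\redtwoT$, together with the underlying fact that $\classFPtwo$ is closed under composition. Given $P\redtwoT Q$ via $(s_1,r_1)$ and $Q\redtwoT S$ via $(s_2,r_2)$, the natural witnesses for $P\redtwoT S$ are $s := s_2\circ s_1$ and $r(\langle\varphi,\theta\rangle) := r_1(\langle\varphi,\, r_2(\langle s_1(\varphi),\theta\rangle)\rangle)$, and checking that these satisfy the defining condition is again a routine chase. The substance is showing that $s$ and $r$ remain in $\classFPtwo$: this amounts to verifying that substituting one second-order polynomial time bound into another yields a second-order polynomial, and that the size $\lvert s_1(\varphi)\rvert$ is itself bounded by a second-order polynomial in $\lvert\varphi\rvert$, so that the clock for $s_2$ evaluated at $s_1(\varphi)$ is second-order polynomial in $\lvert\varphi\rvert$. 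These are the closure properties of Kapron and Cook's basic feasible functionals, and the only thing to confirm is that they survive the passage to regular functions and to multi-functions in our setting; I regard this bookkeeping, rather than the construction above, as the main obstacle.
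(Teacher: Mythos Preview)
Your proposal is correct and matches the approach the paper has in mind: the paper states this lemma without proof, indicating only that it follows by ``reversing the directions of the translations'' in the assumption of Lemma~\ref{lemma: translation and computability}, and your argument (showing $\delta^{-1}\circ F\circ\gamma \redtwoT \delta'^{-1}\circ F\circ\gamma'$ via the two translators and then invoking transitivity of $\redtwoT$) is precisely the natural unpacking of that hint. Your identification of closure of $\classFPtwo$ under composition as the technical core is apt, and it is also the reason the paper singles out $\redtwoT$ here with the parenthetical remark that $\redtwom$ would not work: your function $r$ needs oracle access to the full $\psi$ in order to apply $H$, which $\redtwoT$ grants but $\redtwom$ does not.
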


Here, 
$\redtwoT$ is the stronger reduction in 
Definition~\ref{definition: many-one reduction}.\ref{enumi: Weihrauch reduction}
(note that $\redtwom$ would not work). 

\subsubsection{Uniform and non-uniform statements}
\label{subsubsection: uniform non-uniform}

We say that an element $x \in X$ is in 
\emph{$\gamma$-$\classfont C$} 
(where $\classfont C$ is a usual complexity class of string functions, 
such as $\classFP$ and $\classFPSPACE$)
if it has a $\gamma$-name in $\classfont C$. 
It is 
\emph{$\gamma$-$\classfont C$-$\leq$-complete}
(where $\leq$ is either $\redonem$, $\redonep$ or $\redoneT$)
if $\bigcup (\gamma ^{-1} [x])$ 
(where $\mathord\cdot ^{-1}$ is the inverse image, 
and $\bigcup$ is defined before Lemma~\ref{lemma: maps complete}) 
is $\classfont C$-$\leq$-complete. 
Lemmas \ref{lemma: maps computable} and \ref{lemma: maps complete} yield
the following. 

\begin{lemma}
\label{lemma: maps computable represented}
Let $\gamma$ and $\delta$ be representations of sets $X$ and $Y$, 
respectively. 
\begin{enumerate}
\item 
\label{enumi: maps computable represented, functional}
A partial function $F \in (\gamma, \delta) \text- \classFPtwo$ 
maps elements of $\gamma \text- \classFP \cap \dom F$ into 
$\delta \text- \classFP$. 
Similarly for $\classFPSPACEtwo$ and $\classFPSPACE$
replacing $\classFPtwo$ and $\classFP$. 
\item 
\label{enumi: maps computable represented, predicate}
Suppose that $\dom \delta \subseteq \Pset$. 
Then a partial function $F \in (\gamma, \delta) \text- \classPtwo$ 
maps elements of $\gamma \text- \classFP \cap \dom F$ into 
$\delta \text- \classP$. 
Similarly for $\classNPtwo$ and $\classNP$
replacing $\classPtwo$ and $\classP$. 
\end{enumerate}
\end{lemma}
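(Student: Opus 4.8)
The plan is to chase the definitions and reduce everything to Lemma~\ref{lemma: maps computable}. I would treat only the $\classFPtwo$/$\classFP$ case of part~\ref{enumi: maps computable represented, functional}; the $\classFPSPACE$ case and both cases of part~\ref{enumi: maps computable represented, predicate} are entirely analogous, each invoking the matching clause of Lemma~\ref{lemma: maps computable}.

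First I would unwind the hypothesis. By the conventions of Section~\ref{subsection: representations}, $F \in (\gamma, \delta) \text- \classFPtwo$ means precisely that the multi-function $A := \delta ^{-1} \circ F \circ \gamma$ of Definition~\ref{definition: realization} lies in $\classFPtwo$; I would fix a deterministic machine~$M$ computing $A$ within a second-order polynomial time bound.

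Next I would produce a good name and feed it to $M$. Take any $x \in \gamma \text- \classFP \cap \dom F$ together with a $\gamma$-name $\varphi \in \classFP \cap \Bset$ of $x$, which exists by the definition of $\gamma \text- \classFP$. Since $\varphi \in \dom \gamma$ with $\gamma (\varphi) = x \in \dom F$ and $\delta$ is surjective, $A [\varphi]$ is nonempty, so $\varphi \in \dom A$. Here I would use the hypothesis that $F$ is a \emph{partial} function: then $F [x] = \{F (x)\}$, so the formula in Definition~\ref{definition: realization} makes $A [\varphi]$ exactly the set of $\delta$-names of $F (x)$. Because $M$ computes $A$, the function $\psi := M ^\varphi$ (i.e.\ $\psi (u)$ is the output of $M$ on oracle~$\varphi$ and input~$u$) satisfies $\psi \in A [\varphi]$; hence $\psi \in \dom \delta \subseteq \Bset$ is a $\delta$-name of $F (x)$.

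Finally I would apply Lemma~\ref{lemma: maps computable}.\ref{enumi: maps computable, functional}: since $A \in \classFPtwo$ and the oracle $\varphi$ is a regular function in $\classFP$, the realized output $\psi = M ^\varphi$ is again in $\classFP$. Thus $\psi$ is a $\delta$-name of $F (x)$ lying in $\classFP \cap \Bset$, which is exactly the assertion $F (x) \in \delta \text- \classFP$. I do not expect a genuine obstacle, since all the real work is packaged in Lemma~\ref{lemma: maps computable} (where $\lvert \varphi \rvert$ is an ordinary polynomial, so the bound $P (\lvert \varphi \rvert)$ becomes first-order). The only point needing care is the bookkeeping that identifies $A [\varphi]$ with the $\delta$-names of the \emph{single} value $F (x)$: this is where partiality of $F$ is essential, as it guarantees that whatever $\delta$-name $M$ emits names the intended value $F (x)$ rather than some other allowable output.
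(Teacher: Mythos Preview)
Your proposal is correct and matches the paper's approach: the paper gives no separate proof but simply remarks that Lemmas~\ref{lemma: maps computable} and~\ref{lemma: maps complete} yield this result, and your argument is precisely the unwinding of that remark for Lemma~\ref{lemma: maps computable}. Your observation that single-valuedness of $F$ is what pins $A[\varphi]$ down to the $\delta$-names of $F(x)$ is the one point worth making explicit, and you handle it correctly.
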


\begin{lemma}
\label{lemma: maps complete represented}
Let $\gamma$ and $\delta$ be representations of sets $X$ and $Y$, 
respectively. 
\begin{enumerate}
\item
\label{enumi: maps complete represented, functional}
A $(\gamma, \delta)$-$\classFPSPACEtwo$-$\redtwom$-complete 
partial function~$F$ maps
some element of $\gamma \text- \classFP \cap \dom F$ to 
a $\delta$-$\classFPSPACE$-$\redonem$-complete element of $Y$. 
\item
\label{enumi: maps complete represented, predicate}
Suppose that $\dom \delta \subseteq \Pset$. 
Then a $(\gamma, \delta)$-$\classNPtwo$-$\redtwop$-complete 
partial function~$F$ maps 
some element of $\gamma \text- \classFP \cap \dom F$ to 
a $\delta$-$\classNP$-$\redonep$-complete element of $Y$. 
\end{enumerate}
\end{lemma}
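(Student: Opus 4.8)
The plan is to transfer each statement to the corresponding machine-level statement about the realizer $B := \delta ^{-1} \circ F \circ \gamma$ and then invoke Lemma~\ref{lemma: maps complete}. First I would unfold the definition of representation-relative completeness: saying that $F$ is $(\gamma, \delta)$-$\classFPSPACEtwo$-$\redtwom$-complete means exactly that $B$ is an $\classFPSPACEtwo$-$\redtwom$-complete multi-function from $\Bset$ to $\Bset$. Lemma~\ref{lemma: maps complete}.\ref{enumi: maps complete, functional} then supplies a regular function $\psi \in \dom B \cap \classFP$ such that $\bigcup (B [\psi])$ is $\classFPSPACE$-$\redonem$-complete.

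Next I would produce the promised element of $X$ by setting $x := \gamma (\psi)$. Since $\psi \in \dom B$, the formula defining $\delta ^{-1} \circ F \circ \gamma$ in Definition~\ref{definition: realization} forces $\psi \in \dom \gamma$ and $F [\gamma (\psi)] \neq \emptyset$; hence $x \in \dom F$. Moreover $\psi$ is a $\gamma$-name of $x$ lying in $\classFP$, so $x$ is in $\gamma$-$\classFP$ by definition. It then remains to show that $F (x)$ is $\delta$-$\classFPSPACE$-$\redonem$-complete, i.e.\ that $\bigcup (\delta ^{-1} [F (x)])$ is $\classFPSPACE$-$\redonem$-complete.

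The key identity I would establish is $B [\psi] = \delta ^{-1} [F (x)]$, and this is the one place where the hypothesis that $F$ is a (single-valued) partial function is used. Because $x \in \dom F$, the set $F [\gamma (\psi)]$ is the singleton $\{ F (x) \}$, so $B [\psi] = \{ \theta \in \dom \delta : \delta (\theta) \in F [x] \} = \{ \theta \in \dom \delta : \delta (\theta) = F (x) \} = \delta ^{-1} [F (x)]$. Consequently $\bigcup (\delta ^{-1} [F (x)]) = \bigcup (B [\psi])$, which is $\classFPSPACE$-$\redonem$-complete by the previous step, finishing the first claim.

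For the second claim I would run the same argument verbatim with $\classNPtwo$, $\redtwop$, $\classNP$, $\redonep$ in place of their deterministic-space counterparts, invoking Lemma~\ref{lemma: maps complete}.\ref{enumi: maps complete, predicate} instead. The only extra bookkeeping is that this lemma requires $B$ to map $\Bset$ into $\Pset$; the standing hypothesis $\dom \delta \subseteq \Pset$ guarantees exactly this, since $B [\psi] \subseteq \dom \delta \subseteq \Pset$ for every $\psi$. I expect the only real subtlety, rather than a genuine obstacle, to be the verification of $B [\psi] = \delta ^{-1} [F (x)]$ together with the attendant care that $F$ is single-valued; everything else is a direct unwinding of the definitions, with Lemma~\ref{lemma: maps computable} underlying the companion computability statement (Lemma~\ref{lemma: maps computable represented}) rather than this one.
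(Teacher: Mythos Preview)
Your proposal is correct and matches the paper's intended argument: the paper does not spell out a proof but simply remarks that Lemmas~\ref{lemma: maps computable} and~\ref{lemma: maps complete} ``yield'' the result, and your write-up is precisely the natural unwinding of that remark---apply Lemma~\ref{lemma: maps complete} to the realizer $B = \delta^{-1}\circ F\circ\gamma$, then use single-valuedness of $F$ to identify $B[\psi]$ with $\delta^{-1}[F(x)]$. Your observation that single-valuedness is exactly what makes this identification work is the one point worth making explicit, and you have done so.
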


These lemmas will be used in Sections 
\ref{subsection: applications, real sets} and 
\ref{subsection: applications, real functions} 
to derive the non-uniform theorems from their uniform counterparts. 

\section{Applications}
\label{section: applications}

As noted in Section~\ref{subsection: regular functions}, 
our formulation can be viewed as a 
generalization of TTE 
achieved by removing the restrictions
\ref{enumi: predicate} and \ref{enumi: unary} 
on the oracle used as names. 
In the following three subsections, 
we will apply our theory to 
real numbers, real sets and real functions 
through representations $\rhoreal$, $\psiclosed$, and $\deltabox$, 
which exploit the removal of 
\ref{enumi: predicate}, 
\ref{enumi: unary}, 
and both, 
respectively. 

For representations $\gamma _0$ and $\gamma _1$ 
of $X _0$ and $X _1$, respectively, 
we can define the representation $[\gamma _0, \gamma _1]$
of the Cartesian product $X _0 \times X _1$ by $
  [\gamma _0, \gamma _1] (\langle \varphi _0, \varphi _1 \rangle) 
= 
  (\gamma _0 (\varphi _0), \gamma _1 (\varphi _1))
$. 

\subsection{Computation on real numbers}
\label{subsection: applications, real numbers}

Recall the representation~$\rhoreal$ of $\Rset$ by infinite sequences
(Section~\ref{subsection: TTE computability})
where a $\rhoreal$-name of a real number~$x$
was a list $u _0 \# u _1 \# \ldots$ 
of rational numbers~$\lsem u _i \rsem$ 
with $\lvert \lsem u _i \rsem - x \rvert < 2 ^{-i}$. 
We adopt this in a straightforward way
to define a representation~$\rhoreal$ 
that encodes real numbers by regular functions 
(we keep writing $\rhoreal$ by abuse of notation):  
we say that $\varphi \in \Bset$ is a $\rhoreal$-name of $x \in \Rset$ 
if $\varphi (0 ^i) \in \Dset$ and 
$\lvert \lsem \varphi (0 ^i) \rsem - x \rvert < 2 ^{-i}$ for each $i \in \Nset$. 
Thus we encode the same list in the values $\varphi (0 ^i)$. 

We write $\rhoreal | ^{[0, 1]}$ for 
the restriction of $\rhoreal$ to (names of) real numbers in the interval $[0, 1]$. 
It is easy to see
that the class $(\rhoreal | ^{[0, 1]}, \rhoreal)$-$\classFPtwo$ 
coincides with the polynomial-time computability 
that was defined in Section~\ref{subsubsection: TTE complexity compact}
using the signed digit representation~$\rhosd$. 
Recall that in the definition of $\rhosd$, 
we needed to forbid redundancy carefully. 
Now we do not have to worry too much about defining concise representations. 

Moreover, we obtain a reasonable notion of 
polynomial-time computability of real functions~$f$ 
on $\Rset$ (not just $[0, 1]$) 
without additional work: 
$(\rhoreal, \rhoreal)$-$\classFPtwo$
turns out to be a reasonable class
that coincides with the one by Hoover~%
\cite{hoover90:_feasib_real_funct_and_arith_circuit}, 
who required that 
the $2 ^{-m}$-approximation of the value~$f (t)$ 
should be delivered within time polynomial in $m$ and $\log \lvert t \rvert$ 
(this equivalence has been essentially observed by 
Lambov~%
\cite{lambov06:_basic_feasib_funct_in_comput_analy}).
Another equivalent definition appears in Takeuti~%
\cite{takeuti01:_effec_fixed_point_theor_over}, 
inspired by Pour-El's approach to computable analysis. 

\begin{example}
It is easy to see that 
binary addition and multiplication on $\Rset$ are in 
$([\rhoreal, \rhoreal], \allowbreak \rhoreal)$-$\classFPtwo$ 
by the algorithms suggested by 
Examples \ref{example: addition} and \ref{example: multiplication}. 
\end{example}

\begin{example}
The exponential function $\exp \tcolon \Rset \tto \Rset$
restricted to $[0, 1]$ is in $(\rhoreal | ^{[0, 1]}, \rhoreal)$-$\classFPtwo$, 
because $\exp t$ can be written as the sum of a series 
which is known to converge fast on $[0, 1]$
(that is, given a desired precision, the machine can tell 
how many initial terms it needs to compute). 
However, $\exp$ on the whole real line $\Rset$ is not 
in $(\rhoreal, \rhoreal)$-$\classFPtwo$, 
because it grows too fast. 
\end{example}

\begin{example}
 \label{example: sine}
The sine function $\sin \tcolon \Rset \tto \Rset$
is in $(\rhoreal, \rhoreal)$-$\classFPtwo$. 
To see this, 
note that just like $\exp$ in the previous example, 
$\sin$ is polynomial-time computable if 
restricted to, say, $[-4, 4]$. 
It is also possible, 
given $t \in \Rset$ and a desired precision, 
to find efficiently 
a number in $[-4, 4]$ which is 
close enough to $t$ modulo $2 \pi$. 
We can compute $\sin t$
by combining these algorithms. 
\end{example}

\begin{example}
A function can belong to $(\rhoreal, \rhoreal)$-$\classFPtwo$ 
without even an explicit description known. 
The 
\emph{trisector curves} 
(Figure~\ref{figure: trisector})
\begin{figure}
\begin{center}
\includegraphics[scale=0.3]{./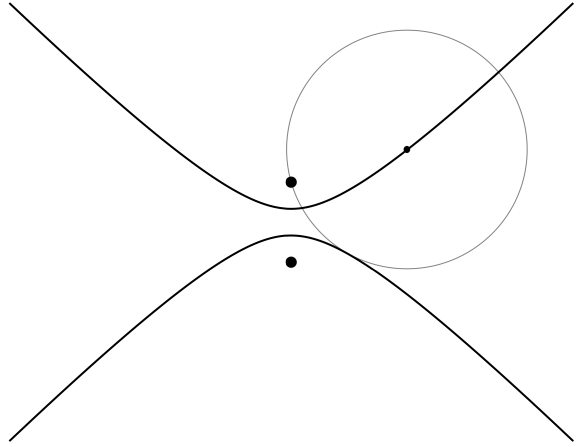}
\caption{The trisector curves between two points.}
\label{figure: trisector}
\end{center}
\end{figure}
between the points $(0, 1)$ and $(0, -1)$ in the plane are
the pair of sets $C _1$, $C _2 \subseteq \Rset ^2$ such that 
$C _1$ is the set of points equidistant from $(0, 1)$ and $C _2$, and 
$C _2$ is the set of points equidistant from $(0, -1)$ and $C _1$. 
Asano, Matou\v sek and Tokuyama~%
\cite{asano07:_distan_trisec_curve}
proved that such a pair $(C _1, C _2)$ exists and is unique
(see 
\cite{imai10:_distan_k_sector_exist} and 
\cite{kawamura12:_zone_diagr_in_euclid_spaces}
for simpler and more general proofs). 
They also showed that $C _1$ (as well as $C _2$, which is its mirror image) 
is a graph of a function $f \tcolon \Rset \to \Rset$ which is, 
in our terminology, 
in $(\rhoreal, \rhoreal)$-$\classFPtwo$. 
They conjecture that these curves are
different from any curve that was previously known. 
\end{example}

\subsection{Computation on real sets}
\label{subsection: applications, real sets}

\subsubsection{Representation of real sets}
Let $\mathcal A$ be the set of closed subsets of $[0, 1] ^2$. 
Define the representation~$\psiclosed$ of $\mathcal A$ as follows: 
let $\varphi \in \Pset$ be a $\psiclosed$-name of $S \in \mathcal A$ if 
it satisfies the two itemized conditions in 
Section~\ref{subsubsection: non-uniform results}. 
Note that this representation is more succinct than 
the one that we would be able to define using infinite sequences~%
\cite[Example~6.9]{weihrauch03:_comput_compl_comput_metric_spaces}.

Since $\dom \psiclosed \subseteq \Pset$, 
it makes sense to talk about 
$\psiclosed$-$\classNP$ and 
$(\psiclosed, \psiclosed)$-$\classNPtwo$ 
(Section \ref{subsubsection: uniform non-uniform}).
The following is immediate from the definition
of polynomial-time computability in Section~\ref{subsubsection: non-uniform results}. 

\begin{lemma}
\label{lemma: set polytime iff polytime name}
A set in $\mathcal A$ is (nondeterministic) polynomial-time computable 
if and only if it is in $\psiclosed$-$\classP$ 
($\psiclosed$-$\classNP$). 
\end{lemma}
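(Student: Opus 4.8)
The plan is to observe that the two notions are defined by \emph{literally the same} conditions, so that the lemma reduces to checking that one apparent discrepancy between the definitions is vacuous. Unwinding the definitions: a set $S \in \mathcal A$ is polynomial-time computable (resp.\ nondeterministic polynomial-time computable) exactly when there is a function $\varphi \tcolon \varSigma ^* \tto \{0, 1\}$ in $\classP$ (resp.\ $\classNP$) satisfying the two itemized conditions of Section~\ref{subsubsection: non-uniform results}; and $S$ is in $\psiclosed$-$\classP$ (resp.\ $\psiclosed$-$\classNP$) exactly when it has a $\psiclosed$-name in $\classP$ (resp.\ $\classNP$), where by definition a $\psiclosed$-name is a function $\varphi \in \Pset$ satisfying those same two conditions. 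Thus the only formal difference is that a $\psiclosed$-name is required to lie in $\Pset$, i.e.\ to be a regular $\{0, 1\}$-valued function, whereas the set-computability definition asks merely for a $\{0, 1\}$-valued $\varphi$ with no regularity stipulation.

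The key (and essentially only) step is to note that this difference is illusory: every $\{0, 1\}$-valued function is automatically regular. Indeed, if $\varphi (u) \in \{0, 1\}$ for all $u$, then $\lvert \varphi (u) \rvert = 1$ for every $u$, so the length-monotonicity requirement $\lvert \varphi (u) \rvert \leq \lvert \varphi (v) \rvert$ whenever $\lvert u \rvert \leq \lvert v \rvert$ holds trivially. Hence $\Pset$ is precisely the set of all $\{0, 1\}$-valued functions, and ``$\{0, 1\}$-valued and in $\classP$'' coincides with ``in $\Pset \cap \classP$'' (and likewise with $\classNP$ in place of $\classP$).

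With this in hand both directions are immediate. For the forward direction, a witness $\varphi \in \classP$ (resp.\ $\classNP$) for polynomial-time (resp.\ nondeterministic polynomial-time) computability of $S$ is $\{0, 1\}$-valued, hence lies in $\Pset$, and satisfies the two conditions, so it is a $\psiclosed$-name of $S$ of the required complexity, giving $S \in \psiclosed$-$\classP$ (resp.\ $\psiclosed$-$\classNP$). Conversely, a $\psiclosed$-name of $S$ lying in $\classP$ (resp.\ $\classNP$) is by definition a $\{0, 1\}$-valued function of that complexity satisfying the two conditions, which is exactly a witness for (nondeterministic) polynomial-time computability of $S$. The nondeterministic case requires no separate argument: the asymmetry between the outcomes $1$ and $0$ (a verifiable certificate only for proximity to $S$) is built identically into both definitions, so replacing $\classP$ by $\classNP$ throughout leaves the reasoning unchanged.

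I expect no real obstacle here; the content is entirely definitional, and the only thing worth spelling out is the trivial observation that $\{0, 1\}$-valued functions satisfy regularity vacuously. The only loose ends to double-check are bookkeeping matters that the paper treats tacitly---that both definitions feed $\varphi$ the same tuple $(u, v, 0 ^n)$ via the fixed string tupling, and that ``in $\classP$''/``in $\classNP$'' is interpreted the same way (as membership of the associated predicate) on both sides---neither of which affects the argument.
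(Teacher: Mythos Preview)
Your proposal is correct and matches the paper's own treatment: the paper simply states that the lemma is ``immediate from the definition'' and gives no further proof. Your write-up is a faithful unwinding of that immediacy, and the one point you single out---that $\{0,1\}$-valued functions are automatically regular, so $\Pset$ coincides with all $\{0,1\}$-valued string functions---is exactly the only thing that might need checking.
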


\subsubsection{Complexity of the convex hull operator}
The operator~$\OpCH$ taking convex hulls 
(Section \ref{subsubsection: non-uniform results}) 
is a function from $\mathcal A$ to $\mathcal A$. 
We can state and prove the following uniform version of 
Theorems \ref{theorem: convex hull upper non-uniform}
and \ref{theorem: convex hull lower non-uniform}. 
As corollaries to this, 
we get 
Theorem~\ref{theorem: convex hull upper non-uniform}
by 
Lemmas \ref{lemma: maps computable represented}.%
\ref{enumi: maps computable represented, predicate} and 
\ref{lemma: set polytime iff polytime name}, 
and 
Theorem~\ref{theorem: convex hull lower non-uniform} by 
Lemmas \ref{lemma: maps complete represented}.%
\ref{enumi: maps complete represented, predicate}
and \ref{lemma: set polytime iff polytime name}. 

\begin{theorem}
\label{theorem: convex hull uniform}
$\OpCH$ is $(\psiclosed, \psiclosed)$-$\classNPtwo$-$\redtwop$-complete. 
\end{theorem}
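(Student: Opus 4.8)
The plan is to establish the two halves of completeness separately: membership $\OpCH \in (\psiclosed,\psiclosed)\text-\classNPtwo$, and $\redtwop$-hardness for $\classNPtwo$. For the latter I would reduce one of the $\classNPtwo$-$\redtwop$-complete problems of Lemma~\ref{lemma: nptwo-copmlete}, say $\textsc{exist}^2$, to the realized multi-function $B=\psiclosed^{-1}\circ\OpCH\circ\psiclosed$.

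For membership I would build a nondeterministic machine $M$ that, given a $\psiclosed$-name $\varphi$ of $S$ as oracle and a string $(u,v,0^n)$ as input, has an accepting path exactly when the point $p=(\lsem u\rsem,\lsem v\rsem)$ is close to $\OpCH(S)$. The engine is Carath\'eodory's theorem: in the plane, $p\in\OpCH(S)$ iff $p$ is a convex combination of at most three points of $S$. So $M$ fixes an auxiliary precision $m$ polynomial in $n$, guesses three dyadic points $q_1,q_2,q_3\in\Dset^2$ and dyadic coefficients $\lambda_1,\lambda_2,\lambda_3\ge 0$ summing to $1$, each to precision $2^{-m}$, verifies by querying $\varphi$ that every $q_i$ lies within $2^{-m}$ of $S$, and checks that $\lVert\sum_i\lambda_i q_i-p\rVert$ is small; it accepts iff all checks succeed. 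I then have to show that accepting paths exist when $\dist(p,\OpCH(S))<2^{-n}$ and none exist when $\dist(p,\OpCH(S))>2\cdot 2^{-n}$. Choosing $m$ polynomially large in $n$ absorbs the error of the guessed combination into the gap between $2^{-n}$ and $2\cdot 2^{-n}$, while the oracle's own promise gap is handled by querying $\varphi$ at a precision slightly finer than $m$. The set of input strings on which $M$ accepts is then a genuine $\psiclosed$-name of $\OpCH(S)$, and the running time is bounded by a second-order polynomial in $\lvert\varphi\rvert$ and $\lvert(u,v,0^n)\rvert$, since the guesses and queries have length polynomial in $n$ and in the bit-lengths of $u,v$, each query returning a single bit.

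For hardness I would exhibit $s,t\in\classFPtwo$ witnessing $\textsc{exist}^2\redtwop B$. Given the predicate $p\in\Pset=\dom\textsc{exist}^2$, the translator $s$ must output a $\psiclosed$-name of a closed set $S_p\subseteq[0,1]^2$ encoding $p$ geometrically, and $t$ must send a query $(u,0^n)$ to a convex-hull query---a test point together with a precision---whose answer equals ``$\exists v\in\{0,1\}^n\colon p(u,v)=1$''. The construction is the uniformization of the Ko--Yu/Zhao--M\"uller lower bound: for each candidate certificate $v$ one places a point of $S_p$ that is present exactly when $p(u,v)=1$, positioned so that the designated test point lies in $\OpCH(S_p)$ iff some present point covers it, thereby collapsing the exponential disjunction over $v$ into a single convex-hull membership. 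Since $s$ has only oracle access to $p$ and must emit a distance-name, I would arrange $S_p$ so that its distance function is computable by querying $p$ at locations determined by the query coordinates, and then check that $s(p)$ meets the two promise conditions of $\psiclosed$.

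The main obstacle is the hardness gadget. The membership argument is the standard nondeterministic Carath\'eodory search, and its uniformity is routine once the precisions are tracked. The delicate part is the encoding, which must simultaneously (i) yield a set $S_p$ whose distance-name is produced by an $\classFPtwo$ function having only oracle access to $p$, (ii) preserve the $2^{-k}$ versus $2\cdot 2^{-k}$ promise gap so that $s(p)$ is a legal $\psiclosed$-name, and (iii) collapse the existential quantifier into one convex-hull membership robustly enough that the tolerance on the output side cannot blur present and absent certificate-points. I expect most of this work to be inherited from the non-uniform construction; the genuinely new content is verifying that $s$ and $t$ are honest $\classFPtwo$ functions and that the promise gaps survive the encoding.
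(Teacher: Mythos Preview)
Your outline matches the paper's approach. Membership is by nondeterministically guessing witness points in $S$; the paper phrases it with two endpoints of a segment, while your Carath\'eodory version with three points is the more careful formulation. For hardness both you and the paper reduce a generic $\classNPtwo$ instance to $\OpCH$ via the Ko--Yu geometric encoding, and you correctly isolate item~(i)---that $s(p)$ be a legal $\psiclosed$-name producible in $\classFPtwo$---as the crux.

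Where your proposal has a gap is the expectation that ``most of this work [is] inherited from the non-uniform construction.'' Ko and Yu proved their lower bound for a weaker representation (their ``strong recognizability''), under which the name need not commit to the answer~$1$ whenever the query point is within $2^{-n}$ of $S$. Under the $\psiclosed$ semantics used here, their set is \emph{not} polynomial-time computable from the oracle: in each widget there are exponentially many bump-slots indexed by certificates~$u$, and a single query ball positioned near the base curve can graze many of them; deciding whether such a ball meets $S$ then forces inspecting $B(\varphi)(w,u)$ for exponentially many~$u$. So the construction you intend to inherit fails precisely at your item~(i). The paper's fix is a concrete geometric modification: flatten each bump so its sides make angle at most $45^\circ$ with the base. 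Then every query ball either already meets the base curve (answer~$1$ regardless of the bumps) or intersects at most one certificate's slot, so a single oracle call suffices and $s\in\classFPtwo$. This low-angle modification is the one new idea beyond Ko--Yu that the hardness proof needs, and your proposal does not supply it.
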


\begin{proof}
The main technical ideas are 
already in Ko and Yu's proof of 
the non-uniform versions
(Theorems \ref{theorem: convex hull upper non-uniform} and 
\ref{theorem: convex hull lower non-uniform}), 
so we will only sketch the proof. 

That $\OpCH$ belongs to $(\psiclosed, \psiclosed)$-$\classNPtwo$ is no surprise: 
A point~$p$ belongs to $\OpCH (S)$ if there are two points $p'$ and $p''$ in $S$
such that $p$ is on the line segment $p' p''$. 
All we have to do is to guess $p'$ and $p''$ nondeterministically, 
with appropriate consideration of precision. 

For hardness, we need to modify the proof slightly, 
because, as we noted earlier, 
Ko and Yu's original results were about a weaker notion of computability: 
our computability of sets demands more in the sense that 
on query $(u, v, 0 ^n)$, where $u$, $v \in \Dset$ and $n \in \Nset$, 
if $(\lsem u \rsem, \lsem v \rsem)$ is within distance $2 ^{-n}$ from the set, 
then we must say $1$ 
(see the definition before 
Theorems \ref{theorem: convex hull upper non-uniform} 
and \ref{theorem: convex hull lower non-uniform}), 
whereas for weak computability both $0$ or $1$ are allowed in this case. 

We assume that the reader has 
Ko and Yu's proof~\cite[Corollary 4.6]{ko08:_compl_of_convex_hulls_of}
at hand. 
The proof of their Lemma~4.4 
begins by taking an arbitrary set $A \in \classNP$
and noting that there are $B \in \classP$ and a polynomial~$p$
such that $w \in A$ if and only if $(w, u) \in B$
for some string~$u$ of length exactly $p (\lvert w \rvert)$. 
Relativizing this, we take $A \in \classNPtwo$, and 
note that there are $B \in \classPtwo$ and a second-order polynomial~$P$
such that $A (\varphi) (w) = 1$ if and only if 
$B (\varphi) (w, u) = 1$ for 
some string~$u$ of length $P (\lvert \varphi \rvert) (\lvert w \rvert)$. 

We need to provide $s$ and $t$ of 
Definition~\ref{definition: many-one reduction}
which reduce $A$ to $\psiclosed ^{-1} \circ \OpCH \circ \psiclosed$.
We define $s$ by describing the set $S = \psiclosed (s (\varphi))$ 
for a given $\varphi \in \Bset$. 
The construction is 
similar to the original proof, 
replacing $p (n)$ by $P (\lvert \varphi \rvert) (n)$ 
and $B$ by $B (\varphi)$. 

\begin{figure}
\begin{center}
\includegraphics[scale=1.05]{./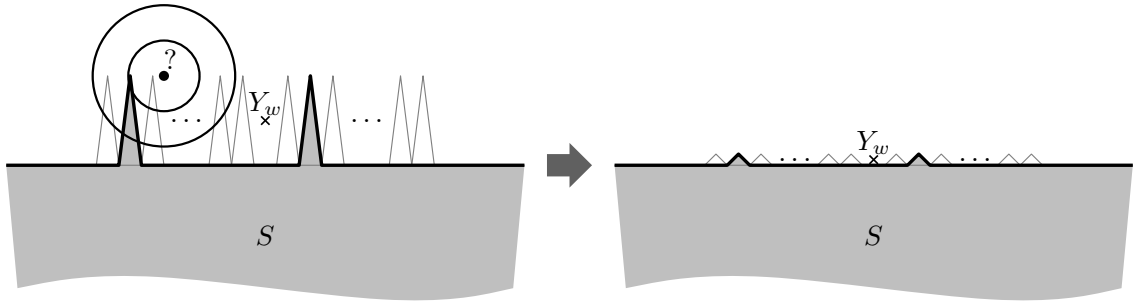}
\caption{Widget for reducing $\classNPtwo$ to $\OpCH$.
         We have $Y _w \in \OpCH (S)$ if and only if 
         there is $u$ such that 
         the slot for $(w, u)$ has a bump.
         In Ko and Yu's construction of $S$, 
         the bumps can be high (left), 
         and there can be a query that 
         requires the knowledge of $B (w, u)$ for many $u$. 
         We make the bumps low (right) in order to make $S$ 
         polynomial-time computable in our sense.}
\label{figure: convex hull widget}
\end{center}
\end{figure}
The original proof constructs, for each string~$w$, 
the widget in Figure~\ref{figure: convex hull widget} left
(or Figure~2 of \cite{ko08:_compl_of_convex_hulls_of}). 
In each of the left and right halves, 
there are exponentially many slots, one for each $u$, 
that have a bump if and only if $(w, u)$ is in $B$ (or $B (\varphi)$ for us). 
The point of this construction is that, 
while the set~$S$ is easy to compute, 
$\OpCH (S)$ is hard in the sense that
we can tell if $w$ is in $A$ (or $A (\varphi)$) by 
checking whether the middle point $Y _w$ belongs to $\OpCH (S)$. 
But this
$S$ is not easy in our sense, 
because in order to answer the query shown in 
Figure~\ref{figure: convex hull widget}, 
we need to know $B (\varphi) (w, u)$ 
for exponentially many $u$. 
To avoid this, 
we make the bumps low, 
so they make an angle of at most 
$45 ^\circ$ (Figure~\ref{figure: convex hull widget} right). 
This ensures that any one query to the $\psiclosed$-name of $S$
can be answered by checking $B (\varphi) (w, u)$ for 
at most one $(w, u)$, 
making $s$ computable in polynomial time. 

The function~$t$ queries 
whether the point $Y _w$ is in $\OpCH (S)$ 
with appropriate precision. 
Note that $t$ needs access to $\varphi$ in order to 
find the location of $Y _w$ and the necessary precision. 
\end{proof}

\subsection{Computation on real functions}
\label{subsection: applications, real functions}

\subsubsection{Representation of real functions}
We say that a non-decreasing function $\mu \tcolon \Nset \tto \Nset$ is a 
\emph{modulus of continuity}
of a function $f \in \classC [0, 1]$ if 
for all $n \in \Nset$ and $t$, $t' \in [0, 1]$ such that 
$\lvert t - t' \rvert \leq 2 ^{-\mu (n)}$, we have 
$\lvert f (t) - f (t') \rvert \leq 2 ^{-n}$ (Figure~\ref{figure: moc}). 
Note that any $f \in \classC [0, 1]$ is uniformly continuous
and thus has a modulus of continuity. 
\begin{figure}
\begin{center}
\includegraphics[scale=1.1]{./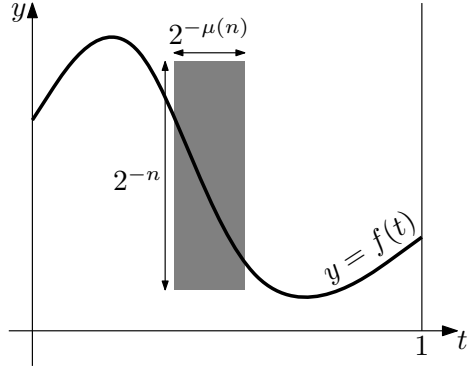}
\caption{Modulus of continuity~$\mu$ of a real function $f \in \classC [0, 1]$.}
\label{figure: moc}
\end{center}
\end{figure}

Define the representation $\deltabox$ of $\classC [0, 1]$ as follows
(see Lemmas 
\ref{lemma: box polytime iff polytime name} and 
\ref{lemma: universality} below 
for the reasons why we believe $\deltabox$
to be a natural representation): 
for $\mu \tcolon \Nset \tto \Nset$ and $\varphi \in \Bset$, 
we set $\deltabox (\langle \overline \mu, \varphi \rangle) = f$ 
if and only if 
$\mu$ is a modulus of continuity of $f$ and 
for every $n \in \Nset$ and $u \in \Dset$, 
we have $v := \varphi (0 ^n, u) \in \Dset$ and 
$\lvert f (\lsem u \rsem) - \lsem v \rsem \rvert < 2 ^{-n}$
(the string~$v$ may have to have leading $0$s padded 
in order to make $\varphi$ regular---but 
this need for padding is inconsequential,
see the penultimate paragraph of Section~\ref{subsection: second-order polynomial};
in what follows, 
we omit this padding in the description of algorithms). 
To see that $\deltabox (\varphi)$ is well-defined, 
suppose that the above condition holds for two real functions $f$ and $f'$. 
Let $t \in [0, 1]$ be arbitrary. 
Then for each $n \in \Nset$, 
there is $u \in \Dset$ 
with $\lvert t - \lsem u \rsem \rvert \leq 2 ^{-\mu (n)}$
and thus 
\begin{align}
  \lvert f (t) - f' (t) \rvert 
&
 \leq 
   \lvert f (t) - f (\lsem u \rsem) \rvert 
  +
   \lvert f (\lsem u \rsem) - \lsem \varphi (0 ^n, u) \rsem \rvert 
\\
\notag
& \qquad {}
  +
   \lvert f' (\lsem u \rsem) - \lsem \varphi (0 ^n, u) \rsem \rvert 
  +
   \lvert f' (t) - f' (\lsem u \rsem) \rvert 
\\
\notag
&
 \leq 
   2 ^{-n} + 2 ^{-n} + 2 ^{-n} + 2 ^{-n}
 =
  2 ^{-n + 2}. 
\end{align}
Since $n \in \Nset$ was arbitrary, $f (t) = f' (t)$. 
Since $t \in [0, 1]$ was arbitrary, $f = f'$. 


Recall that the only reason 
that a real number can require long $\rhoreal$-names 
was having a large absolute value. 
In contrast, 
functions in $\classC [0, 1]$ 
may have long $\deltabox$-names for two possible reasons: 
having big values, 
and having a big modulus of continuity. 

The following lemma says that 
the complexity of $\deltabox$-names of $f \in \classC [0, 1]$ matches 
the complexity of $f$ that was discussed
in Section~\ref{subsection: applications, real numbers}
using the representation~$\rhoreal$: 

\begin{lemma}[{\cite[Corollary~2.21]{ko91:_comput_compl_of_real_funct}}]
 \label{lemma: box polytime iff polytime name}
A function in $\classC [0, 1]$ is polynomial-time 
(resp.\ polynomial-space) computable
if and only if 
it is in $\deltabox$-$\classFP$ 
(resp.\ $\deltabox$-$\classFPSPACE$). 
\end{lemma}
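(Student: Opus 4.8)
The plan is to prove the two implications separately, handling the time and the space versions at once, by observing that a $\deltabox$-name $\langle\overline\mu,\varphi\rangle$ bundles exactly the two data that govern the complexity of a real function on $[0,1]$: a modulus of continuity $\mu$ and a uniform supply of dyadic approximations $\varphi(0^n,u)$ to $f$ at dyadic points $\lsem u\rsem\in[0,1]$. Recall that polynomial-time (resp.\ polynomial-space) computability of $f$ was defined in Section~\ref{subsubsection: TTE complexity compact} by an oracle machine that, given a name of the argument $x$ (which on query $0^i$ returns a dyadic within $2^{-i}$ of $x$; by Section~\ref{subsection: applications, real numbers} we may use $\rhoreal$- or $\rhosd$-names interchangeably here) together with a precision $0^n$, emits a dyadic within $2^{-n}$ of $f(x)$ within a polynomial time (resp.\ space) bound. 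The whole argument is Ko's characterization specialized to the present notation, so I would only have to check that the two packagings match.

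For the ``if'' direction, assume $f$ has a $\deltabox$-name $\langle\overline\mu,\varphi\rangle$ lying in $\classFP$ (resp.\ $\classFPSPACE$). I would describe an oracle machine that, on precision $0^n$ and a name of $x$, first recovers $m:=\mu(n+1)$ by evaluating $\overline\mu$ on any string of length $n+1$ and reading off the length of its (unary) output, then queries the name at $0^{m}$ to obtain a dyadic $d$ with $\lvert\lsem d\rsem-x\rvert\le 2^{-m}$, and finally outputs $v:=\varphi(0^{n+1},d)$. Correctness is a triangle inequality: since $\mu$ is a modulus of continuity and $\lvert\lsem d\rsem-x\rvert\le 2^{-\mu(n+1)}$ we have $\lvert f(x)-f(\lsem d\rsem)\rvert\le 2^{-(n+1)}$, while the defining property of $\varphi$ gives $\lvert f(\lsem d\rsem)-\lsem v\rsem\rvert<2^{-(n+1)}$, so $\lvert f(x)-\lsem v\rsem\rvert<2^{-n}$. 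Because the name lies in $\classFP$ (resp.\ $\classFPSPACE$), its output lengths, and in particular $\mu$, are polynomially bounded; hence $d$ has polynomial length and evaluating $\varphi$ stays within the required bound, so the machine runs in polynomial time (resp.\ space).

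For the ``only if'' direction, fix a machine $M$ and a polynomial resource bound witnessing that $f$ is computable. I would obtain $\varphi$ by running $M$ on precision $0^n$ with the \emph{exact} oracle that answers every query $0^i$ with the fixed dyadic $u$: since $u$ is a valid name of $\lsem u\rsem$, the output is a dyadic within $2^{-n}$ of $f(\lsem u\rsem)$, and this computation inherits the polynomial time (resp.\ space) bound; padding the output with $\#$'s as in Section~\ref{subsection: second-order polynomial} makes $\varphi$ regular. To produce $\overline\mu$ I would read off a modulus from the resource bound, and this last step is the main obstacle and the only genuinely non-routine point. I must argue that a machine bounded in time \emph{or in space} can only issue oracle queries of polynomially bounded length, and hence reads $x$ only to polynomial precision; consequently two arguments that agree to that precision can be fed a common set of admissible oracle answers and yield the same output, which forces $\lvert f(x)-f(x')\rvert\le 2^{-n}$ once $\lvert x-x'\rvert$ is small enough. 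This yields a polynomial $\mu$, so $\overline\mu\in\classFP$ (resp.\ $\classFPSPACE$), and completes the equivalence; it is precisely the space case that makes the \emph{query-length} bound, rather than a mere running-time bound, the crux of the argument.
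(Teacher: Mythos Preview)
The paper does not supply its own proof of this lemma; it is stated with a citation to \cite[Corollary~2.21]{ko91:_comput_compl_of_real_funct} and left unproved. Your argument is the standard Ko--Friedman one, correctly adapted to the $\deltabox$ packaging, and the key points are all present: in the ``if'' direction the polynomial bound on the name forces $\mu$ to be polynomially bounded, so the lookup $\varphi(0^{n+1},d)$ stays within resources; in the ``only if'' direction you correctly identify that the space case hinges on the query tape being charged to the space bound (Definition~\ref{definition: bounded by second-order polynomial}), so that oracle queries have polynomially bounded length and a polynomial modulus can be extracted. Two minor points you may want to tighten: the dyadic $d$ obtained from the $\rhoreal$-name need not lie in $[0,1]$, so a clipping step is needed before feeding it to $\varphi$; and in the modulus argument you should spell out that for $\lvert x-x'\rvert\le 2^{-p(n+1)-1}$ one can construct $\rhoreal$-names of $x$ and $x'$ that agree on \emph{all} strings of length at most $p(n+1)$ (not just on $0^i$), since the machine may issue arbitrary queries and the representation leaves those values unconstrained.
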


\begin{lemma}
\label{lemma: pspace-complete names}
A function in $\classC [0, 1]$ is 
$\classPSPACE$-complete in the sense of 
\cite[Section~2.2]{kawamura10:_lipsc_contin_ordin_differ_equat}
if it is $\deltabox$-$\classFPSPACE$-$\redonem$-complete
and has a polynomial modulus of continuity. 
\end{lemma}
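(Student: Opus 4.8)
The plan is to unfold the two notions and verify them one at a time. By the definition in Section~\ref{subsubsection: uniform non-uniform}, saying that $f$ is $\deltabox$-$\classFPSPACE$-$\redonem$-complete means that the string multi-function $\bigcup(\deltabox^{-1}[f])$, which sends an input $x$ of a $\deltabox$-name to the set of values $\{\psi(x):\psi\in\deltabox^{-1}[f]\}$, is $\classFPSPACE$-$\redonem$-complete. The target notion of Kawamura~\cite{kawamura10:_lipsc_contin_ordin_differ_equat} has the usual two ingredients: that $f$ be polynomial-space computable, and that a fixed $\classPSPACE$-complete problem reduce in polynomial time to evaluating $f$ at a dyadic point to a requested precision. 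I would establish these from the membership and the hardness halves of the completeness hypothesis respectively, using the polynomial modulus to manufacture a single polynomial-size name of $f$ that serves both halves.

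For membership, completeness of $\bigcup(\deltabox^{-1}[f])$ gives in particular $\bigcup(\deltabox^{-1}[f])\in\classFPSPACE$, so a polynomial-space machine outputs, on each $\varphi$-query $(0^n,u)$, some dyadic $v$ with $\lvert f(\lsem u\rsem)-\lsem v\rsem\rvert<2^{-n}$; since this constraint is pointwise, these outputs assemble into a valid approximation part $\varphi_0$. Let $\mu_0$ be the given polynomial modulus and pair $\overline{\mu_0}$ with $\varphi_0$ to obtain $\psi_0=\langle\overline{\mu_0},\varphi_0\rangle$. Because $\mu_0$ is polynomial (so $\overline{\mu_0}$ is even polynomial-time) and $\varphi_0$ is polynomial-space, $\psi_0$ is a $\deltabox$-name of $f$ lying in $\classFPSPACE$; note also that $\psi_0$ has polynomially bounded size. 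By Lemma~\ref{lemma: box polytime iff polytime name}, $f$ is polynomial-space computable, which is the upper-bound half of the target notion.

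For hardness, let $g^\ast$ be the characteristic function of a fixed $\classPSPACE$-complete language; since $g^\ast\in\classFPSPACE$ and $\bigcup(\deltabox^{-1}[f])$ is $\classFPSPACE$-$\redonem$-hard, we have $g^\ast\redonem\bigcup(\deltabox^{-1}[f])$, so there are $r,t\in\classFP$ with $r(u,\psi(t(u)))=g^\ast(u)$ for every $u$ and every name $\psi$ of $f$. I would run this reduction against the particular name $\psi_0$ built above. Decoding $t(u)$ as an input to $\psi_0$ exposes either a modulus query, answered in polynomial time from $\mu_0$ alone, or a single $\varphi$-query $(0^n,u')$, which is exactly a request to approximate $f(\lsem u'\rsem)$ to precision $2^{-n}$ with $n$ and $\lvert u'\rvert$ polynomially bounded (the precision is unary in the name format). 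Obtaining one such approximation, reassembling it into the padded value $\psi_0(t(u))=\varphi_0(0^n,u')\,1\,0^{\mu_0(\lvert u'\rvert)}$ using $\mu_0$ for the padding, and applying $r$, yields $g^\ast(u)$. This is a polynomial-time many-one reduction from a $\classPSPACE$-complete problem to evaluating $f$, robust over every valid approximation, which is the hardness half.

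The main obstacle is this last translation. The abstract reduction $r$ is written for the padded name format and is guaranteed to behave correctly only when fed a genuine value $\psi(t(u))$ of some name; a raw dyadic approximation returned by an ``evaluate $f$'' query is not in that format, and for a name with large modulus the padding would be too long to regenerate in polynomial time. The polynomial modulus hypothesis is precisely what lets me fix one polynomial-size name $\psi_0$, regenerate its padding on the fly, and thereby keep the reduction polynomial-time in $\lvert u\rvert$ rather than merely polynomial in the (possibly huge) answer length. Once this is in place the equivalence of the two completeness notions is routine, and the derivation of Theorem~\ref{theorem: ivp lower non-uniform} from the uniform Theorem~\ref{theorem: ivp uniform} via Lemma~\ref{lemma: maps complete represented} follows.
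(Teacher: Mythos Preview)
Your argument is correct and follows the same route as the paper's proof: restrict to names whose first component is the given polynomial modulus, absorb the modulus-query case (since $\mu$ is polynomial-time computable), and observe that the single surviving query $\varphi(0^n,v)$ asks precisely for a $2^{-n}$-approximation of $f(\lsem v\rsem)$, from which $r$ recovers the answer. You spell out the membership half (which the paper leaves implicit) and the padding mechanics more explicitly, but the core idea is identical.
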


\begin{proof}
Suppose that $f \in \classC [0, 1]$ is 
$\deltabox$-$\classFPSPACE$-$\redonem$-complete
and has a polynomial modulus of continuity~$\mu$. 
Then for any $A \in \classPSPACE$ 
there are polynomial-time functions $t$ and $r$ 
that satisfy the first picture of Figure~\ref{figure: type-one reductions}
for any $B \in \deltabox ^{-1} [f]$---and thus for 
any $B$ of the form $\langle \overline \mu, \varphi \rangle$
(that is, those with this particular polynomial~$\mu$ in the first component). 
A query to $B$ can ask either a value of $\mu$ or a value of $\varphi$, 
but $\mu$ is just a polynomial, 
so we may assume that $t$ only asks a query of form ``$\varphi (0 ^n, v)$?''. 
Thus, given $u$, 
an answer in $A [u]$ can be computed by $r$ from 
$u$ and a $2 ^{-n}$-approximation of $f (\lsem v \rsem)$. 
This implies that $f$ is 
$\classPSPACE$-complete. 
\end{proof}

The representation~$\deltabox$ of $\classC [0, 1]$ 
may look somewhat arbitrary at first sight. 
Here we present a property of $\deltabox$ 
that seems to make it a ``natural'' representation. 
Define the function $\OpApply \tcolon \classC [0, 1] \times [0, 1] \tto \Rset$ by 
$\OpApply (f, x) = f (x)$. 
The following lemma says that 
(the $\equivT$-equivalence class of) 
the representation~$\deltabox$ is 
the least informative representation of $\classC [0, 1]$
that makes $\OpApply$ efficiently computable
(see Section~\ref{subsubsection: translation and equivalence}
for the definitions of $\translT$ and $\equivT$). 
The proof will appear in a forthcoming paper. 

\begin{lemma}
\label{lemma: universality}
Let $\delta$ be any representation of $\classC [0, 1]$. 
Then $
\OpApply \in ([\delta, \rhoreal \restrictcodom{[0, 1]}], \rhoreal) \text- \classFPtwo
$ if and only if $\delta \translT \deltabox$. 
\end{lemma}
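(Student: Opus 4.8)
The plan is to prove the two implications separately, with the reverse direction ($\delta \translT \deltabox \Rightarrow \OpApply$ computable) being largely routine and the forward direction — the ``leastness'' of $\deltabox$ — carrying the real content.

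For the reverse implication I would first treat the base case $\OpApply \in ([\deltabox, \rhoreal\restrictcodom{[0,1]}], \rhoreal)$-$\classFPtwo$ by the obvious evaluation algorithm: to produce a $2^{-n}$-approximation of $f(x)$ from a $\deltabox$-name $\langle \overline\mu, \varphi \rangle$ of $f$ and a $\rhoreal$-name $\psi$ of $x$, read $\mu(n+2)$ off $\overline\mu$, query $\psi$ at $0^{\mu(n+2)}$ to get a dyadic $u$ with $\lvert \lsem u \rsem - x \rvert \le 2^{-\mu(n+2)}$, and output $\varphi(0^{n+2}, u)$; the modulus condition together with the approximation guarantee of $\varphi$ closes the triangle inequality, and the procedure runs within a second-order polynomial in $\lvert \langle \overline\mu, \varphi \rangle \rvert$ and $n$. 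Given $\delta \translT \deltabox$ witnessed by $F \in \classFPtwo$, I would precompose: on a $\delta$-name $\varphi$ of $f$, compute the $\deltabox$-name $F(\varphi)$ and feed it to the evaluation algorithm. That the combined procedure is in $\classFPtwo$ follows either by direct inspection or from closure of $\classFPtwo$ under composition (as for Kapron--Cook's basic feasible functionals), giving $\OpApply \in ([\delta, \rhoreal\restrictcodom{[0,1]}], \rhoreal)$-$\classFPtwo$.

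For the forward implication, suppose a machine $M$ with second-order polynomial time bound $P$ computes $\OpApply$ with respect to $[\delta, \rhoreal\restrictcodom{[0,1]}]$ and $\rhoreal$. Given a $\delta$-name $\varphi$ of $f$, I would build a $\deltabox$-name $F(\varphi) = \langle \overline\mu, \phi \rangle$. For the approximation part, fix for each dyadic $u$ with $\lsem u \rsem \in [0,1]$ the canonical $\rhoreal\restrictcodom{[0,1]}$-name $\psi_u$ (the constant function $0^i \mapsto u$) and set $\phi(0^n, u) = M^{\langle \varphi, \psi_u \rangle}(0^n)$; this is a $2^{-n}$-approximation of $f(\lsem u \rsem)$ by correctness of $M$, and is computable from oracle $\varphi$ and input $(0^n, u)$ in $\classFPtwo$ since $\psi_u$ is trivially simulable. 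The crux is the modulus $\mu$. Here I would exploit that, uniformly over $x \in [0,1]$, one can choose $\rhoreal\restrictcodom{[0,1]}$-names of linearly bounded size $\lvert \psi \rvert \le \ell$ for a \emph{fixed} $\ell$ independent of $x$; hence $\lvert \langle \varphi, \psi \rangle \rvert$, and therefore the running time $T(n) := P(\lvert \langle \varphi, \psi \rangle \rvert)(n)$ on input $0^n$, is dominated by a quantity $\widetilde T(n)$ depending only on $\lvert \varphi \rvert$ and $n$ (by monotonicity of second-order polynomials). Since $M$ makes at most $\widetilde T(n)$ steps, every query it issues has length $\le \widetilde T(n)$, so $M^{\langle \varphi, \psi \rangle}(0^n)$ depends only on the values of $\psi$ on strings of length $\le \widetilde T(n)$.

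I would then convert this into a modulus of continuity of $f = \delta(\varphi)$. Given $x, x' \in [0,1]$ with $\lvert x - x' \rvert \le 2^{-\widetilde T(n) - 2}$, I construct names $\psi, \psi'$ of $x, x'$ (both of size $\le \ell$) that agree on every string of length $\le \widetilde T(n)$: for unary arguments $0^i$ with $i \le \widetilde T(n)$ use a single dyadic approximation good for both points (possible precisely because they are this close), agree on all remaining short strings by fixed padding, and differ only on arguments of length $> \widetilde T(n)$. Then $M$ receives identical oracle answers and outputs the same dyadic $v$, whence $\lvert f(x) - \lsem v \rsem \rvert, \lvert f(x') - \lsem v \rsem \rvert < 2^{-n}$ and so $\lvert f(x) - f(x') \rvert < 2^{-n+1}$. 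Thus $\mu(m) := \widetilde T(m+1) + 2$ is a modulus of continuity of $f$; since $\widetilde T$ arises by evaluating the fixed second-order polynomial $P$ at the queryable size $\lvert \varphi \rvert$, the map $\varphi \mapsto \overline\mu$ is in $\classFPtwo$, and pairing $\overline\mu$ with $\phi$ yields $F \in \classFPtwo$ with $\deltabox(F(\varphi)) = f = \delta(\varphi)$, i.e.\ $\delta \translT \deltabox$. The main obstacle is this modulus extraction, and within it two points demand care: (i) securing a size bound $\ell$ on names of points of $[0,1]$ that is uniform in the point, so that $\widetilde T$ — and hence the candidate modulus — is independent of $x$; and (ii) constructing, for nearby $x$ and $x'$, two genuine regular $\rhoreal\restrictcodom{[0,1]}$-names that agree on all strings up to the query-length bound, reconciling regularity with the freedom available at non-unary arguments. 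Everything else — the evaluation algorithm, the component $\phi$, and the closure of $\classFPtwo$ under composition — is routine.
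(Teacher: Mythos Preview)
The paper does not actually contain a proof of this lemma: immediately after stating it, the authors write ``The proof will appear in a forthcoming paper.'' Hence there is nothing in the present paper against which to compare your proposal.

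That said, your argument follows the natural line one would expect and appears sound. The reverse direction is indeed routine. For the forward direction, extracting a modulus of continuity from the running-time bound of an evaluation machine is the standard mechanism (it is the type-two analogue of the classical fact that a polynomial-time computable $f \in \classC[0,1]$ has a polynomial modulus), and you have correctly identified the two genuine technical points: the need for a size bound on $\rhoreal\restrictcodom{[0,1]}$-names that is \emph{uniform} over $[0,1]$ (so that the derived modulus does not depend on the point), and the construction of regular names of nearby points that agree on all short arguments. One minor point you gloss over is ensuring that the resulting $\phi$ and $\langle \overline\mu, \phi \rangle$ are themselves regular; the paper handles this by the padding convention discussed at the end of Section~\ref{subsection: second-order polynomial}, and the same device works here. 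A second small point: when you set $\psi_u$ to be the constant function with value $u$, its size is the constant $\lvert u \rvert$, which is fine for computing $\phi(0^n,u)$ since $\lvert u \rvert$ is part of the input length, but you should note explicitly that this is why the simulation stays within $\classFPtwo$.
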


The above definitions and lemmas extend to 
some well-behaved compact domains other than $[0, 1]$
(we keep writing $\deltabox$ by abuse of notation). 
To discuss the complexity of the operator $\OpLipIVP$
(Section~\ref{subsubsection: non-uniform results}), 
we define a representation~$\deltalip$ of 
the space $\classCL [[0, 1] \times [-1, 1]]$ of Lipschitz continuous functions 
by setting $\deltalip (\langle \varphi, 0 ^L \rangle) = g$ if and only if 
$\varphi$ is a $\deltabox$-name of $g$ and 
$L \in \Nset$ satisfies \eqref{equation: lipschitz}
(regard the string $0 ^L$ as the constant function whose value is $0 ^L$). 

\subsubsection{Complexity of the operator that solves differential equations}
Now we can formulate the uniform version of 
Theorems \ref{theorem: ivp upper non-uniform}
and \ref{theorem: ivp lower non-uniform} as follows
(a proof will be given shortly). 

\begin{theorem}
 \label{theorem: ivp uniform}
$\OpLipIVP$ is 
$(\deltalip, \deltabox)$-$\classFPSPACEtwo$-$\redtwom$-complete. 
\end{theorem}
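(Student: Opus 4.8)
The plan is to prove the two halves of completeness separately: that $\OpLipIVP$ lies in $(\deltalip, \deltabox)$-$\classFPSPACEtwo$ (the upper bound), and that it is $(\deltalip,\deltabox)$-$\classFPSPACEtwo$-$\redtwom$-hard (the lower bound). In both directions the strategy is to reuse the technical content of the non-uniform Theorems~\ref{theorem: ivp upper non-uniform} and \ref{theorem: ivp lower non-uniform} essentially verbatim, and merely re-examine the resource bounds, now tracking their dependence on the size $\lvert \langle \varphi, 0 ^L\rangle\rvert$ of the $\deltalip$-name rather than treating $g$ as a fixed function.

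For the upper bound I would exhibit the machine that, given a $\deltalip$-name $\langle\varphi,0^L\rangle$ of $g$ and a query to the output $\deltabox$-name of $h = \OpLipIVP(g)$, produces the requested value. The modulus component is easy: from the modulus $\mu_g$ recorded inside $\varphi$ together with a single approximate value of $g$ one obtains a bound $M$ on $\lVert g\rVert_\infty$, and since $\lvert h'(t)\rvert = \lvert g(t,h(t))\rvert \le M$ the solution is $M$-Lipschitz, so $n \mapsto n + \lceil\log M\rceil$ is a valid modulus computable in polynomial space. The value component is the heart: to approximate $h(\lsem v\rsem)$ within $2^{-n}$ I would run the Euler-type recurrence underlying Ko's proof on a grid of spacing $2^{-k}$, where $k$ is chosen second-order polynomially large in $n$, $L$, $\log M$, and $\mu_g$ so that the accumulated error, which propagates by the Gronwall/Lipschitz factor $e^{L}$, stays below $2^{-n}$. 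The exponentially many grid steps are never stored but recomputed by a divide-and-conquer recursion of depth $k$ — exactly the iteration idiom that makes $\textsc{power}^2$ (Lemma~\ref{lemma: pspacetwo-complete}) a $\classPSPACEtwo$ problem — so that only $O(k)$ intermediate states, each of polynomially many bits, are ever live. The whole computation then fits in space bounded by a second-order polynomial in the name size.

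For the lower bound I would show $A \redtwom \OpLipIVP$ for an arbitrary $A \in \classFPSPACEtwo$ by uniformizing Kawamura's construction. Given $A$ computed by a deterministic oracle machine running in space $P(\lvert\varphi\rvert)(\lvert u\rvert)$, the converter $s$ maps the oracle $\varphi$ (the input to $A$) to a $\deltalip$-name $s(\varphi)$ of a Lipschitz function $g_\varphi$ encoding this space-bounded computation-with-oracle-$\varphi$ just as in the non-uniform proof, now with all the machine data relativized to $\varphi$; the solution $h_\varphi = \OpLipIVP(g_\varphi)$ is arranged to be a plateau function whose value on the plateau around a dyadic point depending on $x$ encodes the string $A(\varphi)(x)$. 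Since $\lvert A(\varphi)(x)\rvert$ is second-order polynomially bounded, a single value of the output $\deltabox$-name, read to a correspondingly high precision $2^{-N}$, pins down those bits irrespective of which valid modulus the name happens to use. Accordingly $t(\varphi)(x)$ issues exactly that one value query and $r(\varphi)(x,\cdot)$ reads off the encoded bits, meeting the single-query shape demanded by $\redtwom$ in Definition~\ref{definition: many-one reduction}, while $s,t,r\in\classFPtwo$.

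The main obstacle in each half is the careful bookkeeping rather than any new idea. For the upper bound it is the second-order space analysis: verifying that the divide-and-conquer recursion genuinely keeps the space polynomial while the grid fineness $k$, and hence the per-step bit precision, grows only second-order polynomially with $L$ and $\mu_g$ — the exponential factor $e^{L}$ must be absorbed into an additive $O(L)$ term in $k$, and the modulus-in-$t$ of $g$ must be controlled so that the local truncation error is small enough. For the lower bound it is checking that $g_\varphi$ is simultaneously (i) genuinely Lipschitz with a constant bounded uniformly in $\varphi$ and (ii) equipped with a $\deltabox$-name produced from $\varphi$ by a second-order polynomial-time machine; once Kawamura's widget is seen to relativize with these two properties intact, the remaining verifications that $s,t,r$ satisfy Definition~\ref{definition: many-one reduction} are routine.
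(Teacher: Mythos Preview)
Your plan matches the paper's proof in both directions, with one unnecessary complication and one point worth flagging explicitly.

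For the upper bound, the divide-and-conquer recursion is not needed. The forward Euler iteration computes $\tilde h(2^{-p}(T{+}1))$ from $\tilde h(2^{-p}T)$ sequentially, and since only the current value (polynomially many bits) must be retained, a straightforward left-to-right sweep over the $2^p$ grid points already fits in polynomial space; this is exactly what the paper does, and it is the same idiom that puts $\textsc{power}^2$ in $\classPSPACEtwo$. Your error analysis (Gronwall factor $e^{O(L)}$ absorbed into an additive $O(L)$ in the grid exponent, together with the modulus of $g$) is the right one and is what the paper carries out.

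For the lower bound you correctly anticipate the one delicate point: because $\redtwom$ allows only a single query to the output name, the solution $h$ must encode \emph{all} bits of $A(\varphi)(x)$ in one value. The paper flags this explicitly as a modification of the non-uniform construction---the original argument behind Theorem~\ref{theorem: ivp lower non-uniform} yields one bit per query and hence only $\redtwoT$-hardness (Corollary~\ref{corollary: ivp uniform, strong reduction}). The fix is to stack the bit-widgets $g_{(v,0^i)}$ vertically within each strip so that $h(c_v)$ carries the full output string in its binary expansion. Your proposal says ``essentially verbatim'' but then describes precisely this packed encoding, so you have the right picture; just be aware that ``verbatim'' undersells the work, and that verifying the stacked $g$ remains Lipschitz with a uniform constant (your point~(i)) is where the parameter bookkeeping concentrates.
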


As corollaries, 
we have 
Theorem~\ref{theorem: ivp upper non-uniform} 
by 
Lemmas \ref{lemma: maps computable represented}.%
\ref{enumi: maps computable represented, functional}
and \ref{lemma: box polytime iff polytime name}, 
and 
Theorem~\ref{theorem: ivp lower non-uniform} 
by 
Lemmas 
\ref{lemma: maps complete represented}.%
\ref{enumi: maps complete represented, functional} 
and \ref{lemma: pspace-complete names}. 

The following weaker version of Theorem~\ref{theorem: ivp uniform}, 
stated with the stronger reduction $\redtwoT$ from
Definition~\ref{definition: many-one reduction}.\ref{enumi: Weihrauch reduction}, 
is slightly easier to prove (see the end of the section): 

\begin{corollary}
 \label{corollary: ivp uniform, strong reduction}
$\OpLipIVP$ is 
$(\deltalip, \deltabox)$-$\classFPSPACEtwo$-$\redtwoT$-complete. 
\end{corollary}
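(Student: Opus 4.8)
I would begin by observing that the corollary is already a consequence of Theorem~\ref{theorem: ivp uniform}: the membership claim is identical in both, and every $\redtwom$-reduction is a fortiori a $\redtwoT$-reduction (as noted after Definition~\ref{definition: many-one reduction}), so $\classFPSPACEtwo$-$\redtwom$-hardness entails $\classFPSPACEtwo$-$\redtwoT$-hardness. The reason the $\redtwoT$-version admits an easier self-contained proof is that, in the hardness direction, $\redtwoT$ lets the output converter query the name of the solution repeatedly rather than once; the plan is to exploit exactly this freedom, proving membership and $\redtwoT$-hardness separately by uniformizing the non-uniform arguments behind Theorems~\ref{theorem: ivp upper non-uniform} and~\ref{theorem: ivp lower non-uniform}.

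For membership I would uniformize Ko's proof of Theorem~\ref{theorem: ivp upper non-uniform}. The goal is a deterministic machine that, on oracle a $\deltalip$-name $\langle \varphi, 0 ^L \rangle$ of $g$ and query $(0 ^n, u)$, returns a dyadic $2 ^{-n}$-approximation of $\OpLipIVP (g) (\lsem u \rsem)$, and that separately emits a modulus of continuity for the solution. The modulus is read off from $L$ together with a bound $M$ on $\lvert g \rvert$ that is itself controlled by $\lvert \varphi \rvert$, since $h' (t) = g (t, h (t))$ forces $h$ to be $M$-Lipschitz. The values are produced by the space-bounded recursive integration scheme of Ko's proof, whose queries to $g$ are answered through $\varphi$; the one thing to check is that the space used is a second-order polynomial in $\lvert \langle \varphi, 0 ^L \rangle \rvert$, which holds because the recursion depth is polynomial and each query to $g$ costs space polynomial in the name size.

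For $\redtwoT$-hardness I would take an arbitrary $A \in \classFPSPACEtwo$, computed by a machine $M$ running in second-order polynomial space, and uniformize Kawamura's construction behind Theorem~\ref{theorem: ivp lower non-uniform}. Given $\varphi$, the converter $s$ outputs a $\deltalip$-name of the Lipschitz function $g _\varphi$ encoding the one-step dynamics of the relativized computation $M ^\varphi$; crucially, this one-step function is computable in second-order polynomial time with oracle $\varphi$, even though the full computation it drives uses polynomial space, so $s \in \classFPtwo$. The solution $h _\varphi = \OpLipIVP (g _\varphi)$ then encodes the entire run of $M ^\varphi$. The converter $r$, given $\langle \varphi, \psi \rangle$ with $\psi$ a $\deltabox$-name of $h _\varphi$, recovers an element of $A [\varphi]$ by reading off the relevant values of $h _\varphi$ from $\psi$; because $\redtwoT$ permits $r$ to query $\psi$ as often as needed, it may extract the solution at however many points and precisions are required to reconstruct each output symbol of $A (\varphi)$. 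This is precisely the convenience absent from the $\redtwom$-version, where every output symbol must be recoverable from a single value of the solution.

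The main obstacle I expect is not the combinatorial core—inherited intact from Kawamura's non-uniform proof—but the verification of uniformity on the hardness side: that $s$ delivers a genuine $\deltalip$-name of $g _\varphi$ (correct modulus of continuity, a valid Lipschitz constant $0 ^L$, and regularity) in second-order polynomial time, and that embedding the oracle queries made by $M ^\varphi$ into the one-step function $g _\varphi$ spoils neither its second-order polynomial-time computability nor the Lipschitz bound needed for $\OpLipIVP (g _\varphi)$ to be defined.
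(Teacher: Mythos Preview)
Your proposal is correct and matches the paper: both derive the corollary from Theorem~\ref{theorem: ivp uniform} via the implication $\redtwom \Rightarrow \redtwoT$, and both note that a direct $\redtwoT$-hardness argument needs only the straightforward uniformization of the non-uniform construction (no vertical stacking of widgets to pack all output bits into one solution value, since under $\redtwoT$ the converter $r$ may query the $\deltabox$-name $\psi$ repeatedly). One descriptive quibble: the Lipschitz $g_\varphi$ in Kawamura's construction does not literally encode the one-step transition of $M^\varphi$ but rather a QBF-based gadget (cf.\ Lemma~\ref{lemma: main lemma of paris}); this does not affect your plan, since following that construction will lead you to the right object.
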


As we noted in Lemma~\ref{lemma: translation and completeness}, 
this is a more robust result in the sense that 
it is invariant under replacing representations to $\equivT$-equivalent ones. 
A drawback is that Corollary~\ref{corollary: ivp uniform, strong reduction} 
does not directly yield 
Theorem~\ref{theorem: ivp lower non-uniform}, 
because Lemma~\ref{lemma: pspace-complete names} requires 
$\classFPSPACE$-$\redonem$-completeness, 
whereas replacing $\redtwom$ by $\redtwoT$ in the assumption of 
\ref{lemma: maps complete represented}.%
\ref{enumi: maps complete represented, functional} 
also changes $\redonem$ to $\redoneT$ in the conclusion. 
We can still obtain 
Corollary~\ref{corollary: ivp lower non-uniform, weak}. 

The rest of the section is devoted to the proof of 
Theorem~\ref{theorem: ivp uniform}. 
The positive part ($\OpLipIVP \in (\deltalip, \deltabox)$-$\classFPSPACEtwo$) 
will be verified by 
checking that the proof of Theorem~\ref{theorem: ivp upper non-uniform} 
can be made uniform. 
For the hardness, 
we need to modify slightly the construction in~%
the original proof of Theorem~\ref{theorem: ivp lower non-uniform}
(this modification is not needed if we only 
want Corollary~\ref{corollary: ivp uniform, strong reduction}). 

\begin{proof}%
[Proof of Theorem~\ref{theorem: ivp uniform}, computability]
Given a $\deltalip$-name $\langle \overline \mu, \varphi, 0 ^L \rangle$ of $g$, 
we need to find a 
$\deltabox$-name $\langle \overline \nu, \psi \rangle$ of $h = \OpLipIVP (g)$. 
Recall that $\mu$ is a modulus of continuity of $g$, and $
 \lvert \lsem \varphi (0 ^q, u, v) \rsem - f (\lsem u \rsem, \lsem v \rsem) \rvert
<
 2 ^{-q}
$ for each $u$, $v \in \Dset$ 
(such that $(\lsem u \rsem, \lsem v \rsem) \in [0, 1] \times [-1, 1]$). 

It is easy to find a modulus of continuity $\nu$ of $h$: 
let $\nu (n) = n + M$, 
where $M \in \Nset$ is any number such that 
the values of $g$ always stay in $[-2 ^M, 2 ^M]$. 
For example, $
M = \lceil \log _2 (\lvert \lsem \varphi (\varepsilon, +0/1, +0/1) \rsem \rvert + 1 + 2 ^{\mu (0)}) \rceil
$. 

\def\Deltat{\varDelta t}

To obtain $\psi$, 
we apply the forward Euler method with step size $2 ^{-p}$ 
to the approximation of $g$ with precision $2 ^{-q}$ 
(we will specify $p$ and $q$ shortly). 
That is, 
we define an approximation $\tilde h _{p, q} \in \classC [0, 1]$ of $h$ 
by letting $
\tilde h _{p, q} (0) = 0
$ and then defining $
\tilde h _{p, q} 
$ on $[2 ^{-p} T, 2 ^{-p} (T + 1)]$, 
for each $T = 0, \ldots, 2 ^p - 1$ inductively, 
to be linear 
with slope approximately $g (2 ^{-p} T, h (2 ^{-p} T))$: formally, 
\begin{equation} 
 \label{equation: Euler step}
  \tilde h _{p, q} (2 ^{-p} T + \Deltat) 
 =
  \tilde h _{p, q} (2 ^{-p} T) + \Deltat \lsem \varphi (0 ^q, u, v) \rsem, 
\qquad
  0 \leq \Deltat \leq 2 ^{-p}, 
\end{equation}
for some $u$, $v \in \Dset$ 
with $\lsem u \rsem = 2 ^{-p} T$ and $\lsem v \rsem = \tilde h _{p, q} (2 ^{-p} T)$. 
Obviously, we can compute such a function $\tilde h _{p, q}$ 
in space polynomial in $p$ and $q$ 
in the sense that there is 
$\mathit{Euler} \in \classFPSPACEtwo$ such that $
\lsem \mathit{Euler} (\varphi) (0 ^p, 0 ^q, u) \rsem = \tilde h _{p, q} (\lsem u \rsem)
$
for every $u \in \Dset$. 

\newcommand{\e}{\mathrm e}

Let $\psi (0 ^n, u) = \mathit{Euler} (\varphi) (0 ^p, 0 ^q, u)$, 
where $p = \max \{\mu (n + 8 L), n + 8 L + M\}$ and 
$q = n + 8 L$. 
We claim that $\langle \overline \nu, \psi \rangle$ is a $\deltabox$-name of $h$ 
(this proves the desired $\classFPSPACEtwo$-computability, 
since $p$ and $q$ are bounded polynomially in 
$\lvert \varphi \rvert$, $\mu$ and $n$, $L$). 
This means that $
\lvert \tilde h _{p, q} (t) - h (t) \rvert \leq 2 ^{-n}
$ for any $t \in [0, 1]$. 
More strongly, 
we prove, by induction on $T = 0$, \ldots, $2 ^p - 1$, that
\begin{equation}
\label{equation: Euler induction hypothesis}
\lvert \tilde h _{p, q} (t) - h (t) \rvert \leq 2 ^{-n} \e ^{4 L (t - 1)} 
\end{equation} 
for all $t \in [2 ^{-p} T, 2 ^{-p} (T + 1)]$. 
We may assume \eqref{equation: Euler induction hypothesis} for $t = 2 ^{-p} T$
as the induction hypothesis. 
The approximate value $\tilde h _{p, q} (2 ^{-p} T + \Deltat)$ is 
defined by \eqref{equation: Euler step}, 
whereas the true solution~$h$ satisfies
\begin{equation}
 \label{equation: Euler step true}
 h (2 ^{-p} T + \Deltat) 
= 
  h (2 ^{-p} T) 
 + 
  \int _{2 ^{-p} T} ^{2 ^{-p} T + \Deltat} 
   g \bigl( \tau, h (\tau) \bigr) \, \mathrm d \tau. 
\end{equation}
The error added by this approximation is 
\begin{equation}
 \label{equation: Euler error estimate}
  \biggl|
   \Deltat \lsem \varphi (0 ^q, u, v) \rsem 
  - 
   \int _{2 ^{-p} T} ^{2 ^{-p} T + \Deltat} 
    g \bigl( \tau, h (\tau) \bigr) \, \mathrm d \tau 
  \biggr|
 \leq
  4 L 2 ^{-n} \e ^{4 L (2 ^{-p} T - 1)} \Deltat, 
\end{equation}
because 
\begin{align}
&
   \bigl| 
    \lsem \varphi (0 ^q, u, v) \rsem 
   - 
    g \bigl( \tau, h (\tau) \bigr)
   \bigr| 
\\ 
\notag
& \quad
 \leq
   \bigl| 
    \lsem \varphi (0 ^q, u, v) \rsem 
   - 
    g (\lsem u \rsem, \lsem v \rsem)
   \bigr| 
  +
   \bigl| 
    g (\lsem u \rsem, \lsem v \rsem)
   - 
    g (\tau, \lsem v \rsem)
   \bigr| 
  +
   \bigl| 
    g (\tau, \lsem v \rsem)
   - 
    g \bigl( \tau, h (\tau) \bigr)
   \bigr| 
\\ 
\notag
& \quad
 \leq
   2 ^{-q}
  + 
   2 ^{-n - 8 L}
  +
   L 
    \lvert \lsem v \rsem - h (\tau) \rvert
\\ 
\notag
& \quad
 \leq 
   2 ^{-n - 8 L}
  + 
   2 ^{-n - 8 L}
  + 
   L \bigl( 
    \lvert \lsem v \rsem - h (2 ^{-p} T) \rvert + \lvert h (2 ^{-p} T) - h (\tau) \rvert
   \bigr)
\\ 
\notag
& \quad
 \leq 
   2 ^{-n - 8 L}
  + 
   2 ^{-n - 8 L}
  + 
   L \bigl( 2 ^{-n} \e ^{4 L (2 ^{-p} T - 1)} + 2 ^{-p} 2 ^M \bigr) 
\\ 
\notag
& \quad
 \leq 
   L \bigl( 2 ^{-n - 8 L} + 2 ^{-n - 8 L} + 2 ^{-n} \e ^{4 L (2 ^{-p} T - 1)} + 2 ^{-n - 8 L} \bigr) 
 \leq
  4 L 2 ^{-n} \e ^{4 L (2 ^{-p} T - 1)}, 
\end{align}
where the second, third and fifth inequalities 
come from $
p \geq \mu (n + 8 L)
$, $
q \geq n + 8 L
$, $
p \geq M + n + 8 L
$, respectively. 
Using \eqref{equation: Euler error estimate} 
and the induction hypothesis, 
we compare \eqref{equation: Euler step} and \eqref{equation: Euler step true} 
to obtain
\begin{multline}
  \bigl| \tilde h _{p, q} (2 ^{-p} T + \Deltat) - h (2 ^{-p} T + \Deltat) \bigr| 
 \leq 
  2 ^{-n} \e ^{4 L (2 ^{-p} T - 1)} + 4 L 2 ^{-n} \e ^{4 L (2 ^{-p} T - 1)} \Deltat
\\
 =
  2 ^{-n} \e ^{4 L (2 ^{-p} T - 1)} (1 + 4 L \Deltat)
 \leq
  2 ^{-n} \e ^{4 L (2 ^{-p} T - 1)} \e ^{4 L \Deltat} 
 =
  2 ^{-n} \e ^{4 L (2 ^{-p} T - 1 + \Deltat)}, 
\end{multline}
as desired. 
\end{proof}

For the hardness, 
the core part of the proof can be done by relativizing the argument for 
the non-uniform version~%
\cite{kawamura10:_lipsc_contin_ordin_differ_equat}. 
Since the proof was by reduction from the problem~$\textsc{qbf}$, 
we use the relativized version $\textsc{qbf} ^2$ from 
Lemma~\ref{lemma: pspacetwo-complete}. 
Starting from $\textsc{qbf} ^2$, 
we follow the construction in
\cite[Lemma~4.1]{kawamura10:_lipsc_contin_ordin_differ_equat}, 
which uniformizes and yields the following. 
Let $\iota _{\varSigma ^*}$ be the representation of $\varSigma ^*$ 
which encodes a finite string $u$ by the constant function with value $u$. 
Let $\Lambda$ be the set of non-decreasing functions from $\Nset$ to $\Nset$, 
and let $\iota _{\Lambda}$ be its representation 
defined by $\iota _{\Lambda} (\varphi) = \overline \varphi$. 

\begin{lemma}
 \label{lemma: main lemma of paris}
Let $L \in \classPSPACEtwo$. 
Then there are a second-order polynomial~$P$ and a 
function $
 G
\in
 (
  [
   \id, \iota _{\Lambda}, \iota _{\varSigma ^*}, 
   \rhoreal \restrictcodom{[0, 1]}, \rhoreal \restrictcodom{[-1, 1]}
  ], 
  \rhoreal 
 )
\text-
\classFPtwo
$ such that 
for each $\varphi \in \dom L$, $\lambda \in \Lambda$ 
and $u \in \varSigma ^*$, 
the function 
$g ^{\varphi, \lambda} _u \tcolon [0, 1] \times [-1, 1] \tto \Rset$
defined by $g ^{\varphi, \lambda} _u (t, y) = G (\varphi, \lambda, u, t, y)$
satisfies
\begin{enumerate}
 \item \label{enumi: boundary}
  $g ^{\varphi, \lambda} _u (0, y) = g ^{\varphi, \lambda} _u (1, y) = 0$ for all $y \in [-1, 1]$; 
 \item \label{enumi: Lipschitz}
  $
   \bigl\lvert g ^{\varphi, \lambda} _u (t, y _0) - g ^{\varphi, \lambda} _u (t, y _1) \bigr\rvert 
  \leq 
   2 ^{-\lambda (\lvert u \rvert)} \lvert y _0 - y _1 \rvert 
  $ for any $
t \in [0, 1]$ and $y _0, y _1 \in [-1, 1]
  $; 
 \item \label{enumi: output}
  $g ^{\varphi, \lambda} _u (t, y) \in \dom \OpLipIVP$, and 
  $h ^{\varphi, \lambda} _u := \OpLipIVP \bigl( g ^{\varphi, \lambda} _u \bigr)$ satisfies
  $
 h ^{\varphi, \lambda} _u (1) 
= 
 2 ^{-P (\lvert \varphi \rvert, \lambda) (\lvert u \rvert)} \cdot L (\varphi) (u)
  $. 
\end{enumerate}
\end{lemma}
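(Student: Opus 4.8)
The plan is to reduce the arbitrary functional $L$ to the concrete complete problem $\textsc{qbf}^2$ and then relativize the ODE-encoding from the proof of Theorem~\ref{theorem: ivp lower non-uniform}. First I would invoke the $\classPSPACEtwo$-$\redtwop$-completeness of $\textsc{qbf}^2$ (Lemma~\ref{lemma: pspacetwo-complete}) to fix functions $s, t \in \classFPtwo$ as in Definition~\ref{definition: many-one reduction}.\ref{enumi: many-one reduction, predicate}: from $\varphi$ one obtains a predicate $s(\varphi) \in \Pset$, and the string $t(\varphi)(u)$ encodes a quantified Boolean formula whose truth value, when its predicate symbol is interpreted as $s(\varphi)$, equals $L(\varphi)(u)$. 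This reduces the task to encoding, into a single ODE, the recursive evaluation of a quantified Boolean formula relative to an oracle predicate extracted from $\varphi$.

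Next I would relativize the non-uniform construction. In the original proof one builds a right-hand side that is small in $y$ but rapidly oscillating in $t$: over a nested family of dyadic subintervals of $[0,1]$, localized bumps steer the solution so that integrating $h'(t) = g(t, h(t))$ simulates the alternating evaluation of the formula, and the net displacement at $t = 1$ equals a suitably scaled truth value. Relativizing amounts to letting the simulated space-bounded machine consult $s(\varphi)$, i.e.\ routing its queries through the oracle $\varphi$ of the second-order machine computing $G$. The boundary condition~\ref{enumi: boundary} is read off from the construction, since the bumps are damped to vanish at the endpoints; and the output identity~\ref{enumi: output}, $h^{\varphi,\lambda}_u(1) = 2^{-P(\lvert\varphi\rvert,\lambda)(\lvert u\rvert)} \cdot L(\varphi)(u)$, comes from tracking the scaling introduced by the bumps, with $P$ the second-order polynomial bounding the precision (equivalently, the space) of the simulation; because $s$ and $t$ are in $\classFPtwo$, the relevant lengths $\lvert s(\varphi)\rvert$ and $\lvert t(\varphi)(u)\rvert$ are themselves second-order-polynomially bounded in $\lvert\varphi\rvert$ and $\lvert u\rvert$, so $P$ composes to a second-order polynomial in $\lvert\varphi\rvert$, $\lambda$ and $\lvert u\rvert$.

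The genuinely new ingredient, where the uniform version departs from the non-uniform one, is that the Lipschitz constant in~\ref{enumi: Lipschitz} is prescribed by the external parameter $\lambda$. I would scale $g$ in the $y$-direction so that its Lipschitz constant is at most $2^{-\lambda(\lvert u\rvert)}$, and compensate for the resulting attenuation of the encoded signal in the output scaling---this is precisely why $P$ is permitted to depend on the function argument $\lambda$. It then remains to check membership of $G$ in $([\id, \iota_\Lambda, \iota_{\varSigma^*}, \rhoreal\restrictcodom{[0,1]}, \rhoreal\restrictcodom{[-1,1]}], \rhoreal)$-$\classFPtwo$: to output a $2^{-m}$-approximation of $g^{\varphi,\lambda}_u(t,y)$ one only needs to locate the bump containing $t$ and to evaluate finitely many bits of the simulated configuration, each step using space polynomial in $m$, $\lvert u\rvert$ and the relevant values of $\lambda$ and $\lvert\varphi\rvert$, so that the running time is bounded by a second-order polynomial applied through the oracle $\varphi$.

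The hard part will be bookkeeping rather than ideas. One must verify that the relativized construction still yields a genuinely continuous $g$ whose modulus of continuity in $t$ and whose output scaling are simultaneously controlled by a single second-order polynomial, and---most delicately---that shrinking the Lipschitz constant to $2^{-\lambda(\lvert u\rvert)}$ while threading $\lambda$ through the reduction $L \redtwop \textsc{qbf}^2$ does not disturb the \emph{exact} equality in~\ref{enumi: output}. Since the analytic core (existence, uniqueness, and the error propagation of the encoding) is already established in~\cite{kawamura10:_lipsc_contin_ordin_differ_equat}, the remaining work is to confirm that every quantity appearing there depends on $\varphi$, $\lambda$ and $u$ through a second-order polynomial and an $\classFPtwo$ oracle computation, which is exactly what membership of $G$ in $\classFPtwo$ demands.
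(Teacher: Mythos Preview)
Your proposal is correct and follows the same approach the paper indicates: the paper does not give a detailed proof of this lemma but simply says to start from $\textsc{qbf}^2$ and relativize the construction in \cite[Lemma~4.1]{kawamura10:_lipsc_contin_ordin_differ_equat}, which is exactly the plan you outline (reduce $L$ to $\textsc{qbf}^2$ via Lemma~\ref{lemma: pspacetwo-complete}, then uniformize the ODE encoding, threading the parameter~$\lambda$ through the $y$-scaling). Your discussion of why $G \in \classFPtwo$ despite encoding a $\classPSPACEtwo$ computation---namely that evaluating $g$ at a point only requires locating a single bump and one local step of the simulation---is the right reason, and your identification of the $\lambda$-dependent Lipschitz constant as the one ingredient needing care beyond straightforward relativization is accurate.
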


\begin{proof}%
[Proof of Theorem~\ref{theorem: ivp uniform}, hardness]
Let $F \in \classFPSPACEtwo$. 
We need to show that $F \redtwom \deltabox ^{-1} \circ \OpLipIVP \circ \deltalip$. 
We may assume that $F$ is a total function and 
that there is a second-order polynomial~$Q$ such that
$F (\varphi) (v)$ has 
length exactly $Q (\lvert \varphi \rvert) (\lvert v \rvert)$ 
for all $\varphi \in \Bset$ and $v \in \varSigma ^*$. 
There is $L \in \classPSPACEtwo$ such that 
$L (\varphi) (v, 0 ^i)$ equals the $i$th symbol of $F (\varphi) (v)$
for any $\varphi \in \Bset$, $v \in \varSigma ^*$ and 
$i \in \{0, 1, \ldots, Q (\lvert \varphi \rvert) (\lvert v \rvert) - 1\}$. 
Apply Lemma~\ref{lemma: main lemma of paris} to this $L$ 
to obtain the $P$ and $G$, 
and let $g ^{\varphi, \lambda} _u$ and $h ^{\varphi, \lambda} _u$ be 
as in the Lemma. 

We define $s$ (of Definition~\ref{definition: many-one reduction}) 
by describing the real function 
$g = \deltalip (s (\varphi)) \in \classCL [[0, 1] \times [-1, 1]]$
for a given $\varphi$. 
It has Lipschitz constant~$1$. 
It will be straightforward to check that
a $\deltabox$-name 
(and hence a $\deltalip$-name) of $g$ can be $\classFPtwo$-computed from $\varphi$. 
We write $g _u$ and $h _u$ for the 
$g ^{\varphi, \lambda} _u$ and $h ^{\varphi, \lambda} _u$ corresponding to 
this given $\varphi$ and $\lambda (k) = 3 k + 2$. 

For each binary string~$v$, let 
\begin{align}
  c _v 
&
 =
   1
  -
   \frac{1}{2 ^{\lvert v \rvert}} 
  + 
    \frac{2 \overline v + 1}{2 ^{2 \lvert v \rvert + 2}}, 
&
  l ^\mp _v
&
 = 
  c _v \mp \frac{1}{2 ^{2 \lvert v \rvert + 2}}, 
\end{align}
where $\overline v \in \{0, \dots, 2 ^{\lvert v \rvert} - 1\}$ means 
$v$ interpreted as an integer in binary notation. 
This divides $[0, 1)$ into intervals $[l ^- _v, l ^+ _v]$
indexed by $v \in \{0, 1\} ^*$. 
We further divide the left half $[l ^- _v, c _v]$ into 
$Q (\lvert \varphi \rvert) (\lvert v \rvert) + 1$ subintervals 
$[l _{v, 0}, l _{v, 1}]$, $[l _{v, 1}, l _{v, 2}]$, \ldots, 
$[l _{v, Q (\lvert \varphi \rvert) (\lvert v \rvert) - 1}, l _{Q (\lvert \varphi \rvert) (\lvert v \rvert)}]$, 
$[l _{v, Q (\lvert \varphi \rvert) (\lvert v \rvert)}, c _v]$, where 
\begin{align}
 l _{v, i} & = c _v - \frac{1}{2 ^{2 \lvert v \rvert + 2} 2 ^i}, 
&
 i = 0, 1, \ldots, Q (\lvert \varphi \rvert) (\lvert v \rvert). 
\end{align}
On each strip $[l _{v, i}, l _{v, i + 1}] \times [-1, 1]$, 
we define $g$ by putting the copies of $g _{(v, 0 ^i)}$
as in Figure~\ref{figure: ivp widget}. 
\begin{figure}
\begin{center}
\includegraphics[scale=1.0]{./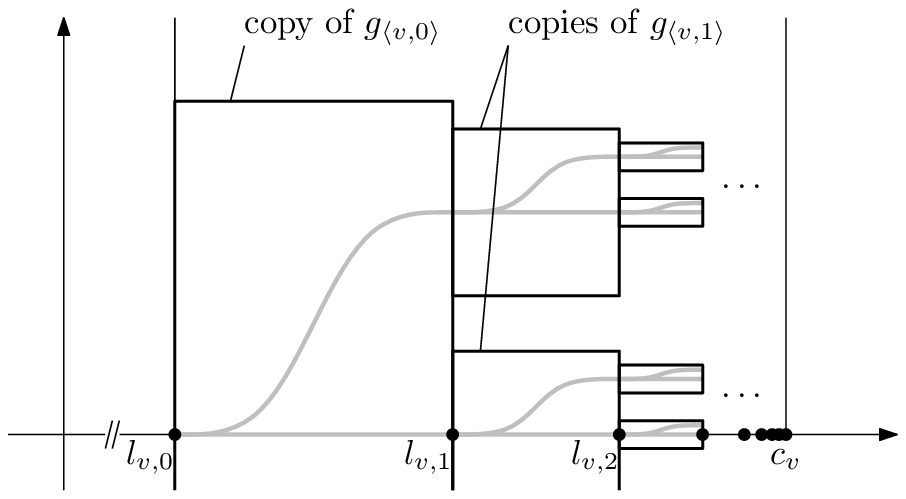}
\caption{Widget for reducing $\classFPSPACEtwo$ to $\OpLipIVP$.}
\label{figure: ivp widget}
\end{center}
\end{figure}
Precisely, 
\begin{equation}
 \label{equation: vector}
  g \biggl( 
   l _{v, i} + \frac{t}{2 ^{2 \lvert v \rvert + 2} 2 ^{i + 1}},
   \frac{2 m + (-1 ) ^m y}{2 ^{\gamma (\lvert v \rvert, i)}} 
  \biggr)
 = 
  \frac{2 ^{2 \lvert v \rvert + 2} 2 ^{i + 1}}{2 ^{\gamma (\lvert v \rvert, i)}} g _{(v, 0 ^i)} (t, y)
\end{equation}
for each $t \in [0, 1]$ and $m \in \Nset$, $y \in [-1, 1]$, 
where the polynomial~$\gamma$ is defined by $
 \gamma (\lvert v \rvert, 0) 
= 
 2 \lvert v \rvert + 3 
$ and $
 \gamma (\lvert v \rvert, i + 1) 
= 
  \gamma (\lvert v \rvert, i) 
 + 
  P (\lvert \varphi \rvert, \lambda) (\lvert (v, 0 ^i) \rvert) + 2 
$. 
On the last strip 
$[l _{v, Q (\lvert \varphi \rvert) (\lvert v \rvert)}, c _v] \times [-1, 1]$, 
we define $g$ to be constantly $0$. 
On the right half $[c _v, l ^+ _v]$, 
we define $g$ symmetrically: $g (l ^+ - t, y) = -g (l ^- + t, y)$
for $0 \leq t \leq 1 / 2 ^{2 \lvert v \rvert + 2}$. 
Because of this symmetry, 
the function $h := \OpLipIVP (g)$ takes value $0$
at each $l ^{\mp} _v$, 
and it can be verified, 
using (\ref{enumi: output}) of 
Lemma~\ref{lemma: main lemma of paris}, 
that 
for $i = 0$, \ldots, $Q (\lvert \varphi \rvert) (\lvert v \rvert)$ inductively, 
\begin{equation*}
 h (l _{v, i})
=
 \sum _{j = 0} ^{i - 1}
  \frac{h _{(v, 0 ^j)} (1)}{2 ^{\gamma (\lvert v \rvert, j)}}
=
 \sum _{j = 0} ^{i - 1}
  \frac{L (\varphi) (v, 0 ^j)}{2 ^{\gamma (\lvert v \rvert, j)} 2 ^{P (\lvert \varphi \rvert, \lambda) (\lvert (v, 0 ^i) \rvert)}} 
=
 \sum _{j = 0} ^{i - 1}
  \frac{4 L (\varphi) (v, 0 ^j)}{2 ^{\gamma (\lvert v \rvert, j + 1)}}. 
\end{equation*} 
In particular, the number $
 h (c _v) 
= 
 h (l _{v, Q (\lvert \varphi \rvert) (\lvert v \rvert)}) 
= 
 \sum _{j = 0} ^{Q (\lvert \varphi \rvert) (\lvert v \rvert) - 1}
  4 L (\varphi) (v, 0 ^j) / 2 ^{\gamma (\lvert v \rvert, j + 1)}
$ contains 
the values $L (\varphi) (v, 0 ^j)$ 
for all $j < Q (\lvert \varphi \rvert) (\lvert v \rvert)$, 
from which we can recover $F (\varphi) (v)$. 
The reducing functions $r$ and $t$ 
(of Definition~\ref{definition: many-one reduction}) 
perform this lookup. 
That is, $
t (\varphi) (v) = (0 ^{\gamma (\lvert v \rvert, Q (\lvert \varphi \rvert) (\lvert v \rvert))}, w)
$ with $\lsem w \rsem = c _v$, 
and $r (\varphi)$ is the function that, 
given the encoding of (an approximation of) $h (c _v)$, 
extracts the value $F (\varphi) (v)$. 
\end{proof}

In \cite[Theorem~3.2]{kawamura10:_lipsc_contin_ordin_differ_equat}, 
the non-uniform version of Lemma~\ref{lemma: main lemma of paris} 
was used to construct 
a function that proved 
Theorem~\ref{theorem: ivp lower non-uniform}. 
We needed a different construction, 
because for our Theorem~\ref{theorem: ivp uniform} 
(with the reduction $\redtwom$), 
we needed to get the values $L (\varphi) (v, 0 ^j)$ 
for all $j < Q (\lvert \varphi \rvert) (\lvert v \rvert)$ 
in one query. 
For Corollary~\ref{corollary: ivp uniform, strong reduction}
(with the reduction $\redtwoT$), 
we are allowed to make many queries, so 
the straightforward uniformization 
(without stacking the copies of $g _{(v, 0 ^i)}$ vertically)
would have worked. 

\section{Summary and future work}

\begin{itemize}
\item
To discuss computational complexity in the framework of TTE, 
we replace $\varSigma ^\Nset$, the infinite strings, 
by $\Bset$, a class of functions from strings to strings. 
This is a generalization in two ways: 
these functions
\ref{enumi: predicate} can have values of arbitrary length, and
\ref{enumi: unary} take string arguments, rather than just unary strings. 
\item 
For time and space bounds we use second-order polynomials in the input size, 
which are defined in the way suggested by type-two complexity theory. 
We defined classes $\classPtwo$, $\classNPtwo$ 
and $\classFPtwo$, $\classFPSPACEtwo$. 
With a suitable notion of polynomial-time reductions, 
we can also define $\classNPtwo$- 
and $\classFPSPACEtwo$-completeness. 
Formulating other classes is left for future work. 
\item 
To apply this to problems involving real numbers, 
we introduced representations 
$\rhoreal$, $\psiclosed$ and $\deltabox$
of real numbers, sets and functions. 
Both aspects \ref{enumi: predicate} and \ref{enumi: unary} of our generalization 
were useful. 
With respect to these representations, 
we showed that 
taking the convex hull of a set is $\classNPtwo$-complete, and that 
solving a Lipschitz continuous ordinary differential equation 
is $\classFPSPACEtwo$-complete. 
These are uniform versions of what have been known non-uniformly, 
and tell us more about the hardness of numerical problems in practice. 
An interesting direction for further investigation is to ask
which other known non-uniform results about operators 
do (or do not) uniformize. 
One can also look at known computability results and 
ask whether analogous statements hold true 
for time- or space-bounded classes. 
\end{itemize}

\small
\section*{Acknowledgements}
We thank 
Vasco Brattka, 
Kaveh Ghasemloo, 
Ken Jackson, 
Toni Pitassi, 
Bill Weiss
and anonymous referees 
for comments on this and related manuscripts
which helped improve the presentation. 
We also thank 
Keiko Imai and Yu Muramatsu for providing 
the image of the trisector curves (Figure~\ref{figure: trisector}). 
During this research, 
the first author was supported by
the Nakajima Foundation and by the
Grant-in-Aid for Scientific Research (Kakenhi) 23700009; 
both authors were supported by 
the Natural Sciences and Engineering Research Council of Canada.

\end{document}